\newcommand{\cA}{{\cal A}}
\newcommand{\cC}{{\cal C}}
\newcommand{\cE}{{\cal E}}
\newcommand{\cI}{{\cal I}}
\newcommand{\cM}{{\cal M}}
\newcommand{\cN}{{\cal N}}
\newcommand{\cR}{{\cal R}}
\newcommand{\bS}{\mathbf{S}}
\newcommand{\bV}{\mathbf{V}}
\newcommand{\bX}{\mathbf{X}}
\newcommand{\bY}{\mathbf{Y}}
\newcommand{\bx}{\mathbf{x}}
\newcommand{\by}{\mathbf{y}}
\newcommand{\bs}{\mathbf{s}}
\newcommand{\bu}{\mathbf{u}}
\renewcommand{\le}{\leqslant}
\renewcommand{\ge}{\geqslant}
\newcommand{\dfn}{\stackrel{\triangle}{=}}
\newtheorem{thm}{Theorem} 
\newtheorem{cor}{Corollary}
\newtheorem{lem}{Lemma}
\newtheorem{prop}{Proposition}
\theoremstyle{definition}
\theoremstyle{definition}
\newtheorem{defn}{Definition}
\newcommand{\be}[1]{\begin{equation}\label{#1}}
\newcommand{\ee}{\end{equation}}
\newcommand{\eq}[1]{(\ref{#1})}
\newcommand{\Tref}[1]{Theo\-rem\,\ref{#1}}
\newcommand{\Pref}[1]{Pro\-po\-si\-tion\,\ref{#1}}
\newcommand{\Lref}[1]{Lem\-ma\,\ref{#1}}
\newcommand{\Cref}[1]{Co\-ro\-lla\-ry\,\ref{#1}}
\newcommand{\Fref}[1]{Figure~\ref{#1}}
\newcommand{\ep}{\epsilon}
\newcommand{\norm}[1]{\|#1\|}
\newcommand{\E}{\mathbb{E}}
\newcommand{\R}{\mathbb{R}}
\newcommand{\RDF}{R_{\text{DF}}}
\newcommand{\RAF}{R_{\text{AF}}}
\newcommand{\RBAF}{R_{\text{BAF}}}
\newcommand{\RBinary}{R_{\text{BSPDF}}}
\newcommand{\RTernary}{R_{\text{TSPDF}}}
\newcommand{\h}{{\rm h}}
\newcommand{\I}{{\rm I}}
\newcommand{\Cupper}{C_{\text{up}}}
\newcommand{\plotsize}{0.72}
\begin{document}


\title
{
Using Superposition Codebooks and Partial Decode and Forward in Low SNR Parallel Relay Networks 
}

\author{Farzad Parvaresh and Ra\'ul Etkin

\thanks{F. Parvaresh and R. Etkin are with Hewlett-Packard Laboratories, Palo Alto, CA 94304, USA. (emails: \{parvaresh, raul.etkin\}@hp.com)}
}

\maketitle


\begin{abstract}
A new communication scheme for Gaussian parallel relay networks based on superposition coding and partial decoding at the relays is presented. Some specific examples  are proposed in which two codebook layers are superimposed. The first level codebook is constructed with symbols from a binary or ternary alphabet while the second level codebook is composed of codewords chosen with Gaussian symbols. The new communication scheme is a generalization of decode-and-forward, amplify-and-forward, and bursty-amplify-and-forward. The  asymptotic low SNR regime is studied using achievable rates and minimum energy-per-bit as performance metrics. It is shown that the new scheme outperforms all previously known schemes for some channels and parameter ranges.
\end{abstract}


\section{Introduction}
Cooperation in wireless networks is often modeled through relay networks. In these models
one or more source nodes communicate with one or more destination nodes with the help of 
intermediate relay nodes. Relay network models are of relevance in wireless sensor networks where
the sensor nodes have limited transmission power capabilities and communication range. 
In this work we study a special type of relay network called the {\em diamond network} for the network with two relays and {\em parallel relay network} for the general case with more than two relays, introduced by Schein and Gallager \cite{SG00,S01}. This is a single-source single-destination layered network model with three layers, namely the source layer, relay layer, and destination layer, with the property that the nodes in each layer can only communicate with nodes in the next layer. The communication from the source to the relays takes place over a broadcast channel (BC) while the communication from the relays to the destination takes place over a multiple access channel (MAC). Note that in this simple model there is no direct communication path from the source to the destination. We focus on the Gaussian case, in which the transmitted signals are real and subject to an average power constraint, and the received signals are affected by channel attenuation and additive white Gaussian noise.

Despite the simplicity of the model, its capacity, that is, the maximum reliable communication rate from source to destination, is in general unknown. The best known capacity upper bound is based on the cut-set bound. Many communication strategies have been proposed, such as decode-and-forward (DF), compress-and-forward, amplify-and-forward (AF), bursty-amplify-and-forward (BAF), rematch-and-forward, and combinations thereof, leading to various capacity lower bounds. Depending on the channel gains and signal to noise ratios (SNR), some communication schemes perform better than others. Many wireless sensor networks operate in the low SNR regime where power is a scarce resource and the main performance limiter. As a result, it is of interest to design and evaluate communication schemes that exhibit good performance at low SNR. We study the performance of various communication schemes in the asymptotic regime of SNR going to zero using as performance metrics channel capacity and minimum energy-per-bit (i.e. minimum required energy to communicate 1 bit of information).

The first two communication schemes considered for parallel relay networks are DF and AF. In DF all the relays decode the source message and retransmit it to the destination achieving a beamforming gain. DF exhibits good performance when the links from the source to the relays are sufficiently stronger than the links from the relays to the destination. However, when the links from the source to the relays are relatively weak compared to the links from the relays to the destination, the requirement of decoding the source message at all relays is too restrictive. 
In AF the relays amplify the received signals with certain gains making no attempt to decode the source message. While amplifying the received signals the relays also amplify the received noise. While in some regimes AF can achieve the channel capacity, at low SNR the relays use most of their available power amplifying the received noise, and the resulting performance of AF is poor. 

An approach to improve the performance of AF in the low SNR regime is to remain silent a fraction of the symbol times and communicate in bursts to increase the effective SNR of the signal transmitted during the bursts. This communication scheme, denoted BAF and proposed in \cite{S01}, achieves better rates than DF when the links from the relays to the destination are sufficiently stronger than the links from the source to the relays. In BAF, the source, relays, and destination know in advance the burst pattern, which is usually a contiguous block of symbols. As a result, the burst pattern does not carry any information.

In this work we propose a new communication scheme tailored to the low SNR regime which we denote superposition-partially-decode-and-forward (SPDF). In the binary form of our proposed SPDF scheme, we generate a binary codebook (i.e. with symbols chosen from $\{0,1\}$) of sparse binary patterns, and encode part of the source message through the selection of a codeword from this codebook. The remaining part of the source message is encoded through an independent codeword, transmitted in the symbol positions where the binary codeword has ones. The relays then decode the binary codeword and amplify the received signal only in the time indices where the binary codeword has ones. Finally the destination jointly decodes both codewords. We see that SPDF leverages the benefits of BAF of increasing the symbol SNR to avoid excessive noise amplification, while achieving some additional rate by conveying information through the binary codeword. As will be shown in Sections~\ref{sec:achievable} and \ref{sec:analysis} the binary SPDF scheme can be generalized in various ways.

We characterize the rates that can be achieved with the SPDF scheme, and analyze its performance in the asymptotic regime of SNR$\to 0$ for parallel relay networks with some symmetry properties. We also study the  energy-per-bit achievable with SPDF and obtain a minimum energy-per-bit characterization within a constant multiplicative factor for symmetric Gaussian diamond networks. Our results show that the new communication scheme matches or exceeds the performance of all previously known communication schemes in the low SNR regime. 

The remainder of the paper is organized as follows. Subsections \ref{ssec:related} and \ref{ssec:notation} below, present related literature and the notation used in the rest the paper. In Section~\ref{sec:model} we describe the communication model that we will use throughout the paper. Section~\ref{sec:achievable} presents the achievable rates with known communication schemes and with the new proposed SPDF communication scheme. In Section~\ref{sec:upper_bound} cut-set upper bounds for various networks are presented, which will be used in later sections to evaluate the performance of the different achievable rates. In Section~\ref{sec:analysis} we analyze the performance of the existing and new communication schemes in the asymptotic low SNR regime through a capacity formulation. Section~\ref{sec:ebno} shows how the asymptotic low SNR capacity characterizations can be used to obtain minimum energy-per-bit performance bounds. Finally in Section~\ref{sec:conclusion} we give some concluding remarks. The proofs of the results are given in the appendices.

\subsection{Related work}
\label{ssec:related}
To put our work into context, we briefly discuss related results which provide various characterizations of the fundamental communication limits in relay networks.
While in general the upper and lower capacity bounds for relay networks do not match, the additive gap between them has been shown to be upper bounded by a constant that is independent of the channel gains and power constraints and that only depends on the number of nodes in the network \cite{ADT11,OD10}\footnote{This constant additive gap capacity characterization applies to general single source relay networks.}. For the symmetric $N$-relay Gaussian parallel network, the constant additive gap characterization has been improved in \cite{ND10} to 1.8 bits, independent of the number of relays $N$. Capacity characterizations within a constant additive gap are of practical relevance in moderate to high signal to noise ratio (SNR) regimes and in networks of small size, in which the constant additive gap is small compared to the achievable communication rates. 

Many wireless sensor networks operate in the low SNR regime, for which a capacity approximation within a constant multiplicative factor, instead of within a constant additive gap, is more appropriate. For general relay networks with single-source multicast Avestimehr et al. \cite{ADT11} derived a capacity approximation within a factor of $2 d(d+1)$ where $d$ is the maximum node degree in the network. In the specific case of a parallel relay network the maximum degree is $d=N$ so the above result becomes $2 N(N+1)$.  Recently, this capacity characterization within a multiplicative factor has been improved in \cite{ND10} for the special case of the symmetric $N$-relay Gaussian parallel network. In this special case, the capacity has been characterized  to within a multiplicative factor of 14, which is independent of the number of relay nodes or the maximum degree in the network. We note that this multiplicative factor can be readily improved for networks of small size where the exact cut-set bound can be computed.

The low SNR regime can also be studied via a minimum energy-per-bit formulation. The minimum energy-per-bit in the relay channel (i.e. single relay network with a direct signal path from source to destination) has been studied in \cite{EMZ06}. This work establishes a relationship between the minimum energy-per-bit and capacity, and characterizes the minimum energy-per-bit within a constant factor. 

The BAF communication scheme is not known to achieve capacity in relay channels with constant gains. However, it has been shown in \cite{AT07} that in the slow-fading Gaussian relay channel at low SNR, BAF achieves the outage capacity for low outage probability.

A communication scheme denoted rematch-and-forward (RM) has been proposed in \cite{KKEZ08} for cases where there is a bandwidth mismatch between the source-relay and relay-destination links. In addition, RM has been shown to offer performance benefits even in cases where there is no bandwidth mismatch as long as the SNR is sufficiently high. Furthermore, RM has been studied in the context of half-duplex diamond relay networks that allow communication between relays \cite{RGK09}.

\subsection{Notation}
\label{ssec:notation}
Regarding notation, we use lowercase letters to denote scalars, uppercase letters to denote random variables, boldface letters to denote vectors, and calligraphic uppercase letters to denote sets. For example, $a$ is a constant scalar, $\bV$ is a random vector, and $\mathcal{M}$ is a set. $\cA^c$ and $|\cA|$ denote the complement and the cardinality of the set $\cA$. An $n$-vector $(x_1,\ldots,x_n)$ is written as $\bx^n$, and its $t^{th}$ element is expressed as $x[t]$. Differential entropy and mutual information are denoted by $\h$ and $\I$ and are expressed in bits. $\ln$ is used for natural logarithm while $\log_2$ is used for base-2 logarithm. Probability and expectation are denoted by $\Pr$ and $\E$. A Gaussian distribution with mean $\mu$ and variance $\sigma^2$ is denoted by $\cN(\mu,\sigma^2)$. We use the notation $a_n \doteq 2^{n(b \pm \epsilon)}$ to express that $\left| \frac{1}{n} \log a_n - b \right| < \epsilon$.
$1\{\cdot\}$ is the indicator function.
All the rates are presented in bits. 


\section{Communication model}
\label{sec:model}

\begin{figure}[tb]

\centering
\includegraphics[height=0.30\textwidth]{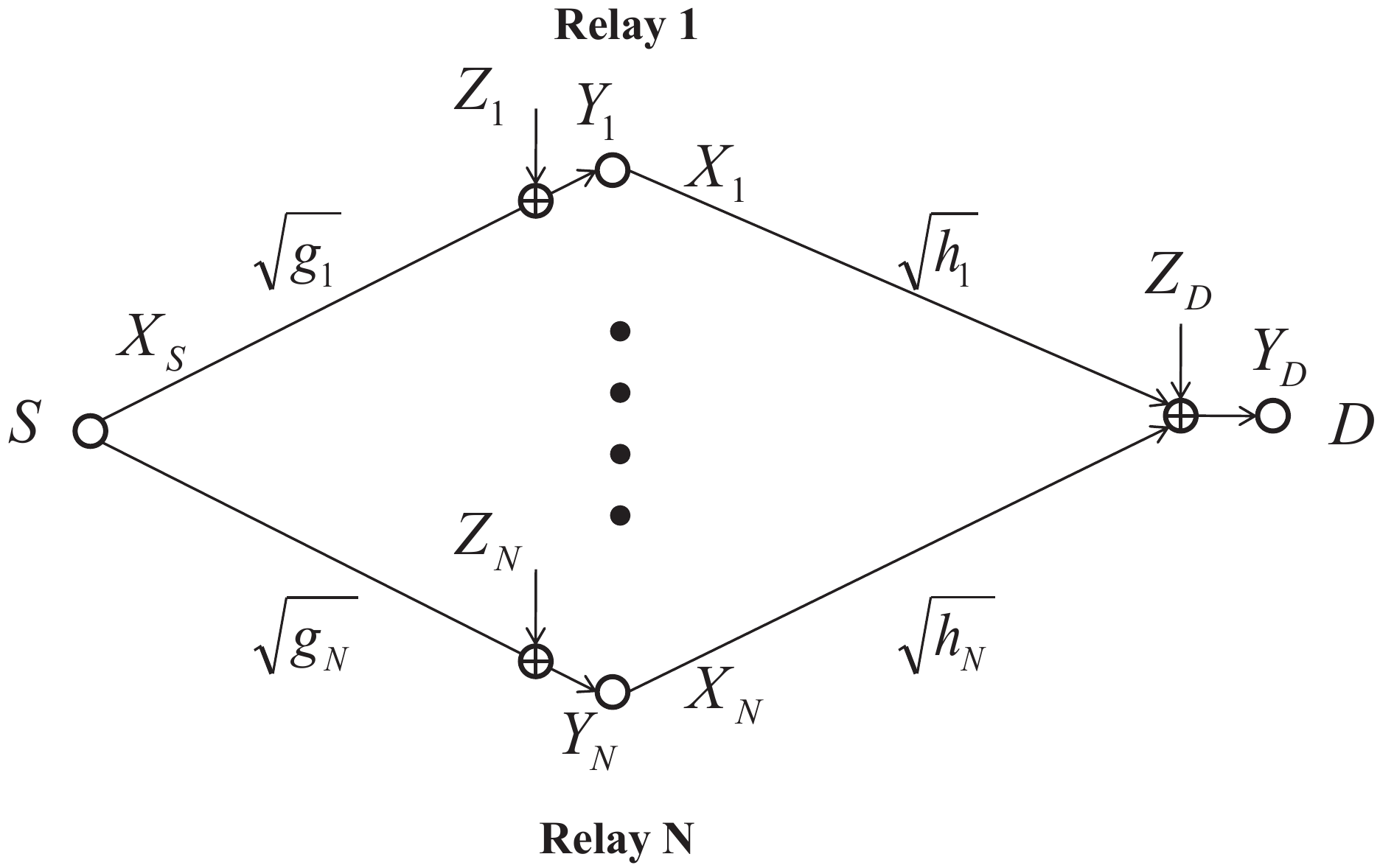}

\caption{Communication model for the Gaussian $N$-relay parallel network.}
\label{fig:channelN}

\end{figure}

We consider a parallel relay network consisting of a source node $S$, $N$ relay nodes Relay~$i$, $i=1,\ldots, N$,  and a destination node $D$ (cf. \Fref{fig:channelN}). We focus on the memoryless discrete-time Gaussian version of the model where the source and relays transmit real signals subject to given power constraints, and the received signals are attenuated by real channel gains and corrupted by independent additive white Gaussian noise.

We denote by $X_S[t]$ and $X_i[t]$ the signals transmitted by the source and relays $i=1,\ldots,N$ at discrete time $t$ respectively. 
We denote by $\sqrt{g_i} \in \R_+$ the channel gain from the source to relay $i$ and by $\sqrt{h_i}\in \R_+$ the channel gain from relay $i$ to the destination, noting that there is no loss of generality in considering non-negative channel gains. 
In addition, without loss of generality (WLOG), we assume $g_1 \le g_2\le \ldots \le g_N$. 
The received signal
at relay $i$, $Y_i$, and the received signal at the destination, $Y_D$,
are given by
\begin{align}
\label{eq:Y_i}
Y_i[t] & = \sqrt{g_i} X_S[t] + Z_i[t] \ \text{for} \ i=1,\ldots,N, \\ 
\label{eq:Y_D}
Y_D[t] & = \sum_{i=1}^N \sqrt{h_i} X_i[t] + Z_D[t],
\end{align}
where $Z_1[t], \ldots, Z_N[t]$ and $Z_D[t]$ are i.i.d. 
random variables with distribution $\cN(0,N_0)$.

The source wants to communicate to the destination a random message $W$ that is uniformly distributed in the set $\{1, 2, \ldots, M\}$. A $(2^{n R}, n)$ code for the parallel relay network consists of a set of integers
$\cM = \{1,2,\ldots, 2^{nR} \}$ with $\lfloor 2^{nR}\rfloor = M$,
called the message set and the following encoding and decoding functions.
The source uses an encoding function $\text{Enc}_S: \{1, 2, \ldots, M\} \to \R^n$ to map any message $w\in\cM$ into a vector $(x_{S,w}[1],\ldots,x_{S,w}[n])= \text{Enc}_S(w)$. At time $t$ relay $i$ maps the input signals $\{y_i[1],\ldots, y_i[t-1]\}$ into an output signal $x_i[t]$ using an encoding function $\text{Enc}_{i}^{t-1} : \R^{t-1} \to \R$ for $i=1,\ldots,N$ and $t=1,\ldots, n$. The decoder outputs an estimate of the transmitted message $\hat w = \text{Dec}(y_D[1],\ldots,y_D[n])$ using the decoding function $\text{Dec} : \R^n \to \{1, 2, \ldots, M\}$.

We assume the average power of the transmitted signals to be upper bounded by $P_S$ at source node and $P_i$ at relay nodes $i=1,2,\ldots,N$, i.e. for a block of $n$ symbols
\begin{align}
\frac{1}{n} \sum_{t=1}^n x_{S,w}[t]^2  & < P_S,  \ \text{ for all } w \in \cM, \label{eq:pc1}\\
\frac{1}{n} \sum_{t=1}^n x_i[t]^2 & < P_i \text{ for all } \by_i^n\in \R^n, \text{ for } i=1,\ldots,N, 
\label{eq:pc2}
\end{align}
noting that in \eq{eq:pc2} $x_i[t]=\text{Enc}_i^{t-1}(\by_i^{t-1})$. 

The average probability of error is defined as the probability that
the decoder's estimate $\hat{W}$ is not equal to the transmitted message $W$, and is given by
$$
P_e^{(n)} = \frac{1}{2^{nR}} \sum_{w \in \cM}
\Pr \Big\{ \text{Dec}( \{ \bY_D[i] \}_{i=1}^n ) \neq w \ | \ w \text{ sent} \Big\},
$$
where $W$ is assumed to be uniformly distributed over the elements of $\cM$.

The rate $R$ is said to be achievable for the parallel
relay network if there exists a sequence of $(2^{nR},n)$ codes with the transmitted signals satisfying the power constraints (\ref{eq:pc1}) and (\ref{eq:pc2}) such that $P_{e}^{(n)}$ tends to zero as $n$ tends to infinity. 

For a code with block length $n$ and rate $R_n \ge 1/n$, where the rate can vary with $n$, 
the energy of codeword $w \in \cM$ is given by
$$
\cE^{(n)}_S(w) = \sum_{t=1}^n x_{S,w}[t]^2
$$
and the maximum transmission energy of relay $i$ is
$$
\cE^{(n)}_i = \sup_{\by_i^n\in \R^n} \left( \sum_{t=1}^n x_i[t]^2  \right).
$$
The energy-per-bit for the code is given by
$$
\cE^{(n)} = \frac{1}{n R_n}\left( \max_{w \in \cM} \cE^{(n)}_S(w) + 
\sum_{i=1}^N \cE_i^{(n)} \right).
$$
An energy-per-bit $\cE$ is said to be achievable if there exist a sequence of $(2^{nR_n},n)$ codes with $P_e^{(n)} \to 0$ and $\lim \sup \cE^{(n)} \le \cE$.
The minimum energy-per-bit $\cE_b$ is the infimum of the set of achievable energy-per-bit values.

In Sections \ref{sec:achievable} through \ref{sec:analysis} we assume that the power constraints on the source and relays are equal to one, i.e. $P_S = P_i = 1$, for $i=1,\ldots,N$ and the noise variances at the relays and destination are equal to one, i.e. $N_0=1$. This normalization does not reduce the generality of the results since different power constraints and noise variances can be absorbed in the channel gains.


\section{Achievable rates}
\label{sec:achievable}
\subsection{Existing communication schemes}

In this subsection we obtain the rates achievable with DF, AF, and BAF.
In DF each relay must decode the message transmitted by the source node. As a result, the achievable rate cannot exceed the capacity of the point-to-point channel from the source to each relay $i$. Therefore,
\begin{equation}
\label{eq:rdf1}
\RDF \le \frac{1}{2} \log_2(1+g_i), \quad i=1,\ldots,N.
\end{equation}
In addition, since all the relays decode the transmitted message, they can beamform their transmissions to the destination. The achievable rate cannot exceed the capacity of the multiple-input single-output (MISO) point-to-point channel from the relays to the destination, obtaining
\begin{equation}
\label{eq:rdf2}
\RDF \le \frac{1}{2} \log_2 \left[1+\left(\sum_{i=1}^N \sqrt{h_i} \right)^2\right].
\end{equation}
Combining (\ref{eq:rdf1}) and (\ref{eq:rdf2}) we obtain the achievable rate with the DF communication scheme
\begin{equation}
\label{eq:rdf3}
\RDF = \frac{1}{2} \log_2\left[1+\min\left\{\min_{1\le i\le N} g_i, \left(\sum_{i=1}^N \sqrt{h_i} \right)^2 \right\}\right].
\end{equation}

In AF, each relay amplifies its received signal by an amplification factor $\sqrt{\kappa_i}$. In order not to exceed the transmit power constraint, the amplification factor must satisfy $\kappa_i < 1/(1+g_i)$. The optimal values of $\{\kappa_i\}_{i=1}^N$ are obtained through optimization. AF also achieves a beamforming gain from the relays to the destination, but in contrast to DF, the received noise at the relays is amplified and added to the overall received noise at the destination. Taking into account the amplification factors $\sqrt{\kappa_i}$ and the additional noise appearing at the destination, we obtain the following achievable rate for the AF scheme

\begin{equation}
\label{eq:raf}
\RAF = 
\sup_{\substack{0 \le \kappa_i < 1/(1+g_i): \\ {i=1,\ldots,N}}} \frac{1}{2} \log_2 \left[1+\frac{\left(\sum_{i=1}^N \sqrt{\kappa_i g_i h_i}\right)^2}{\sum_{i=1}^N \kappa_i h_i +1} \right].
\end{equation}

Finally, in BAF the source and relays transmit only during a fraction $\delta$ of the time, remaining silent during the remaining channel uses. This allows to increase the transmission power at the source and relays by a factor $1/\delta$ while still satisfying the average power constraint. The amplification factor at relay $i$, $\kappa_i$, must satisfy $\kappa_i < 1/(\delta+g_i)$, with the best performance obtained by optimizing the rate over the feasible values of $\{\kappa_i\}_{i=1}^N$. Since the channel is only used a fraction $\delta$ of time, which can also be optimized, the resulting achievable rate of the BAF scheme is
\begin{equation}
\label{eq:rbaf}
\RBAF = \sup_{ \substack{0\le \delta \le 1 \\ {0\le \kappa_i < 1/(\delta+g_i):i=1,\ldots,N}}} \frac{\delta}{2} \log_2 \left[1+\frac{\left(\sum_{i=1}^N \sqrt{\frac{\kappa_i g_i h_i}{\delta}}\right)^2}{\sum_{i=1}^N \kappa_i h_i +1} \right].
\end{equation}


\subsection{The Superposition-Partially-Decode-and-Forward communication scheme}

Sending correlated signals at the relays achieves a beamforming gain in the links to the destination. This correlation can be achieved if the relays decode the source message and re-encode it using the same codebook (decode-and-forward), or if they amplify the received signals (amplify-and-forward). As mentioned in the introduction, AF leads to poor performance at low SNR due to noise amplification. The use of the AF scheme in bursts (bursty-amplify-and-forward) allows to increase the effective symbol SNR while satisfying the power constraints at the source and relays.

The SPDF communication scheme that we propose is a generalization of DF, AF, and BAF. 
In the BAF scheme, the bursts occur in predetermined time intervals, and no information is carried in the timing of the bursts. A burst pattern can be interpreted as a binary $\{0,1\}$ sequence that has ones in the time positions where the source uses positive transmission power. In BAF this binary sequence is constant. Our SPDF scheme is inspired by the observation that it is possible to encode information by using different binary patterns as long as the relays and destination can identify these patterns. The relays need to determine the binary pattern in order to know when to amplify the received signal and when to remain silent.

As in BAF, we increase the effective SNR of AF by transmitting sparse or low duty cycle signals. One way of generating these signals is by generating a binary codebook $\cC_1$ with random i.i.d. codewords $\bu_{1,i_1}$ with $\text{Bernoulli}(\delta)$ symbols, and for each of these codewords $\bu_{1,i_1}$ generating a random codebook $C_{2,i_1}$ with codewords $\bu_{2,i_1,i_2}$ with symbols distributed as $\cN(0,1/\delta)$ for the time indices where $u_{1,i_1}[t]=1$ and with constant value 0 for the time indices where $u_{1,i_1}[t]=0$. The source transmits a message indexed by $(i_1, i_2)$ by transmitting $\bu_{2,i_1,i_2}$. Then the relays attempt to decode the binary codeword $\bu_{1,i_1}$ and amplify the received signal only in the time indices $t$ where $u_{1,i_1}[t]=1$, using appropriate gains to satisfy the power constraints. Finally the receiver attempts to decode both $\bu_{1,i_1}$ and $\bu_{2,i_1,i_2}$ generating an estimate $(\hat{i}_1, \hat{i}_2)$. We call this communication scheme Binary SPDF or BSPDF. 

The following theorem characterizes the rates that can be achieved with the BSPDF communication scheme. 
\begin{thm}
\label{thm:BSPDF}
Assume $B \sim \text{Bernoulli}(\delta)$ and $X_S| \{B = 0\} = 0$ (i.e. a mass of probability one at zero), 
$X_S | \{B = 1\} \sim \cN(0,\sigma^2)$. The received signal at relay $i=1,\ldots,N$,
is $Y_i = \sqrt{g_i} X_S + Z_i$, where $Z_i$ is i.i.d. $\cN(0,1)$. Relay $i$, $i=1,\ldots,N$,
amplifies the received signal $Y_i$ by a constant $\sqrt{\kappa_i}$ if $B$ is equal
to one, otherwise the relay sets its output equal to zero. This results in $Y_D=1\{B=1\}\sum_{i=1}^N \sqrt{\kappa_i h_i} Y_i+Z_D$. If $\delta \sigma^2 < 1$ and $\delta \kappa_i (g_i \sigma^2 + 1) < 1$, $i=1,\ldots,N$, then the rate $R=R_1+R_2$ is achievable if
\begin{align}
\label{eq:BCU1Vr}
R_1 & < \I(B; Y_1), \\
R_2 & < \I(X_S; Y_D|B), \\
\label{eq:BCU2VDU1}
R_1+R_2 & < \I(B, X_S; Y_{D}). 
\end{align}
\end{thm}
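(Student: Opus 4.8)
The plan is to prove this by a standard random-coding argument, since the statement is an achievability result for the one specific scheme described in the theorem: build the two-layer superposition codebook, have every relay decode the binary layer, observe that conditioned on correct decoding at all relays the source-to-destination link is an ordinary memoryless channel with ``super-letter'' $(B,X_S)$, and finish with joint-typicality decoding at the destination.

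Concretely, I would fix $\delta,\sigma^2,\kappa_1,\dots,\kappa_N$ satisfying the hypotheses and generate $2^{nR_1}$ codewords $\bu_{1,i_1}$ with i.i.d.\ $\text{Bernoulli}(\delta)$ entries, and for each $i_1$ a codebook of $2^{nR_2}$ codewords $\bu_{2,i_1,i_2}$ whose $t$-th entry is $0$ if $u_{1,i_1}[t]=0$ and i.i.d.\ $\cN(0,\sigma^2)$ if $u_{1,i_1}[t]=1$; to send $(i_1,i_2)$ the source transmits $X_S^n=\bu_{2,i_1,i_2}$, so $(B^n,X_S^n)$ is i.i.d.\ with the stated marginal and the source power constraint holds with high probability because $\delta\sigma^2<1$. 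The channel from the binary codeword to $\bY_i$ is memoryless, with $Y_i\mid\{B=0\}\sim\cN(0,1)$ and $Y_i\mid\{B=1\}\sim\cN(0,g_i\sigma^2+1)$, so joint-typicality decoding and the packing lemma give $\Pr[\hat\imath_1^{(i)}\ne i_1]\to0$ whenever $R_1<\I(B;Y_i)$. A short degradedness remark---for $g_i\le g_j$, scaling $Y_j$ by $\sqrt{g_i/g_j}$ and adding independent $\cN(0,1-g_i/g_j)$ noise reproduces $Y_i$ in distribution---shows $\I(B;Y_i)$ is nondecreasing in $g_i$, so with $g_1\le\cdots\le g_N$ the binding constraint is exactly $R_1<\I(B;Y_1)$; a union bound over the $N$ relays then makes the event $\cA$ that every relay decodes $i_1$ correctly have probability tending to $1$.

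Conditioning on $\cA$ and substituting $Y_i[t]=\sqrt{g_i}X_S[t]+Z_i[t]$ into the relay outputs gives
\[
Y_D[t]=\Big(\textstyle\sum_{i=1}^N\sqrt{\kappa_i g_i h_i}\Big)X_S[t]+1\{B[t]=1\}\textstyle\sum_{i=1}^N\sqrt{\kappa_i h_i}\,Z_i[t]+Z_D[t],
\]
a memoryless channel with input $(B[t],X_S[t])$ and conditional law $p_{Y_D\mid B,X_S}$, where the constraints $\delta\kappa_i(g_i\sigma^2+1)<1$ are precisely what make the relay outputs meet their power constraints. The destination declares $(\hat\imath_1,\hat\imath_2)$ such that $(\bu_{1,\hat\imath_1},\bu_{2,\hat\imath_1,\hat\imath_2},\bY_D)$ are jointly typical; the three error events---true triple atypical, wrong fine index with correct binary index, wrong binary index with arbitrary fine index---are controlled by the joint AEP and the packing lemma and have vanishing probability provided $R_2<\I(X_S;Y_D\mid B)$ and $R_1+R_2<\I(B,X_S;Y_D)$, which are exactly (\ref{eq:BCU1Vr})--(\ref{eq:BCU2VDU1}). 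Hence $P_e^{(n)}\le\Pr[\cA^c]+o(1)\to0$, and choosing the best codebook in the ensemble yields a deterministic $(2^{nR},n)$ code with $R=R_1+R_2$.

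The only genuinely non-routine point, which I would dispose of first, is that the relay power constraint (\ref{eq:pc2}) is worst-case over $\by_i^n$ whereas linear amplification respects it only on average: I would either amplify by $\sqrt{(1-\epsilon)\kappa_i}$ and clip $Y_i$ at a threshold $T_n\to\infty$ slowly enough that both the clipping distortion and the overshoot probability vanish, or simply zero the relay out once its running energy would exceed $nP_i$, an event of vanishing probability by the law of large numbers since $\delta\kappa_i(g_i\sigma^2+1)<1$ strictly; the same device fixes the source constraint. A secondary check is that conditioning on $\cA$ (probability $\to1$) does not perturb the joint type of $(\bu_{1,i_1},\bu_{2,i_1,i_2},\bY_D)$ enough to spoil joint typicality. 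Everything else is the textbook superposition-coding computation.
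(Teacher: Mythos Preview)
Your argument is correct, but it takes a different route from the paper's proof. The paper derives \Tref{thm:BSPDF} as a corollary of the general SPDF result \Tref{thm:SPDF}, which is stated only for \emph{continuous} auxiliary random variables $(U_1,\ldots,U_K)$. To apply it, the paper replaces the discrete Bernoulli $B$ by $U_1\sim\text{Uniform}[0,1]$ (with the event $\{B=1\}$ emulated by $\{U_1\in(1-\delta,1]\}$) and the point mass $X_S|\{B=0\}=0$ by $\cN(0,\epsilon)$, then invokes dominated convergence to show that the mutual informations in the rate constraints of \Tref{thm:SPDF} converge to $\I(B;Y_1)$, $\I(X_S;Y_D|B)$, $\I(B,X_S;Y_D)$ as $\epsilon\to0$. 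You instead carry out the random-coding argument directly with the discrete/mixed distribution, which amounts to specializing and inlining the proof of \Tref{thm:SPDF} while avoiding the continuous-approximation step. Your route is more self-contained and elementary for this particular statement; the paper's is modular, proving one general theorem and reading off several corollaries (BSPDF, TSPDF, BSPDF$(f)$) by specifying distributions and relay transfer functions.

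Two small points worth making explicit in your write-up. First, the relays must respect causality, so the scheme needs the usual block-Markov structure (decode $i_1$ in block $m$, amplify-and-forward in block $m{+}1$, rate loss $b/(b{+}1)\to1$); you implicitly assume this when you condition on $\cA$ and substitute the relay outputs into $Y_D$, but it should be stated. Second, your joint-typicality decoding at the relays and destination mixes discrete ($B$) and continuous ($Y_i$, $Y_D$) alphabets; this is standard but deserves a one-line remark, since the paper's typicality machinery (Propositions~\ref{prop:A1}--\ref{prop:A3}) is stated for densities.
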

\begin{proof}
The proof is given in Appendix~\ref{app:proofBSPDF}.
\end{proof}

We note that in \Tref{thm:BSPDF} $\kappa_1,\ldots, \kappa_N, \sigma^2$ and $\delta$ are parameters that can be optimized for given channel gains in order to maximize the achievable rate $R$.

As will be shown in Section~\ref{sec:analysis}, timesharing between two
BSPDF schemes with different parameters can improve the achievable rate. This means that it is possible to improve performance by using different average power levels in the non-zero parts of the transmitted codewords, and using different amplification factors at the relays for these different codeword parts. This suggests that the BSPDF scheme outlined above can be generalized by considering more general distributions to generate $\cC_1$. In addition, we can generalize the mapping that the relays use to map at time $t$ the decoded symbol and received signal $(u_{1,i_1}[t], y_i[t])$ into $x_i[t]$, the transmitted signal.

The following theorem presents a generalization of \Tref{thm:BSPDF} to ternary alphabets for the symbols of the codewords in $\cC_1$ and two different amplification factors for the relaying functions of each relay. We call the resulting communication scheme Ternary SPDF or TSPDF.

\begin{thm}
\label{thm:TSPDF}
Consider the discrete random variable $T$ over the alphabet $\{0,1,2\}$ such that $\Pr(T = 0) = 1-\delta_1-\delta_2$, $\Pr(T = 1) = \delta_1$ and $\Pr(T = 2) = \delta_2$ for $\delta_1, \delta_2 > 0$, $\delta_1 + \delta_2 \le 1$. Let the random variable $X_S$ be such that
$X_S | \{ T = 0 \} = 0$ (i.e. a mass of probability one at zero), $X_S | \{ T = 1\} \sim \cN(0, \sigma_1^2)$ and
$X_S | \{T = 2\} \sim \cN(0,\sigma_2^2)$.
The received signal at relay $i=1,\ldots,N$, is given by $\sqrt{g_i} X_S + Z_i$ where $Z_i\sim \cN(0,1)$ are
i.i.d.. Relay $i$ amplifies the received signal
by a constant $\sqrt{k_{i1}}$ if $T = 1$, by a constant $\sqrt{\kappa_{i2}}$
if $T = 2$, and sets its output to zero if $T = 0$, for $i = 1,2,\ldots, N$. This results in $Y_D=1\{T=1\}\sum_{i=1}^N \sqrt{\kappa_{i1} h_i} Y_i+1\{T=2\}\sum_{i=1}^N \sqrt{\kappa_{i2} h_i} Y_i+Z_D$.
Given $\delta_1 \sigma_1^2 + \delta_2  \sigma_2^2 <1 $ and 
$\delta_1 \kappa_{i1}(g_i \sigma_1^2 + 1) + \delta_2 \kappa_{i2}(g_i \sigma_2^2+1)<1$ for $i = 1,2,\ldots, N$, then the rate $R = R_1 + R_2$ is achievable if
\begin{align}
R_1 & < \I(T; Y_1), \\
R_2 & < \I(X_S; Y_D|T), \\
R_1+R_2 & < \I(T, X_S; Y_D).
\end{align}
\end{thm}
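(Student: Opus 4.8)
The plan is to prove \Tref{thm:TSPDF} by adapting the argument behind \Tref{thm:BSPDF} almost verbatim: TSPDF differs from BSPDF only in that the first‑layer codebook $\cC_1$ is built over the ternary alphabet $\{0,1,2\}$ and each relay applies one of two amplification gains (or stays silent), so the superposition structure, the decode‑then‑amplify relay strategy, and the joint‑typicality decoding at the destination all carry over unchanged. Fix $\epsilon>0$ and rates $R_1,R_2$ satisfying the three displayed inequalities strictly. Generate $2^{nR_1}$ first‑layer codewords $\mathbf t_{i_1}$ i.i.d.\ componentwise according to the law of $T$, and for each $\mathbf t_{i_1}$ generate $2^{nR_2}$ second‑layer codewords $\mathbf x_{S,i_1,i_2}$ whose $t$‑th symbol is $0$, $\cN(0,\sigma_1^2)$, or $\cN(0,\sigma_2^2)$ according to whether $t_{i_1}[t]$ equals $0$, $1$, or $2$, all drawn independently. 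To send $(i_1,i_2)$ the source transmits $\mathbf x_{S,i_1,i_2}$; by the law of large numbers its empirical power converges to $\delta_1\sigma_1^2+\delta_2\sigma_2^2<1$, so after discarding a vanishing fraction of atypical codewords the power constraint \eq{eq:pc1} holds for all large $n$.

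\emph{Relay decoding and amplification.} Conditioned on $\cC_1$, the index $i_1$ reaches relay $i$ over the memoryless channel whose output given first‑layer symbol $0$, $1$, $2$ is $\cN(0,1)$, $\cN(0,g_i\sigma_1^2+1)$, $\cN(0,g_i\sigma_2^2+1)$, respectively, so relay $i$ recovers $i_1$ with vanishing error whenever $R_1<\I(T;Y_i)$. Since $g_1\le g_i$, adding independent $\cN(0,1-g_1/g_i)$ noise to $\sqrt{g_1/g_i}\,Y_i$ yields a random variable whose conditional distribution given $X_S$ coincides with that of $Y_1$, so the channel $X_S\to Y_1$ is stochastically degraded with respect to $X_S\to Y_i$ and $\I(T;Y_1)\le\I(T;Y_i)$; thus the single hypothesis $R_1<\I(T;Y_1)$ guarantees correct first‑layer decoding at \emph{all} relays. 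Having recovered $i_1$, relay $i$ transmits $\sqrt{\kappa_{i1}}\,Y_i[t]$, $\sqrt{\kappa_{i2}}\,Y_i[t]$, or $0$ when $t_{i_1}[t]$ equals $1$, $2$, or $0$; on the correct‑decoding event its empirical power converges to $\delta_1\kappa_{i1}(g_i\sigma_1^2+1)+\delta_2\kappa_{i2}(g_i\sigma_2^2+1)<1$, which is exactly the stated hypothesis. Two technical points are dispatched exactly as in the BSPDF proof: the hard per‑realization constraint \eq{eq:pc2} is enforced by truncating the relay signal together with a small back‑off in the gains permitted by the strict inequalities; and the fact that a relay must observe an entire block before it knows the schedule is accommodated by running the scheme over $B$ successive blocks, relay $i$ forwarding the scheduled amplified version of its block‑$b$ observation during block $b+1$, which costs only a factor $B/(B+1)\to1$ in rate.

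\emph{Destination decoding.} On the event that every relay decoded $i_1$ correctly (probability $\to1$), substituting $Y_i[t]=\sqrt{g_i}X_S[t]+Z_i[t]$ shows that $\mathbf y_D^n$ is the output of a memoryless state‑dependent Gaussian channel driven by $\mathbf x_{S,i_1,i_2}$: when $t_{i_1}[t]=j\in\{1,2\}$ the effective gain is $\sum_{i=1}^N\sqrt{\kappa_{ij}g_ih_i}$ and the additive‑noise variance is $\sum_{i=1}^N\kappa_{ij}h_i+1$, while $Y_D[t]=Z_D[t]$ when $t_{i_1}[t]=0$; moreover the triple $(\mathbf t_{i_1},\mathbf x_{S,i_1,i_2},\mathbf y_D^n)$ then has exactly the $n$‑fold product of the single‑letter law in the theorem statement. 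The destination declares the unique $(\hat i_1,\hat i_2)$ whose $(\mathbf t_{\hat i_1},\mathbf x_{S,\hat i_1,\hat i_2},\mathbf y_D^n)$ is jointly $\epsilon$‑typical. The standard superposition‑coding union bound splits the error into the event $\{\hat i_1=i_1,\ \hat i_2\ne i_2\}$, whose probability vanishes when $R_2<\I(X_S;Y_D\mid T)$, and the event $\{\hat i_1\ne i_1\}$, whose probability vanishes when $R_1+R_2<\I(T,X_S;Y_D)$; combined with $R_1<\I(T;Y_1)$ from the relay step, all error events vanish and $R=R_1+R_2$ is achievable.

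\emph{Main obstacle.} There is no genuinely new idea relative to \Tref{thm:BSPDF}: the delicate points are precisely the ones already present there — reconciling block‑wise relay decoding with the causal per‑symbol relay model, and enforcing the hard per‑realization relay power constraint \eq{eq:pc2} on an amplify‑type output — and both are handled exactly as in the BSPDF proof. The ternary alphabet only makes the conditional channel to the destination carry two distinct active states rather than one, which affects the explicit evaluation of $\I(X_S;Y_D\mid T)$ and $\I(T,X_S;Y_D)$ but not the achievability argument itself.
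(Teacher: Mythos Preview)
Your proposal is correct but takes a different route from the paper. The paper does \emph{not} give a direct random-coding argument for TSPDF; instead it reduces \Tref{thm:TSPDF} to the general SPDF result, \Tref{thm:SPDF}, by emulating the discrete ternary variable $T$ with a continuous $U_1\sim\text{Uniform}[0,1]$ (partitioning $[0,1]$ into three subintervals of lengths $1-\delta_1-\delta_2,\delta_1,\delta_2$), replacing the point mass $X_S\mid\{T=0\}=0$ by $\cN(0,\epsilon)$, and then using dominated convergence to show that $\I(U_1;V_1)$, $\I(U_2;V_D\mid U_1)$, $\I(U_1,U_2;V_D)$ converge to $\I(T;Y_1)$, $\I(X_S;Y_D\mid T)$, $\I(T,X_S;Y_D)$ as $\epsilon\to 0$. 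You instead build the ternary superposition codebook directly and run the standard joint-typicality achievability argument, which is more elementary and self-contained for this particular statement; the paper's approach is more modular (prove \Tref{thm:SPDF} once, then specialize) at the cost of the continuous-approximation machinery. One minor remark: your repeated appeals to ``exactly as in the BSPDF proof'' are slightly misplaced, since the paper's proof of \Tref{thm:BSPDF} is itself a reduction to \Tref{thm:SPDF} via the same $\epsilon\to 0$ device rather than a direct argument; the block-Markov scheduling and the power-constraint handling you invoke are in the paper, but they live in the proof of \Tref{thm:SPDF} (Appendix~\ref{app:SPDF}), not in Appendix~\ref{app:proofBSPDF}.
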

\begin{proof}
The proof is given in Appendix~\ref{app:proofTSPDF}.
\end{proof}

We can generalize the BSPDF and TSPDF communication schemes by using more than two message levels and require different relays to decode different message-level sets. For example in cases where the channel gains $\sqrt{g_i}$ are different we may require some relays to decode all transmitted message levels, while requiring other relays to decode the message at level one, and finally let other relays amplify the received signals without decoding any messages. 

In general, we can consider $K$ message levels with $K$ up to $2^N$ (one message level for each subset of relays). However, the fact that the broadcast channel from the source to the relays is degraded implies that $N+1$ levels suffice. This is because if the message at level $k$ is to be decoded by relay $i$, all relays $j$ with $j > i$ are also able to decode the message due to the assumption $g_j \ge g_i$ for $j > i$. We use $f(i)$, $f:\{1,\ldots, N\} \to \{0,1,\ldots, K\}$, to denote the maximum message level that relay $i$ can decode or zero if relay $i$ does not decode any message level, noting that $f$ is non-decreasing. 
In this multi-level message setting, we require the destination be able to decode all message levels.

We can further generalize the BSPDF and TSPDF schemes by considering more general relaying functions. We denote by $w_i(u_1,\ldots,u_{f(i)},y_i)$ the transfer function of relay $i$ which at time $t$ produces the output $x_i[t]=w_i(u_1[t],\ldots,u_{f(i)}[t],y_i[t])$ based on the decoded codeword symbols $u_1,\ldots,u_{f(i)}$ and the received signal $y_i$ at time $t$. We can think of the symbols $u_1[t],\ldots,u_{f(i)}[t]$ as a directive to relay $i$ about how to process its input $y_i[t]$ in order to produce its output $x_i[t]$. The multiple codebook levels allow to give different directives to the various relays depending on their decoding capabilities. In addition, the codewords obtained from these symbols convey part of the source message to the destination.

\Tref{thm:SPDF} below establishes the rates that can be achieved with this more general communication scheme which we call SPDF. While these rates are characterized in terms of single letter bounds, the actual computation of the achievable rates requires appropriate choices for the joint distribution of the random variables used to generate the codebooks, the function $f(\cdot)$, and the mapping functions used at the relays, which are all design parameters.

\begin{thm}
\label{thm:SPDF}
Let $1\le K \le N+1$ be some integer constant and $f:\{1,\ldots, N\} \to \{0,1,\ldots, K\}$ be a non-decreasing function that denotes the maximum message level that relay $i$ can decode.
In addition, let $(U_1,\ldots,U_K)$ be distributed according to some joint distribution $P_{U_1,\ldots,U_K}$ satisfying $\E[U_K^2]< 1$, and define 
\begin{align}
V_i   =& \sqrt{g_i} U_K + Z_i, \text{ for } i=1,\ldots,N,\\
V_{D} =& \sum_{i=1}^N \sqrt{h_i} w_{i}(U_1,\ldots,U_{f(i)},V_i)+Z_{D},
\end{align}
where $Z_1,\ldots, Z_{N}$ and $Z_D$ are i.i.d. $\cN(0,1)$, and the relay transfer functions $w_i:\R^{f(i)+1} \to \R$ for $i=1,\ldots,N$ satisfy
\[
\E[(w_{i}(U_1,\ldots,U_{f(i)},V_i))^2] < 1, \text{ for } i=1,\ldots, N.
\]
Then, the rate $R = \sum_{m=1}^K R_m$ is achievable in the parallel relay network if
\begin{align}
\label{eq:thmSPDF}
\sum_{m=k}^{f(i)} R_m <& \I(U_{k},\ldots, U_{f(i)}; V_i | U_1,\ldots, U_{k-1}), \text{ where } i=\min_{1 \le j \le N:f(j)=k} j, \text{ for } k=1,\ldots, K, \nonumber\\
\sum_{m=k}^{K} R_m <& \I(U_{k},\ldots, U_{K}; V_D | U_1,\ldots, U_{k-1}), \text{ for } k=1,\ldots, K.
\end{align}
\end{thm}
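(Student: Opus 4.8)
The plan is to prove Theorem~\ref{thm:SPDF} by a random-coding argument that superposes $K$ codebook layers and uses successive decoding at the relays and a single joint (simultaneous non-unique) decoding step at the destination. First I would fix a joint distribution $P_{U_1,\ldots,U_K}$ and relay functions $w_i$ satisfying the stated second-moment constraints, pick rates $(R_1,\ldots,R_K)$ strictly inside the region \eqref{eq:thmSPDF}, and a small $\epsilon>0$. The codebook is generated hierarchically: draw $2^{nR_1}$ codewords $\bu_1$ i.i.d.\ from $P_{U_1}$; for each, draw $2^{nR_2}$ codewords $\bu_2$ i.i.d.\ from $P_{U_2|U_1}$; and so on, so that message level $m$ indexes a codeword $\bu_m$ drawn conditionally on $(\bu_1,\ldots,\bu_{m-1})$. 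A message $(w_1,\ldots,w_K)$ is sent by transmitting $X_S^n=\bu_{K}$ (the top layer); because $w_i$ depends only on $(U_1,\ldots,U_{f(i)},V_i)$ and on the channel noise, relay $i$ needs only the first $f(i)$ layers to compute its transmitted signal. The power constraints \eqref{eq:pc1}--\eqref{eq:pc2} are met for typical codewords since $\E[U_K^2]<1$ and $\E[w_i^2]<1$ (a standard truncation/typicality argument handles the strict inequality and the atypical set).

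Next I would handle the relay decoding. Relay~$i$ receives $V_i^n=\sqrt{g_i}\,U_K^n+Z_i^n$, which by construction has the same marginal law as the DMC $P_{V_i|U_K}$ applied per-letter. Processing the layers in order of decoding capability, let $k=\min\{j:f(j)=f(i)\}$ be the first relay whose maximal level equals $f(i)$; relay~$i$ (and hence, by degradedness $g_j\ge g_i$ for $j>i$, every relay with larger index) must recover $(\bu_k,\ldots,\bu_{f(i)})$ given that it has already recovered $(\bu_1,\ldots,\bu_{k-1})$ at earlier stages. A joint-typicality decoder for these layers succeeds with high probability provided, for each $k=1,\ldots,K$,
\[
\sum_{m=k}^{f(i)} R_m < \I(U_k,\ldots,U_{f(i)};V_i\mid U_1,\ldots,U_{k-1}),\qquad i=\min_{1\le j\le N:\,f(j)=k} j,
\]
which is exactly the first family of constraints in \eqref{eq:thmSPDF}; the packing lemma gives the error bound since the competing codewords at levels $\ge k$ are drawn independently of $V_i^n$ conditioned on the lower layers. (One must note that because $f$ is non-decreasing, ``relay $i$ decodes levels $\le f(i)$'' is consistent across relays: the degradedness of the source-to-relay broadcast channel means any codeword jointly typical at relay $i$ is also jointly typical, with high probability, at relays with larger gain, so no additional rate penalty arises.)

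For the destination, $V_D^n=\sum_i\sqrt{h_i}\,w_i(\bu_1,\ldots,\bu_{f(i)},V_i)^n+Z_D^n$, and I would use a single simultaneous decoder that looks for the unique $(\hat w_1,\ldots,\hat w_K)$ whose codewords $(\bu_1,\ldots,\bu_K)$ are jointly typical with $V_D^n$ under the induced distribution $P_{U_1,\ldots,U_K,V_D}$. Splitting the error event by the largest level $k$ at which the decoded message differs from the true one, and applying the packing lemma with the codewords at levels $k,\ldots,K$ independent (given levels $1,\ldots,k-1$) of $V_D^n$, yields reliable decoding provided
\[
\sum_{m=k}^{K} R_m < \I(U_k,\ldots,U_K;V_D\mid U_1,\ldots,U_{k-1}),\qquad k=1,\ldots,K,
\]
the second family in \eqref{eq:thmSPDF}. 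Averaging the total error probability over the random codebook and invoking standard expurgation produces a deterministic code with vanishing error, so $R=\sum_{m=1}^K R_m$ is achievable.

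The main obstacle I expect is the bookkeeping around the non-uniform decoding requirements: making precise that a single layered codebook simultaneously supports the staged relay decoders (each relay decoding only a prefix of the layers, with different relays decoding different prefixes) \emph{and} the destination's full joint decoder, and verifying that the error-event decomposition at the relays does not require extra rate constraints beyond those listed — i.e., that once relay~$\min\{j:f(j)=k\}$ succeeds, all relays with larger index inherit the decoded layers ``for free'' by degradedness. A secondary technical point is the power-constraint verification under the per-letter (rather than i.i.d.\ block) structure of $V_i$ and the possibly unbounded relay functions $w_i$, which requires a truncation argument and the strictness of the moment inequalities; this is routine but must be stated carefully so that the atypical-codeword contribution to both the power and the error analysis is negligible.
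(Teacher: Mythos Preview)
Your approach is essentially the paper's: layered superposition codebooks generated conditionally, joint typicality decoding at each relay of layers $1,\ldots,f(j)$ and at the destination of all $K$ layers, error-event decomposition by the \emph{smallest} layer index where the decoded message differs (your phrase ``largest level'' should read ``smallest''---levels $1,\ldots,k-1$ are correct and levels $k,\ldots,K$ are conditionally independent of the received signal), and power-constraint enforcement by declaring an error on atypical blocks.

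The one genuine omission is relay causality. In the model $x_i[t]$ may depend only on $y_i[1],\ldots,y_i[t-1]$, so you cannot let relay~$i$ receive $V_i^n$, decode, and transmit $w_i(\cdot)^n$ within the same block, as your description implicitly does. The paper handles this with the standard block-Markov device: communicate over $b{+}1$ blocks of length $n$; in block~$m$ the source sends a fresh message while the relays forward based on what they received and decoded in block~$m{-}1$. The rate loss is a factor $b/(b{+}1)\to 1$. This is routine but not optional here, and it is the piece your proposal is missing.

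A minor clarification: your phrase ``given that it has already recovered $(\bu_1,\ldots,\bu_{k-1})$ at earlier stages'' suggests successive decoding at the relays, but the paper (and the constraints you quote) use a \emph{single} joint-typicality decoder at relay~$j$ over all layers $1,\ldots,f(j)$; the conditioning on $U_1,\ldots,U_{k-1}$ in the rate bound comes purely from the error-event partition, not from a staged decoder. The degradedness argument you sketch---that among all relays the binding constraint for each $k$ is the one at $i=\min\{j:f(j)=k\}$---is exactly what the paper invokes to prune the full list of sum-rate constraints down to \eqref{eq:thmSPDF}.
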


\begin{proof}
The proof is given in Appendix~\ref{app:SPDF}.
\end{proof}

In order explain the role of the different variables and functions introduced in \Tref{thm:SPDF} we derive their relationship to the quantities introduced in \Tref{thm:BSPDF} for BSPDF. BSPDF uses $K=2$ codebook levels, and the following function $f$: $f(1)=\ldots=f(N)=1$ (i.e. all relay nodes decode the first level codeword, while the destination decodes the first and second level codewords). In addition, the binary auxiliary random variable $B$ of \Tref{thm:BSPDF} becomes $U_1$ in \Tref{thm:SPDF}, while $X_S$ becomes $U_2$. In \Tref{thm:BSPDF} we set the relay transfer functions $w_i(B,Y_i)=1\{B=1\} \sqrt{\kappa_i} Y_i$, for $i=1,\ldots, N$. We note that due to technical reasons in the proof of \Tref{thm:SPDF} the auxiliary random variables $(U_1,\ldots, U_K)$ have a continuous distribution. The proof of \Tref{thm:BSPDF} shows how to emulate the discrete auxiliary random variable of BSPDF with a continuous random variable.

We can also use \Tref{thm:SPDF} to generalize the BSPDF scheme by allowing different relays to decode different message levels. The function $f$, as before, specifies the maximum message level that relay $i$ can decode or zero if it does not decode any message level. We use the notation BSPDF($f$) to make the dependence on $f$ explicit.
\begin{cor}[BSPDF(f)]
\label{cor:BSPDF2}
Assume $B \sim \text{Bernoulli}(\delta)$ and $X_S| \{B = 0\} = 0$ (i.e. a mass of probability one at zero), 
$X_S | \{B = 1\} \sim \cN(0,\sigma^2)$. The received signal at relay $i=1,\ldots,N$,
is $Y_i = \sqrt{g_i} X_S + Z_i$, where $Z_i$ is i.i.d. $\cN(0,1)$. Relays $i$ with $f(i)=0$ amplify the received signals $Y_i$ by constants $\sqrt{\kappa_i}$, relays $i$ with $f(i)=1$ amplify the received signals $Y_i$ by constants $\sqrt{\kappa_i}$ if $B$ is equal to one, otherwise the relays set their output equal to zero, and relays $i$ with $f(i)=2$ retransmit the source message $X_S$. This results in $Y_D=\sum_{i:f(i)=0} \sqrt{\kappa_i h_i} Y_i + \sum_{i:f(i)=1}1\{B=1\} \sqrt{\kappa_i h_i} Y_i + \sum_{i:f(i)=2} \sqrt{h_i} X_S+Z_D$.
If $\delta \sigma^2 < 1$, $\kappa_i(\delta g_i \sigma^2  + 1) < 1$, for all $i$ such that $f(i)=0$, and $\delta \kappa_i (g_i \sigma^2 + 1) < 1$, for all $i$ such that $f(i)=1$, then the rate $R=R_1+R_2$ is achievable if
\begin{align}
R_1 & < \I(B; Y_{i}),   \ \text{ where } i=\min_{1\le j \le N:f(j)=1} j, \\
R_2 & < \I(X_S; Y_{i}|B),   \ \text { where }  i=\min_{1\le j \le N:f(j)=2} j, \\
R_2 & < \I(X_S; Y_D|B),   \\
R_1+R_2 & < \I(B, X_S; Y_{D})  . 
\end{align}
\end{cor}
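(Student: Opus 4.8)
The plan is to obtain \Cref{cor:BSPDF2} as a direct specialization of \Tref{thm:SPDF} with $K=2$ message levels and the given non-decreasing function $f:\{1,\ldots,N\}\to\{0,1,2\}$. First I would identify $U_1$ with the binary variable $B$ and $U_2$ with $X_S$, endowing $(U_1,U_2)$ with exactly the joint law in the corollary ($U_1\sim\text{Bernoulli}(\delta)$, $U_2|\{U_1=0\}=0$, $U_2|\{U_1=1\}\sim\cN(0,\sigma^2)$). The relay transfer functions are then assigned according to $f(i)$: for $f(i)=0$ set $w_i(V_i)=\sqrt{\kappa_i}\,V_i$ (pure amplification, ignoring the decoded symbols); for $f(i)=1$ set $w_i(U_1,V_i)=1\{U_1=1\}\sqrt{\kappa_i}\,V_i$; and for $f(i)=2$ set $w_i(U_1,U_2,V_i)=U_2$ (retransmit the decoded source symbol). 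With these choices $V_i=\sqrt{g_i}\,X_S+Z_i=Y_i$ and $V_D$ coincides with the expression for $Y_D$ written in the corollary.

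Next I would check that the hypotheses of \Tref{thm:SPDF} follow from those of the corollary. The requirement $\E[U_K^2]=\E[U_2^2]=\delta\sigma^2<1$ is the assumption $\delta\sigma^2<1$. For the relay second-moment constraints $\E[(w_i(\cdot))^2]<1$: a relay with $f(i)=0$ gives $\kappa_i\E[V_i^2]=\kappa_i(g_i\E[X_S^2]+1)=\kappa_i(\delta g_i\sigma^2+1)<1$; a relay with $f(i)=1$ gives $\kappa_i\E[1\{B=1\}V_i^2]=\kappa_i\Pr(B=1)\E[V_i^2|B=1]=\delta\kappa_i(g_i\sigma^2+1)<1$; a relay with $f(i)=2$ gives $\E[U_2^2]=\delta\sigma^2<1$. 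These are precisely the three families of inequalities hypothesized in \Cref{cor:BSPDF2}.

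I would then substitute the identifications into the region (\ref{eq:thmSPDF}) of \Tref{thm:SPDF}. The relay-side constraints, one per $k\in\{1,2\}$ with $i$ the smallest index satisfying $f(i)=k$, reduce to $R_1<\I(U_1;V_i)=\I(B;Y_i)$ for $k=1$ and $R_2<\I(U_2;V_i|U_1)=\I(X_S;Y_i|B)$ for $k=2$. The two destination-side constraints reduce to $R_1+R_2<\I(U_1,U_2;V_D)=\I(B,X_S;Y_D)$ for $k=1$ and $R_2<\I(U_2;V_D|U_1)=\I(X_S;Y_D|B)$ for $k=2$. These four inequalities are exactly those in the statement of \Cref{cor:BSPDF2}, so $R=R_1+R_2$ is achievable whenever they hold.

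The one genuine technical point, which I expect to be the main obstacle, is that \Tref{thm:SPDF} is established for auxiliary variables with a continuous joint distribution, while here $U_1=B$ is discrete and the level-$1$ relay maps contain the indicator $1\{U_1=1\}$. This is resolved exactly as in the proof of \Tref{thm:BSPDF}: the Bernoulli variable $B$ is emulated by a continuous random variable whose hard-decision threshold recovers $B$ with probability arbitrarily close to one, the indicator in the transfer functions is replaced by the corresponding Borel function of that continuous variable, and one passes to the limit so that all mutual information quantities converge to those written in terms of the discrete $B$ (the vanishing probability of mis-emulation contributing only a vanishing penalty in the error analysis). Since this continuity bridge is already carried out in Appendix~\ref{app:proofBSPDF}, it can be invoked verbatim; the remaining work is purely the bookkeeping of which relay uses which of the three transfer functions above.
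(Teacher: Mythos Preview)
Your proposal is correct and follows exactly the approach the paper intends: specialize \Tref{thm:SPDF} with $K=2$, the given $f$, and the three relay transfer functions you list, then handle the discrete-to-continuous emulation by invoking the construction in Appendix~\ref{app:proofBSPDF}. One tiny imprecision worth noting is that in that appendix the emulation is deterministic ($U_1\sim\text{Uniform}[0,1]$ with $B=1\{U_1>1-\delta\}$) and the limiting argument is over the variance $\epsilon\to 0$ of $U_2\mid\{B=0\}$, not over a vanishing mis-emulation probability; but since you defer to Appendix~\ref{app:proofBSPDF} verbatim, this does not affect the validity of your plan.
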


\begin{proof}
The proof follows along the same lines of the proof of \Tref{thm:BSPDF} by specifying proper auxiliary random variables and relay transfer functions in \Tref{thm:SPDF}. The details are omitted.
\end{proof}

It will be shown in Section~\ref{sec:asym} that this BSPDF generalization provides rate improvements in asymmetric channels.

\subsection{Timesharing different schemes}
\label{ssec:timesharing}
It is possible to timeshare different communication schemes, splitting the available time and power among them. As will be seen in Section~\ref{sec:analysis}, for some channels timesharing various schemes achieves larger rates than those achievable with any of the constituent schemes alone. In this subsection we characterize the rates achievable with timesharing.

Let $R_i(P_{S}, P_{1},\ldots, P_{N})$, denote the maximum achievable rate with communication scheme $i$ under average power constraints at the source and relays $P_{S},P_{1},\ldots, P_{N}$ respectively. Define the achievable rate-power region with communication scheme $i$ by $\tilde \cR_i$, i.e.
\[
\tilde \cR_i = \{(r_i, p_S, p_1,\ldots, p_N)\in \R^{N+2}_+):r_i\le R_i(p_S, p_1,\ldots, p_N) \}.
\]
Timesharing various schemes indexed by $i\in \cI$ we can achieve rate-power vectors in
$\tilde \cR = \text{Convex Hull}(\cup_{i\in \cI} \tilde \cR_i)$. In the channel with unit power constraints at the source and relays we can achieve rate-power vectors in $\cR = \{(r, p_S, p_1,\ldots, p_N)\in \tilde \cR: p_S < 1, p_1 < 1, \ldots, p_N < 1 \}$.

For all the schemes that we consider in this paper $R_i(P_{S}, P_{1},\ldots, P_{N})$ is a continuous function. Since in addition $R_i(0, \ldots, 0)=0$ for any communication scheme, we have that $\cup_{i\in \cI} \tilde \cR_i$ is a connected set. Using Carath\'eodory's theorem \cite{Egg69} and its extension we have that any point $(r,p_s,p_1,\ldots,p_N)\in \tilde \cR$ can be written as $\sum_{j=1}^{N+2} \lambda_j (r_j,p_{s,j},p_{1,j},\ldots,p_{N,j})$ with $\lambda_j \ge 0$, $\sum_{j=1}^{N+2} \lambda_j=1$ and $(r_j,p_{s,j},p_{1,j},\ldots,p_{N,j}) \in \tilde \cR_i$ for some $i$, and for $j=1,\ldots, N+2$. This means that it is enough to timeshare $N+2$ pure schemes to achieve the timesharing region.

Furthermore, in some special cases we can reduce the number of pure schemes to timeshare below $N+2$. For example, in a symmetric parallel relay network we may restrict attention to communication schemes that assign the same power to all relays. In this case, we can reduce the dimension of the space where all the sets above (i.e. $\tilde\cR_i, \tilde \cR, \cR $) are defined to three, and timesharing three pure schemes suffices to achieve the timesharing region. 

The timesharing results of Section~\ref{sec:analysis} are obtained by using the above properties. 


\section{Cut-set upper bound}
\label{sec:upper_bound}
The best known upper bound for the parallel relay network is based on the so-called cut-set upper bound. For a general network with node set $\cN$ a cut $\Omega$ is any subset of $\cN$. Let $Y_i$ and $X_i$ be the input and output variables associated with node $i$, and $R_{ij}$ be the transmission rate from node $i$ to node $j$. The following proposition gives the cut-set capacity upper bound for the network \cite[Theorem 15.10.1]{Cover}.

\begin{prop}
\label{pro:cut_set}
If the information rates $\{R_{ij}\}$ are achievable, there exists some joint probability distribution $p(X_1,\ldots, X_{|\cN|})$ (or probability density function in case of continuous random variables) such that
$$
\sum_{i \in \Omega, j \in \Omega^c} R_{ij} \le \I(\bX_\Omega; \bY_{\Omega^c}|\bX_{\Omega^c})
$$
for all $\Omega \subseteq \cN$.
\end{prop}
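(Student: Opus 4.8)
The statement is the classical cut-set bound (\cite[Theorem 15.10.1]{Cover}); I sketch the argument for completeness. The plan is to fix an achievable rate tuple $\{R_{ij}\}$ together with a sequence of codes whose average error probability $P_e^{(n)}$ vanishes, fix an arbitrary cut $\Omega\subseteq\cN$, produce for each block length $n$ a joint input distribution under which the total rate across the cut is bounded by the conditional mutual information up to an additive term $\epsilon_n\to 0$, and then pass to a limiting distribution. The key point to keep track of is that the distribution one extracts must work for all cuts simultaneously.

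First I would isolate the messages that must cross the cut. Writing $\bW$ for the collection of all messages, split it into $\bW_{\Omega\to\Omega^c}=\{W_{ij}:i\in\Omega,\,j\in\Omega^c\}$ and the remaining messages $\bW'$. Since the messages are independent and uniform, $n\sum_{i\in\Omega,j\in\Omega^c}R_{ij}=H(\bW_{\Omega\to\Omega^c})=H(\bW_{\Omega\to\Omega^c}\mid\bW')$. The nodes in $\Omega^c$ jointly reconstruct every message destined to them, hence all of $\bW_{\Omega\to\Omega^c}$, from $\bY_{\Omega^c}^n$ and $\bW'$ with error probability at most $P_e^{(n)}$, so Fano's inequality gives $H(\bW_{\Omega\to\Omega^c}\mid\bY_{\Omega^c}^n,\bW')\le n\epsilon_n$ with $\epsilon_n\to 0$, and therefore
\[
n\!\!\sum_{i\in\Omega,\,j\in\Omega^c}\!\!R_{ij}\;\le\;\I(\bW_{\Omega\to\Omega^c};\bY_{\Omega^c}^n\mid\bW')+n\epsilon_n .
\]
Next I would single-letterize. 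Expanding by the chain rule over time, $\I(\bW_{\Omega\to\Omega^c};\bY_{\Omega^c}^n\mid\bW')=\sum_{t=1}^{n}\I(\bW_{\Omega\to\Omega^c};\bY_{\Omega^c}[t]\mid\bY_{\Omega^c}^{t-1},\bW')$, and for each $t$ I would enlarge the conditioning by $\bX_{\Omega^c}[t]$ (a deterministic function of $(\bY_{\Omega^c}^{t-1},\bW')$ through the causal encoders at the nodes of $\Omega^c$) and then invoke the memorylessness of the Gaussian channel, i.e.\ the Markov relation $(\bW_{\Omega\to\Omega^c},\bY_{\Omega^c}^{t-1},\bW')-(\bX_\Omega[t],\bX_{\Omega^c}[t])-\bY_{\Omega^c}[t]$, to bound each term by $\I(\bX_\Omega[t];\bY_{\Omega^c}[t]\mid\bX_{\Omega^c}[t])$. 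Letting $p_t$ be the joint law of $(\bX_1[t],\dots,\bX_{|\cN|}[t])$ induced by the code and $\bar p^{(n)}=\frac1n\sum_{t}p_t$, concavity of the conditional mutual information $\I(\bX_\Omega;\bY_{\Omega^c}\mid\bX_{\Omega^c})$ in the input distribution (the channel law being fixed) and Jensen's inequality give $\sum_{i\in\Omega,j\in\Omega^c}R_{ij}\le\I_{\bar p^{(n)}}(\bX_\Omega;\bY_{\Omega^c}\mid\bX_{\Omega^c})+\epsilon_n$; crucially $\bar p^{(n)}$ was built without reference to $\Omega$, so it serves every cut at once.

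Finally I would pass to the limit: the power constraints bound the second moments of the $\bX_i[t]$ uniformly in $n$, so $\{\bar p^{(n)}\}$ is tight, and along a weakly convergent subsequence $\bar p^{(n_k)}\to p^{\star}$ one obtains $\sum_{i\in\Omega,j\in\Omega^c}R_{ij}\le\I(\bX_\Omega;\bY_{\Omega^c}\mid\bX_{\Omega^c})$ under $p^{\star}$ for every cut $\Omega$ simultaneously. I expect the main obstacles to be this last passage to the limit --- verifying the needed (upper semi-)continuity of the conditional mutual informations under weak convergence using only the second-moment bounds --- together with the conditioning manipulations in the single-letterization step, where one must argue carefully that adjoining $\bX_{\Omega^c}[t]$ to the conditioning and then discarding the past are both legitimate, the former by causality of the encoders and the latter by memorylessness of the channel.
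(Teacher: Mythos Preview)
The paper does not actually prove this proposition; it simply states it with a citation to \cite[Theorem~15.10.1]{Cover} and uses it as a black box. Your sketch is exactly the standard argument from that reference (Fano across the cut, single-letterization via encoder causality and channel memorylessness, then time-averaging to a common input law), so there is nothing to compare --- you have supplied the proof the paper defers to, and the continuity issues you flag in the continuous-alphabet limit are genuine but outside anything the paper itself addresses.
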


\begin{figure}[tb]

\centering
\includegraphics[height=0.3\textwidth]{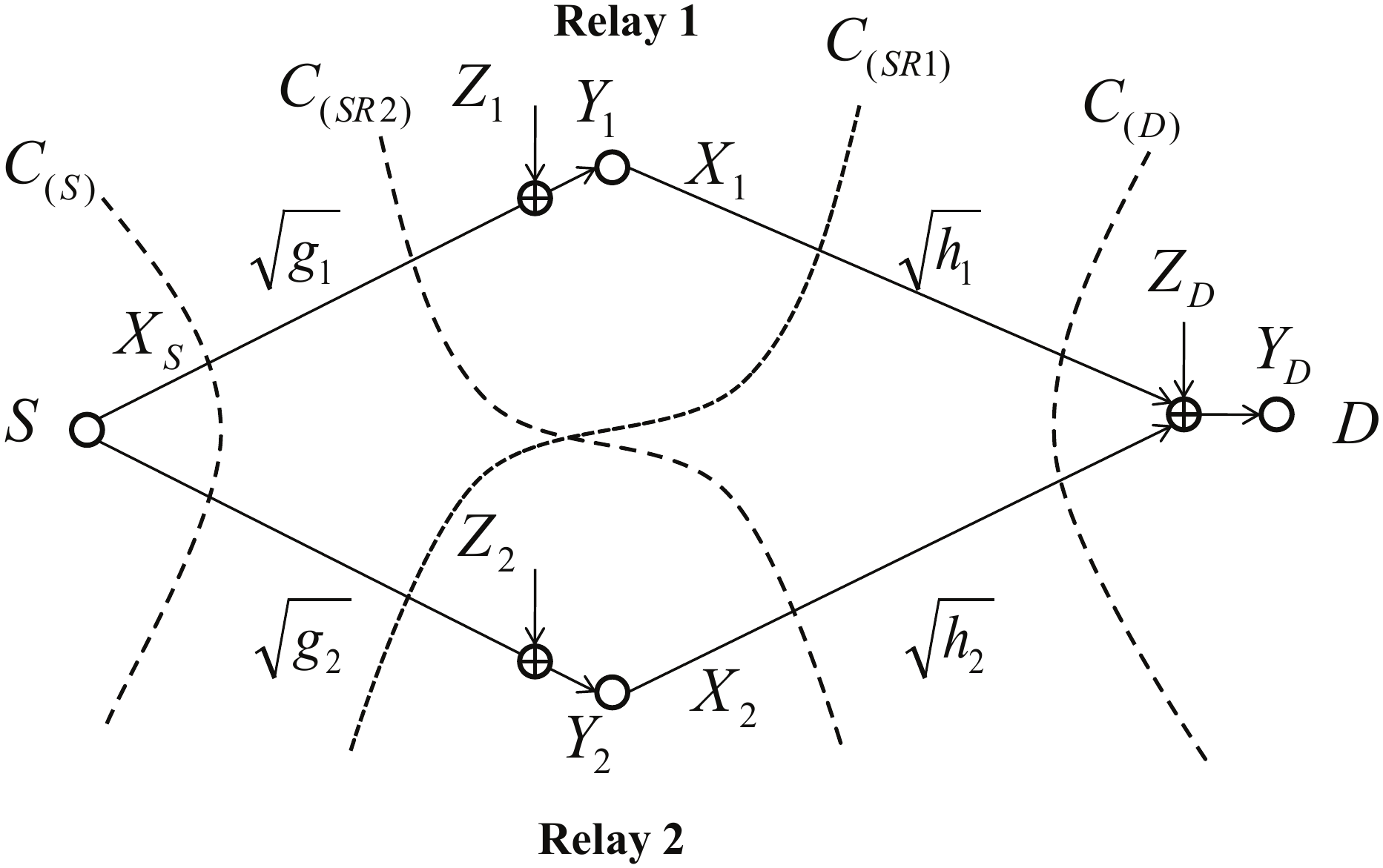}

\caption{Various cuts in a 2-relay diamond network.}
\label{fig:cut_set}

\end{figure}

For a 2-relay diamond network there are four cuts to consider, as shown in \Fref{fig:cut_set}. Specializing \Pref{pro:cut_set} to this network we obtain the following theorem (stated without proof).

\begin{prop}
\label{prop:cut_set}
The capacity $C$ of the Gaussian 2-relay diamond network is upper bounded by
\[
C\le \max_{0\le \rho \le 1} \min\left\{C_{(S)}, C_{(SR_1)}, C_{(SR_2)}, C_{(D)}  \right\},
\]
where
\begin{align*}
C_{(S)} = & \frac{1}{2} \log_2 \left(1+g_1+g_2\right),\\
C_{(SR_1)} = & \frac{1}{2} \log_2 \left[(1+g_2)\big(1+h_1\big(1-\rho^2\big)\big)\right],\\
C_{(SR_2)} = & \frac{1}{2} \log_2 \left[(1+g_1)\big(1+h_2\big(1-\rho^2\big)\big)\right],\\
C_{(D)} = & \frac{1}{2} \log_2 \left(1+h_1+h_2+2\rho\sqrt{h_1 h_2}\right).
\end{align*}
\end{prop}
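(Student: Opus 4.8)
The plan is to specialize the general cut-set bound of \Pref{pro:cut_set} to the 2-relay diamond network, enumerate the four cuts of \Fref{fig:cut_set}, and then maximize each resulting mutual information over admissible input distributions. Label the nodes $S$, Relay~1, Relay~2, $D$, and recall that in a diamond network $S$ transmits $X_S$, the relays transmit $X_1,X_2$, and $D$ is a pure receiver. Since $D$ has no outgoing link and $S$ has no incoming link, only four subsets $\Omega$ give a nontrivial cut separating $S$ from $D$: $\Omega=\{S\}$ (cut $C_{(S)}$), $\Omega=\{S,\text{Relay~1}\}$ (cut $C_{(SR_2)}$, since Relay~2 is cut off from the source side), $\Omega=\{S,\text{Relay~2}\}$ (cut $C_{(SR_1)}$), and $\Omega=\{S,\text{Relay~1},\text{Relay~2}\}$ (cut $C_{(D)}$). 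For each, I would write $\I(\bX_\Omega;\bY_{\Omega^c}\mid\bX_{\Omega^c})$ explicitly using the channel equations \eq{eq:Y_i}--\eq{eq:Y_D} with $g_i,h_i,N_0=1$.

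Next I would evaluate each term. For $C_{(S)}$: $\Omega^c=\{\text{Relay~1},\text{Relay~2},D\}$ and the relevant received signals conditioned on $X_1,X_2$ are $Y_1=\sqrt{g_1}X_S+Z_1$, $Y_2=\sqrt{g_2}X_S+Z_2$ (the destination output $Y_D$ is a deterministic function of $X_1,X_2$ plus $Z_D$ and contributes nothing once we condition on $X_1,X_2$ because $X_S\to(X_1,X_2)$ is not a Markov chain here — actually $Y_D$ carries no extra information about $X_S$ given $X_1,X_2$ since $Y_D$ depends on $X_S$ only through the relays). So $C_{(S)}=\I(X_S;Y_1,Y_2)$, which for a power-$1$ input is maximized by $X_S\sim\cN(0,1)$ giving $\frac12\log_2(1+g_1+g_2)$ via the MISO/SIMO capacity formula. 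For $C_{(D)}$: $\Omega^c=\{D\}$, so the bound is $\I(X_1,X_2;Y_D)$ with $Y_D=\sqrt{h_1}X_1+\sqrt{h_2}X_2+Z_D$; over all joint distributions of $(X_1,X_2)$ with unit power each, this is maximized by a jointly Gaussian pair with correlation $\rho=\E[X_1X_2]$, yielding $\frac12\log_2(1+h_1+h_2+2\rho\sqrt{h_1h_2})$. For the mixed cuts, e.g. $C_{(SR_1)}$ with $\Omega=\{S,\text{Relay~2}\}$: we need $\I(X_S,X_2;Y_1,Y_D\mid X_1)$. I would use the chain rule to split this as $\I(X_S;Y_1\mid X_1,X_2)+\I(X_2;Y_D\mid X_1)$ (using that $Y_1$ depends only on $X_S$ and $Z_1$, so conditioning on $X_1,X_2$ leaves $\I(X_S;Y_1)$-type term; and $Y_D$ given $X_1$ reduces to a point-to-point channel from $X_2$); the first piece is at most $\frac12\log_2(1+g_2)$ and the second at most $\frac12\log_2(1+h_1(1-\rho^2))$ because conditioning on $X_1$ reduces the effective power of the Gaussian $X_2$ to its conditional variance $1-\rho^2$. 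Adding and using $\log_2 ab=\log_2 a+\log_2 b$ gives $C_{(SR_1)}$; symmetrically for $C_{(SR_2)}$.

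Finally, since all four bounds must hold simultaneously for the \emph{same} joint distribution — and in particular the same correlation parameter $\rho$ — the capacity is at most the minimum of the four, and then we may optimize the choice of $\rho\in[0,1]$ (nonnegativity is WLOG by the channel-gain sign convention) to get the tightest such bound, yielding $C\le\max_{0\le\rho\le1}\min\{C_{(S)},C_{(SR_1)},C_{(SR_2)},C_{(D)}\}$. The main obstacle is the mixed-cut evaluation: one must argue carefully that jointly Gaussian inputs are simultaneously optimal for all four mutual informations, handle the conditioning $\I(\cdot\mid X_1)$ correctly (the ``worst-case'' input given the conditioning is still Gaussian with reduced variance $1-\rho^2$, which requires a maximum-entropy argument under a conditional second-moment constraint), and verify that the cross terms in the chain-rule decomposition vanish or telescope as claimed. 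The rest is the standard Gaussian vector-channel capacity computation, which I would not grind through here.
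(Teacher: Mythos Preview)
The paper states \Pref{prop:cut_set} without proof, merely remarking that it follows by specializing \Pref{pro:cut_set} to the diamond network. Your outline is the standard derivation and is sound.

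Two minor remarks. First, you have swapped the labels on the mixed cuts: the set $\Omega=\{S,\text{Relay~2}\}$ crosses links $g_1$ and $h_2$, so it produces $C_{(SR_2)}$, not $C_{(SR_1)}$ (and symmetrically for the other mixed cut). Since both appear in the minimum this is harmless. Second, the chain-rule split you wrote, $\I(X_S;Y_1\mid X_1,X_2)+\I(X_2;Y_D\mid X_1)$, is not in general an \emph{equality} for $\I(X_S,X_2;Y_1,Y_D\mid X_1)$ once $X_S$ is allowed to be correlated with $(X_1,X_2)$. The clean route is the sub-additivity bound $\h(Y_1,Y_D\mid X_1)\le \h(Y_1\mid X_1)+\h(Y_D\mid X_1)$, which after subtracting $\h(Z_1)+\h(Z_D)$ gives exactly the product form $(1+g_1)\big(1+h_2(1-\rho^2)\big)$, using that any correlation of $X_S$ with $(X_1,X_2)$ can only reduce $\h(Y_1\mid X_1)$. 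That same observation also settles the ``simultaneous Gaussian optimality'' worry you flagged: the extra covariance parameters $\E[X_SX_1]$ and $\E[X_SX_2]$ do not enter $C_{(S)}$ or $C_{(D)}$ at all and can only shrink the mixed-cut expressions, so one may take them to be zero and optimize over the single parameter $\rho=\E[X_1X_2]$, as the proposition asserts.
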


While in principle \Pref{prop:cut_set} can be extended to networks with any number of relays, $N$, the computation of the cut-set bound becomes increasingly complex as $N$ increases. For symmetric networks where $g_1=g_2=\ldots=g_N=g$ and $h_1=h_2=\ldots=h_N=h$, an alternative upper bound is provided in \cite{ND10} as stated below.

\begin{prop}(\cite[Lemma 6]{ND10})
\label{pro:cut_setN}
For any $N \ge 2$, $g,h > 0$, the capacity of the Gaussian $N$-relay parallel network is upper bounded by
\[
C(N,g,h)\le \sup_{\rho \in [0,1)} \min_{n\in\{0,\ldots,N\}}\left(\frac{1}{2} \log_2\left(1+(N{-}\,n)g \right) +\frac{1}{2} \log_2\left(1+n\left(1{+}(n{-}\,1)\rho-\frac{n(N{-}\,n)\rho^2}{1{+}(N{-}\,n{-}\,1)\rho} \right) h \right)\right).
\]
\end{prop}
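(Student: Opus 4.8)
The plan is to specialize the cut-set bound of \Pref{pro:cut_set} to the symmetric network and then exploit permutation symmetry to reduce the optimization over joint input distributions to a scalar optimization over a single relay--relay correlation $\rho$. By \Pref{pro:cut_set}, if a rate $R$ is achievable there is a joint input distribution $p(X_S,X_1,\dots,X_N)$ obeying the power constraints $\E[X_S^2]\le 1$ and $\E[X_i^2]\le 1$ for which $R\le\I(\bX_\Omega;\bY_{\Omega^c}\mid\bX_{\Omega^c})$ for every cut $\Omega$ separating $S$ from $D$. Since the destination never transmits and the source never receives, such a cut is determined, up to relabeling of the relays, by the number $n\in\{0,\dots,N\}$ of relays it places on the source side; call that set of relays $\cS$ and the remaining $N-n$ relays $\cT$. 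The first step is to split the cut mutual information by the chain rule into a ``broadcast'' term $\I(\bX_\Omega;\{Y_j\}_{j\in\cT}\mid\{X_j\}_{j\in\cT})$ and a ``multiple-access'' term $\I(\bX_\Omega;Y_D\mid\{X_j\}_{j\in\cT},\{Y_j\}_{j\in\cT})$, which I would bound separately.

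For the broadcast term, conditioning on $\bX_\Omega$ (which contains $X_S$) reduces the subtracted entropy to $\h(\{Z_j\}_{j\in\cT})$, while dropping the conditioning in the remaining entropy and invoking the maximum-entropy property of the Gaussian under a covariance constraint gives, via the matrix--determinant lemma applied to the rank-one update $I+g\,\E[X_S^2]\,\mathbf{1}\mathbf{1}^{\top}$ together with $\E[X_S^2]\le 1$, the bound $\tfrac12\log_2(1+(N-n)g)$. For the multiple-access term, I would first drop the conditioning on $\{Y_j\}_{j\in\cT}$; given $\{X_j\}_{j\in\cT}$ the output $Y_D$ equals $\sum_{i\in\cS}\sqrt{h}\,X_i+Z_D$ up to a known constant, so after subtracting $\h(Z_D)$ and applying the maximum-entropy bound together with Jensen's inequality the term is at most $\tfrac12\log_2(1+h\,\bar v)$, where $\bar v=\E\big[\mathrm{Var}\big(\sum_{i\in\cS}X_i\mid\{X_j\}_{j\in\cT}\big)\big]$. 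By the law of total covariance and the optimality of the conditional mean over linear estimators (in the positive-semidefinite order), $\bar v\le\mathbf{1}^{\top}\big(\Sigma_{\cS}-\Sigma_{\cS\cT}\Sigma_{\cT}^{-1}\Sigma_{\cT\cS}\big)\mathbf{1}$, the linear-MMSE error quadratic form built from the covariance blocks of the relay inputs.

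The third step is symmetrization. Since $\I(\bX_\Omega;\bY_{\Omega^c}\mid\bX_{\Omega^c})$ is concave in $p_{\bX}$ for the fixed additive-Gaussian channel $\bX\mapsto\bY$, averaging $p$ over all $N!$ relabelings of the relays produces a permutation-symmetric $\bar p$ whose $n$-relay cut value upper-bounds the average of the $\binom{N}{n}$ relabeled cut values, each of which is at least $R$; hence under $\bar p$ every $n$-relay cut still has value $\ge R$. Under a permutation-symmetric distribution the relay covariance is $\sigma_X^2\big((1-\rho)I+\rho\,\mathbf{1}\mathbf{1}^{\top}\big)$ with $\sigma_X^2\le 1$ and $\rho<1$; substituting this into the two bounds above and evaluating the linear-MMSE quadratic form via Sherman--Morrison shows the $n$-relay cut value of $\bar p$ is at most $\tfrac12\log_2(1+(N-n)g)+\tfrac12\log_2\big(1+\sigma_X^2\,n\big[1+(n-1)\rho-\tfrac{n(N-n)\rho^2}{1+(N-n-1)\rho}\big]h\big)$, so $R$ is at most this expression for every $n$. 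Because the bracket is a nonnegative quadratic form one may replace $\sigma_X^2$ by $1$, and because the expression at any $\rho<0$ is dominated by its value at $\rho=0$ the parameter may be restricted to $\rho\in[0,1)$; minimizing over $n$ and then taking the supremum over $\rho\in[0,1)$ gives the claimed inequality. The degenerate cuts $n=0$ (the multiple-access term vanishes) and $n=N$ (the broadcast term vanishes) are covered directly by the same formulas.

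I expect the main obstacle to be the reduction to the one-parameter symmetric family: one must verify carefully that replacing an arbitrary optimizing $p$ by its relay-label symmetrization does not weaken the cut-set bound, and that the bound on $\bar v$ is carried out in the right order --- the positive-semidefinite inequality on conditional-covariance matrices first, then contraction by $\mathbf{1}^{\top}(\cdot)\mathbf{1}$ --- so that only the second moments of $\bar p$, i.e.\ the single correlation $\rho$ (and the per-relay power, which may be taken maximal), enter the final expression. Once that reduction is in place the remaining ingredients (the matrix--determinant lemma, the Sherman--Morrison evaluation, and the monotonicities in $\sigma_X^2$ and in $\rho$ for $\rho<0$) are routine.
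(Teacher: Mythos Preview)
The paper does not give its own proof of this proposition; it is simply quoted from \cite[Lemma~6]{ND10}. So there is nothing in the present paper to compare your argument against directly.

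That said, your proposal is correct and tracks the standard derivation. The decomposition of each cut's mutual information via the chain rule into a ``broadcast'' piece (bounded by $\tfrac12\log_2(1+(N-n)g)$ via max-entropy and the matrix--determinant lemma) and a ``multiple-access'' piece (bounded through the linear-MMSE Schur complement after dropping conditioning on $\{Y_j\}_{j\in\cT}$) is exactly right, as is the observation that these bounds depend on the input law only through second moments. The symmetrization step is the key reduction, and the concavity you invoke is valid: for the additive Gaussian channel, $\h(\bY_{\Omega^c}\mid\bX)=\h(\bZ_{\Omega^c})$ is constant in $p_{\bX}$, while $\h(\bY_{\Omega^c}\mid\bX_{\Omega^c})$ is concave in $p_{\bX}$ because conditioning on a mixture index can only decrease conditional entropy. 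This lets you replace the optimizing $p$ by its relay-permutation average $\bar p$ without weakening any cut, forcing the relay covariance into the one-parameter form $\sigma_X^2\big((1-\rho)I+\rho\mathbf{1}\mathbf{1}^\top\big)$. The Sherman--Morrison evaluation of $\mathbf{1}^\top(\Sigma_{\cS}-\Sigma_{\cS\cT}\Sigma_{\cT}^{-1}\Sigma_{\cT\cS})\mathbf{1}$ then yields precisely the bracketed expression. Your monotonicity arguments for eliminating $\sigma_X^2<1$ and $\rho<0$ are also correct (the bracket is a nonnegative Schur-complement quadratic form, and for $\rho<0$ each term is dominated by its value at $\rho=0$), and the boundary case $\rho=1$ is handled by continuity.

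The only places to be slightly more careful in a write-up are: (i) making explicit that the permutation average preserves $\E[X_S^2]\le 1$ (the source coordinate is fixed by the relay permutations); and (ii) when $\Sigma_{\cT}$ is singular (e.g.\ $\rho=1$ with $N-n\ge 2$), interpreting the Schur complement via a pseudoinverse or passing to the limit. Neither affects the final bound.
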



\section{Analysis of SPDF in the asymptotic low SNR regime}
\label{sec:analysis}
Theorems~\ref{thm:BSPDF} and \ref{thm:TSPDF} provide explicit single-letter expressions of achievable rates for BSPDF and TSPDF. However, finding the maximum communication rates achievable with these schemes  requires optimizing some parameters and computing differential entropies of mixtures of Gaussian random variables. 

To simplify the numerical computations and to gain more insight into the low SNR regime we consider an asymptotic regime where the channel gains tend to zero while maintaining a constant ratio among them. Since in this regime the channel capacity goes to zero, we measure performance by the ratio of the achievable rates and some reference channel gain. We consider two types of parallel relay networks:
\begin{itemize}
\item N-relay symmetric: $N\ge 2$, $g_1=g_2=\ldots=g_N=g$, $h_1=h_2=\ldots=h_N=h$. 
We let $h,g \to 0$ while keeping the ratio $g/h$ constant. We compare the performance of the different schemes by computing $R/g$.

\item 2-relay asymmetric: $N=2$, $g_1=h_2=g$, $h_1=g_2=h$. 
As before, we let $h,g \to 0$ while keeping the ratio $g/h$ constant. 
We compare the performance of the different schemes by computing $R/\sqrt{gh}$.
\end{itemize}

We note that we could have used as performance metric $R/x$ for any $x$ that goes to zero at the same rate as $g$ and $h$. The choice of $R/\sqrt{gh}$ as performance metric in the 2-relay asymmetric network results in equal performance for $h/g=\alpha$ and $h/g=1/\alpha$. This symmetry allows to present the results in a more compact form.

The benefit of considering the asymptotic low SNR regime is that we can approximate the differential entropy of a mixture of Gaussian distributions with an explicit expression using Taylor series, as shown in the following lemma.

\begin{lem}
\label{lem:b_h}
Let the probability distribution of the random variable $Y$ be a mixture
of $q$ Gaussian distributions
$$
f_{Y}(y) = \big(1 - \sum_{i=1}^{q-1} \delta_i \big) \, e^{-y^2/(2\sigma_0^2)}/(\sqrt{2\pi}\sigma_0)
+ \sum_{i=1}^{q-1} \delta_i \, e^{-y^2/(2\sigma_i^2)}/(\sqrt{2\pi}\sigma_i),
$$
where $\sigma_0 > 0$, $\sigma_i>0$, $\delta_i > 0$, for $i=1,2,\ldots, q-1$ and $\sum_{i=1}^{q-1} \delta_i < 1$.
Letting $\mathbf{\bar \delta} = (\delta_1, \delta_2, \ldots, \delta_{q-1})$,
for small $\norm{\bar \delta}_2$ we can
write the differential entropy of $Y$ as
$$
\h(Y) = \frac{1}{2} \log_2(2\pi e \sigma_0^2) +
\sum_{i=1}^{q-1} \delta_i(\sigma_i^2/\sigma_0^2 - 1)/(2\ln2) + O(\norm{\bar \delta}^2_2).
$$
\end{lem}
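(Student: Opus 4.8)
The plan is to compute the differential entropy $\h(Y) = -\int f_Y(y) \log_2 f_Y(y)\, dy$ by treating $f_Y$ as a perturbation of the base Gaussian density $\phi_0(y) = e^{-y^2/(2\sigma_0^2)}/(\sqrt{2\pi}\sigma_0)$. First I would write $f_Y(y) = \phi_0(y) + g(y)$ where $g(y) = \sum_{i=1}^{q-1} \delta_i\big(\phi_i(y) - \phi_0(y)\big)$ with $\phi_i(y) = e^{-y^2/(2\sigma_i^2)}/(\sqrt{2\pi}\sigma_i)$, so that $\|g\|$ is $O(\norm{\bar\delta}_2)$ in a suitable sense. Then I would expand $-f_Y \log_2 f_Y = -(\phi_0 + g)\log_2(\phi_0 + g)$ using the Taylor expansion $\log_2(\phi_0 + g) = \log_2 \phi_0 + \frac{g}{\phi_0 \ln 2} + O\big((g/\phi_0)^2\big)$, which gives, after multiplying out and collecting orders,
\begin{align*}
-f_Y \log_2 f_Y = -\phi_0 \log_2 \phi_0 - g\log_2 \phi_0 - \frac{g}{\ln 2} + O\big((g/\phi_0)^2 \cdot \phi_0\big) + O\big(g^2/\phi_0\big).
\end{align*}
Integrating term by term: the first term yields $\h(\cN(0,\sigma_0^2)) = \frac12 \log_2(2\pi e\sigma_0^2)$; the term $-\int g/\ln 2 = 0$ since $\int \phi_i = \int \phi_0 = 1$; and the term $-\int g(y)\log_2 \phi_0(y)\,dy$ is where the main contribution comes from, since $\log_2 \phi_0(y) = -\frac{y^2}{2\sigma_0^2 \ln 2} - \frac12\log_2(2\pi\sigma_0^2)$ and $\int y^2 \phi_i(y)\,dy = \sigma_i^2$, giving $\sum_i \delta_i\big(\sigma_i^2/\sigma_0^2 - 1\big)/(2\ln 2)$ after the constant parts cancel.

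The main obstacle is making the error-term control rigorous: the remainder $O\big(g^2/\phi_0\big)$ involves $1/\phi_0(y)$, which blows up like $e^{y^2/(2\sigma_0^2)}$ in the tails, so one must verify the integral $\int g(y)^2/\phi_0(y)\,dy$ is finite and genuinely $O(\norm{\bar\delta}_2^2)$. This requires that $\sigma_0$ be chosen (or assumed) so that each $\phi_i^2/\phi_0$ is integrable, i.e. $2/\sigma_i^2 - 1/\sigma_0^2 > 0$, or more carefully one splits the integration domain into a bounded region $|y| \le T$ where a uniform Taylor bound on $\log_2(\phi_0 + g)$ applies, and the tail $|y| > T$ where one bounds $f_Y \log_2 f_Y$ directly using the sub-Gaussian decay of every mixture component (each $\phi_i$ and hence $f_Y$ has tails no heavier than $e^{-y^2/(2\sigma_{\max}^2)}$, where $\sigma_{\max} = \max_i \sigma_i$), so the tail contribution is exponentially small in $T$ uniformly in $\bar\delta$ in a neighborhood of the origin. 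A convenient way to sidestep the pointwise blow-up is to note that on the support of $g$ the ratio $g/\phi_0$ is dominated by $\sum_i \delta_i\big(\phi_i/\phi_0 + 1\big)$, and $\phi_i/\phi_0$ is a fixed function of $y$ (independent of $\bar\delta$); one then uses the second-order Taylor remainder in Lagrange form with the intermediate point bounded below by $\min(\phi_0, f_Y) > 0$ on any compact set, and handles the tail separately as above.

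Finally I would assemble the pieces: $\h(Y) = \frac12\log_2(2\pi e\sigma_0^2) + \sum_{i=1}^{q-1}\delta_i(\sigma_i^2/\sigma_0^2 - 1)/(2\ln 2) + O(\norm{\bar\delta}_2^2)$, valid for $\norm{\bar\delta}_2$ small enough that $1 - \sum_i \delta_i > 0$ and the perturbation expansion is valid. I would remark that the same computation shows the first-order term is exactly the directional derivative of $\h(Y)$ with respect to the mixture weights at $\bar\delta = 0$, which provides a sanity check: increasing the weight on a component with $\sigma_i^2 > \sigma_0^2$ increases entropy, consistent with the sign of the coefficient. The lemma will later be applied with the $\sigma_i^2$ being the conditional variances of $Y_1$ or $Y_D$ given the discrete auxiliary variable taking its various values, and the $\delta_i$ being the (small) probabilities $\delta$, $\delta_1$, $\delta_2$ appearing in Theorems~\ref{thm:BSPDF} and \ref{thm:TSPDF}.
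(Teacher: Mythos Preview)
Your proposal is correct and reaches the same first-order expansion, but the route differs from the paper's. You expand the integrand $-f_Y\log_2 f_Y$ pointwise in the perturbation $g=\sum_i\delta_i(\phi_i-\phi_0)$ and then integrate; the paper instead views $\h(Y)$ as a smooth function of $\bar\delta\in\R^{q-1}$, invokes the Leibniz rule to justify $\frac{\partial}{\partial\delta_i}\h(Y)=\int_{-\infty}^{\infty}\bigl(-\tfrac{\partial}{\partial\delta_i}f_Y\log_2 f_Y\bigr)\,dy$, and evaluates at $\bar\delta=0$. The integrand and its $\delta_i$-derivative are bounded by fixed integrable envelopes (sums of Gaussians times a quadratic in $y$), so the Leibniz conditions hold for every choice of $\sigma_0,\ldots,\sigma_{q-1}$.

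The practical difference is exactly the obstacle you flagged: in your pointwise expansion the second-order remainder involves $\int g^2/\phi_0$, which contains terms like $\int \phi_i\phi_j/\phi_0$ and is \emph{not} integrable once $1/\sigma_i^2+1/\sigma_j^2\le 1/\sigma_0^2$ (in particular whenever some $\sigma_i^2>2\sigma_0^2$). Your domain-splitting fix is valid but laborious. The paper's approach sidesteps this at first order because the Leibniz bound needs only $|\partial_{\delta_i}(f_Y\ln f_Y)|=|(\phi_i-\phi_0)(1+\ln f_Y)|$, which is a sum of Gaussians times a polynomial---no ratio $\phi_i/\phi_0$ ever appears. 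Both arguments are equally informal about the $O(\|\bar\delta\|_2^2)$ remainder (the paper does not verify second-derivative bounds either), but the paper's packaging makes the first-order computation cleaner and uniform in the variances.
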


\begin{proof}
The proof is given in the Appendix~\ref{apd:b_h}.
\end{proof}

A direct application of \Lref{lem:b_h} results in the following corollary, which gives approximations for  two mutual information expressions. We use these approximations to compute the performance of binary and ternary SPDF schemes.
\begin{cor}
\label{cor:I}
Consider the $q$-ary random variable $Q$ with the distribution
$\Pr\left\{ Q = i \right\} = \delta_i$, for $i=0,1,\ldots q-1$,
where $\sum_{i=0}^{q-1} \delta_i =1$. We set
the distribution of the random variable $W$ to be 
$W | \{ Q = i \} \sim \cN(0, \sigma_i^2)$, for $i=0,1,\ldots, q-1$ and 
we define $Y = W + Z$ where $Z$ is statistically independent of $Q$ and
$W$ and has distribution $\cN(0,\sigma_Z^2)$. Letting
$\bar \delta = (\delta_1, \delta_2, \ldots, \delta_{q-1})$, then for a 
small $\norm{\bar \delta}_2$ we have
\begin{align*}
\I(Q ; W) =& \sum_{i=1}^{q-1}  \frac{\delta_i}{2\ln2} \left( \left(\frac{\sigma_i^2}{\sigma_0^2} - 1\right) -
\ln \left( \frac{\sigma_i^2}{\sigma_0^2} \right) \right) +
O(\norm{ \bar \delta }^2_2) \\
\I(W; Y | Q) =& \sum_{i=1}^{q-1} \frac{\delta_i}{2} \log_2 \left(
1 + \frac{\sigma_i^2}{\sigma_Z^2} \right).
\end{align*}
\end{cor}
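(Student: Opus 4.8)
The plan is to derive both estimates directly from \Lref{lem:b_h} together with elementary identities for (conditional) differential entropy; no new analysis beyond the lemma is needed.

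For the first estimate I would write $\I(Q;W)=\h(W)-\h(W\mid Q)$. The marginal density of $W$,
\[
f_W(w)=\sum_{i=0}^{q-1}\delta_i\,\frac{e^{-w^2/(2\sigma_i^2)}}{\sqrt{2\pi}\,\sigma_i}
=\Big(1-\sum_{i=1}^{q-1}\delta_i\Big)\frac{e^{-w^2/(2\sigma_0^2)}}{\sqrt{2\pi}\,\sigma_0}
+\sum_{i=1}^{q-1}\delta_i\,\frac{e^{-w^2/(2\sigma_i^2)}}{\sqrt{2\pi}\,\sigma_i},
\]
is exactly the mixture in \Lref{lem:b_h}, with the $\cN(0,\sigma_0^2)$ term playing the role of the dominant component (its weight is $1-\sum_{i=1}^{q-1}\delta_i\to 1$ as $\norm{\bar\delta}_2\to 0$). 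Applying the lemma gives $\h(W)=\frac{1}{2}\log_2(2\pi e\sigma_0^2)+\sum_{i=1}^{q-1}\delta_i(\sigma_i^2/\sigma_0^2-1)/(2\ln 2)+O(\norm{\bar\delta}_2^2)$. Since $W\mid\{Q=i\}\sim\cN(0,\sigma_i^2)$ and $\delta_0=1-\sum_{i=1}^{q-1}\delta_i$,
\[
\h(W\mid Q)=\sum_{i=0}^{q-1}\delta_i\,\tfrac{1}{2}\log_2(2\pi e\sigma_i^2)
=\tfrac{1}{2}\log_2(2\pi e\sigma_0^2)+\sum_{i=1}^{q-1}\delta_i\,\tfrac{1}{2}\log_2(\sigma_i^2/\sigma_0^2).
\]
Subtracting, the $\frac{1}{2}\log_2(2\pi e\sigma_0^2)$ terms cancel, and using $\tfrac{1}{2}\log_2 x=\ln x/(2\ln 2)$ yields exactly the stated formula for $\I(Q;W)$.

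For the second estimate I would expand $\I(W;Y\mid Q)=\sum_{i=0}^{q-1}\delta_i\,\I(W;Y\mid Q=i)$. Conditioned on $\{Q=i\}$, $Y=W+Z$ is the output of a scalar Gaussian channel with input $W\sim\cN(0,\sigma_i^2)$ and independent noise $Z\sim\cN(0,\sigma_Z^2)$, so $\I(W;Y\mid Q=i)=\frac{1}{2}\log_2(1+\sigma_i^2/\sigma_Z^2)$; this identity is exact and needs no smallness of $\norm{\bar\delta}_2$. In the situations relevant to \Tref{thm:BSPDF} and \Tref{thm:TSPDF} the component $i=0$ is degenerate ($\sigma_0^2=0$, i.e. $W$ is a point mass at the origin given $Q=0$), so that term contributes $0$ and the sum starts at $i=1$ as written.

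The only genuine input is \Lref{lem:b_h}; everything else is bookkeeping with $\delta_0=1-\sum_{i\ge1}\delta_i$ and the change of base of the logarithm, so I do not anticipate a real obstacle. The one point worth a sentence is that the $O(\norm{\bar\delta}_2^2)$ remainder is unaffected by the conditional-entropy computation: $\h(W\mid Q)$ is an \emph{exact} affine function of $\bar\delta$, hence the quadratic error in $\I(Q;W)$ is inherited verbatim from the lemma. One should also note that the first estimate carries over the lemma's hypotheses $\sigma_0>0$, $\sigma_i>0$ (needed so that $\ln(\sigma_i^2/\sigma_0^2)$ is finite and $\h(W\mid Q)$ is well defined), whereas the second estimate remains valid in the degenerate case $\sigma_0=0$.
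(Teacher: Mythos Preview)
Your proposal is correct and matches the paper's approach: the paper simply states that the corollary is ``a direct application of \Lref{lem:b_h}'' without spelling out the details, and what you wrote is exactly that direct application --- expand $\I(Q;W)=\h(W)-\h(W\mid Q)$, apply the lemma to the mixture $f_W$, compute $\h(W\mid Q)$ exactly, and for the second identity condition on $Q$ and use the Gaussian channel formula. Your observation that the second formula, as stated with the sum starting at $i=1$, implicitly uses $\sigma_0^2=0$ (the degenerate case relevant in the applications to \Tref{thm:BSPDF} and \Tref{thm:TSPDF}), whereas the first formula inherits the hypothesis $\sigma_0>0$ from \Lref{lem:b_h}, is a worthwhile clarification that the paper leaves implicit.
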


\subsection{Symmetric diamond network with two relays}
\label{subsec:symmetric}

In this subsection we consider a network with $N=2$ relays, and symmetric gains of the form $g_1=g_2=g$ and $h_1=h_2=h$. 

The following theorem provides an asymptotic characterization of the maximum rate achievable with the Binary SPDF scheme.

\begin{thm}
\label{thm:asymptBSPDF}
For a symmetric diamond relay network 
assume that the channel gains from the source to relays are equal 
to $\sqrt{g}$ and channel gains from relays to the destination are
equal to $\sqrt{h}$. 
The rate $h\cdot\RBinary(g/h) + O(h^2)$ bits per network
use is achievable, where $\RBinary(g/h)$ is the solution of the following optimization problem.
\begin{align*}
\RBinary(g/h) =  \sup_{\beta > 0} &\  \Big( R_1 + R_2 \Big)
\\
\text{sub}&\text{ject to: } \, 
\\
R_1 & < 
\frac{1}{2\ln2} \left( g/h - \beta \sqrt{g/h} \ln \left[
1 + \frac{\beta}{\sqrt{g/h}} \right] \right)
\\
R_1 & <
\frac{1}{2\ln2} \left(
\frac{ 2 ( 2\sqrt{g/h} + \beta )}{\sqrt{g/h} + \beta} -
\beta \sqrt{g/h} \ln \left[1+
\frac{2( 2\sqrt{g/h} + \beta )}{ \beta ( g/h + \beta \sqrt{g/h}) }
\right] \right) 
\\
R_2 & < \frac{\beta \sqrt{g/h} }{2 \ln 2} \ln \left(1 + \frac{4\sqrt{g/h}}{\beta(2+g/h+\beta \sqrt{g/h})} \right).
\end{align*}
\end{thm}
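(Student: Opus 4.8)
The plan is to specialize the general achievable region of Theorem~\ref{thm:BSPDF} to the symmetric two-relay network, to evaluate in closed form the three mutual-information quantities that appear there, to eliminate the relay gain and the burst power by saturating the power constraints, and finally to introduce the parameter $\beta$ through a substitution that simultaneously fixes the correct scaling of $\delta$ with $h$.

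First I would put $g_1=g_2=g$, $h_1=h_2=h$ and, by symmetry, $\kappa_1=\kappa_2=\kappa$ in Theorem~\ref{thm:BSPDF}, so that $Y_1=\sqrt{g}\,X_S+Z_1$, and $Y_D=Z_D$ when $B=0$ while $Y_D=\sqrt{\kappa h}\,(2\sqrt{g}\,X_S+Z_1+Z_2)+Z_D$ when $B=1$. Writing $I_1=\I(B;Y_1)$, $I_2=\I(X_S;Y_D\mid B)$ and $I_{12}=\I(B,X_S;Y_D)$, Theorem~\ref{thm:BSPDF} makes $R_1+R_2$ achievable whenever $R_1<I_1$, $R_2<I_2$ and $R_1+R_2<I_{12}$, so the largest sum rate for fixed $(\delta,\sigma^2,\kappa)$ equals $\min\{I_1,\,I_{12}-I_2\}+I_2$, which is exactly the form of the statement (two bounds on $R_1$, one on $R_2$). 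Conditioned on $B$, resp.\ on $(B,X_S)$, the received variables are Gaussian, so $\h(Y_1\mid B)$ and $\h(Y_D\mid B,X_S)$ are elementary, whereas $\h(Y_1)$ and $\h(Y_D)$ are differential entropies of two-component Gaussian mixtures with mixing weight $\delta$; Lemma~\ref{lem:b_h} (equivalently Corollary~\ref{cor:I}) then gives $I_1=\frac{\delta}{2\ln 2}\bigl(g\sigma^2-\ln(1+g\sigma^2)\bigr)+O(\delta^2)$, an analogous closed form for $I_{12}$ (also with an $O(\delta^2)$ remainder), and the exact value $I_2=\frac{\delta}{2}\log_2\!\bigl(1+\tfrac{4\kappa h\,g\sigma^2}{2\kappa h+1}\bigr)$.

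Since $I_1,I_2,I_{12}$ are all increasing in $\sigma^2$, and $I_2,I_{12}$ are increasing in $\kappa$ while $I_1$ does not depend on $\kappa$, the optimal parameters meet both power constraints of Theorem~\ref{thm:BSPDF} with equality, i.e.\ $\sigma^2=1/\delta$ and $\kappa=1/(g+\delta)$ (the strict inequalities in that theorem are recovered by using $\delta\sigma^2=1-\epsilon$ and letting $\epsilon\downarrow 0$, by continuity of the rates). I would then set $\delta=\beta\sqrt{gh}$ with $\beta>0$ and write $\gamma=g/h$, so that $\delta/h=\beta\sqrt{\gamma}$, $g\sigma^2=g/\delta=\sqrt{\gamma}/\beta$ and $\kappa h=h/(g+\delta)=1/(\gamma+\beta\sqrt{\gamma})$. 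Substituting these into the expressions above and dividing by $h$ carries $I_1/h$, $(I_{12}-I_2)/h$ and $I_2/h$ into the three right-hand sides of the statement, up to an additive $O(h)$ from the $O(\delta^2)=O(h^2)$ Taylor remainders ($\delta$ being of order $h$); the supremum over $\beta$ then yields the asserted achievable rate $h\,\RBinary(g/h)+O(h^2)$.

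The main obstacle will be the control of the error terms. The remainder in Lemma~\ref{lem:b_h} is $O(\delta^2)$ with a constant that depends on the component variances, and the component $\cN(0,g\sigma^2+1)=\cN(0,\sqrt{\gamma}/\beta+1)$ of $Y_1$ has large variance when $\beta$ is small, so the $O(h^2)$ estimate is uniform only on compact $\beta$-intervals bounded away from $0$. I would therefore need to argue that the supremum in the optimization defining $\RBinary(g/h)$ is attained at a finite $\beta^*>0$: the objective $R_1+R_2$ is continuous in $\beta$, tends to $0$ as $\beta\to\infty$ (both bounds on $R_1$ vanish), and tends to the decode-and-forward value $\tfrac{1}{2\ln 2}\min\{\gamma,4\}$ as $\beta\to 0$, so the supremum is attained in the interior whenever it exceeds that value, and otherwise the claim follows directly from the decode-and-forward rate $\RDF$. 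With such a $\beta^*$ fixed the estimates are uniform near it and the $h\,\RBinary(g/h)+O(h^2)$ bound follows; the remaining steps — the reduction to the boundary of the power region and the lengthy algebraic simplifications — are routine.
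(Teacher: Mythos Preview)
Your proposal is correct and follows essentially the same approach as the paper. The paper states that Theorem~\ref{thm:asymptBSPDF} is the $N=2$ case of Theorem~\ref{thm:N_relay}, whose proof in Appendix~\ref{apd:N_relay} proceeds exactly as you outline: specialize Theorem~\ref{thm:BSPDF}, rewrite the three constraints as two bounds on $R_1$ and one on $R_2$ via $\I(B,X_S;Y_D)=\I(B;Y_D)+\I(X_S;Y_D\mid B)$, set $\delta=\beta\sqrt{gh}$, saturate the power constraints with $\sigma^2=1/\delta$ and $\kappa=1/(g+\delta)$, and apply Corollary~\ref{cor:I}. Your discussion of the uniformity in $\beta$ of the $O(h^2)$ remainder is more careful than what the paper provides; note also that your derivation of the first $R_1$ bound yields $\ln\bigl[1+\sqrt{g/h}/\beta\bigr]$, matching Theorem~\ref{thm:N_relay}, whereas the displayed statement of Theorem~\ref{thm:asymptBSPDF} has the reciprocal argument---this appears to be a typographical slip in the paper rather than an issue with your argument.
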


\Tref{thm:N_relay} in the next subsection extends \Tref{thm:asymptBSPDF} for $N$-relay symmetric parallel networks. We omit the proof of \Tref{thm:asymptBSPDF} since it is a special case of \Tref{thm:N_relay}.

\begin{figure}[tb]


\centering
\includegraphics[width=\plotsize\textwidth]{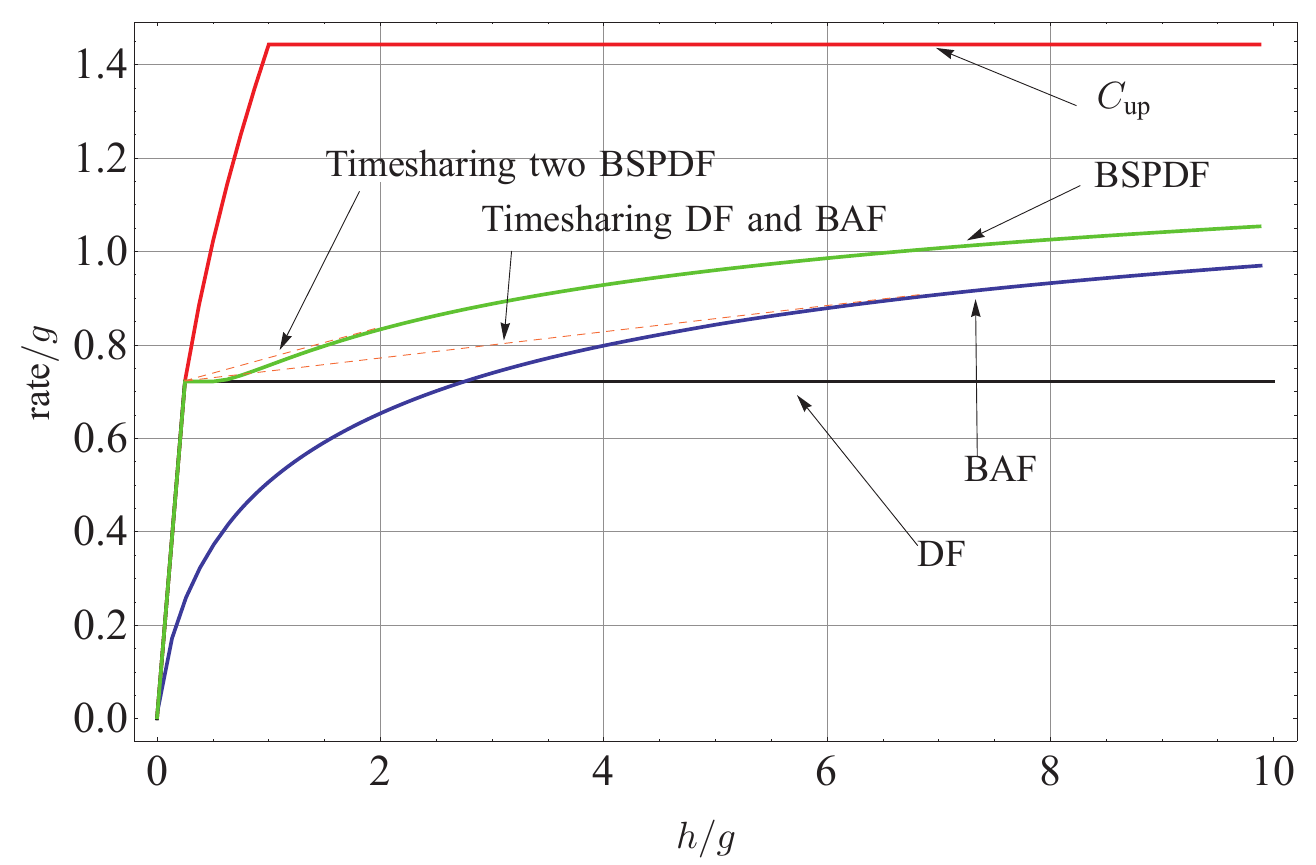}

\caption{The upper bound  and achievable rates of decode-and-forward (DF), bursty-amplify-and-forward (BAF), and binary-superposition-partially-decode-and-forward (BSPDF)  communication schemes in the symmetric diamond relay network. $\sqrt{g}$ is the channel gain from the source to relays and $\sqrt{h}$ is the channel gain from relays to the destination. 
}
\label{fig:BSPDF}

\end{figure}

We have solved the optimization for $\RBinary(g/h)$ numerically for a range of
$g/h$. The results are plotted in \Fref{fig:BSPDF} together with the achievable rates of DF, BAF, and the cut-set upper bound. Also plotted in the figure are the rates achievable by timesharing DF with BAF, and by timesharing two BSPDF schemes with different parameters (refer to Section \ref{ssec:timesharing} for an explanation of the rates achievable by timesharing different schemes). The result of timesharing between DF and BAF in \Fref{fig:BSPDF} appears as a segment of a line that starts at the point $h/g = 1/4$ and $\text{rate}/g = 1/(2 \ln 2)$ and is tangent to the curve of the achievable rate of BAF \cite{S01}. Similarly, the timesharing of two BSPDF schemes of different parameters results in a segment of a line that starts at the same point $h/g = 1/4$ and $\text{rate}/g = 1/(2 \ln 2)$ and it is tangent to the curve of the achievable rate of BSPDF. 

It follows from \Fref{fig:BSPDF} that BSPDF outperforms both DF and BAF in the low SNR regime. It also follows from the figure that for certain range of $h/g$ timesharing different BSPDF schemes results in higher rates. It is interesting to note that for $h/g\le 1/4$ the achievable rates of DF and BSPDF match the cut-set upper bound. Therefore, both of these schemes are optimal in the asymptotic low SNR regime for the range $0 \le h/g \le 1/4$.

\begin{figure}[tb]


\centering
\includegraphics[width=\plotsize\textwidth]{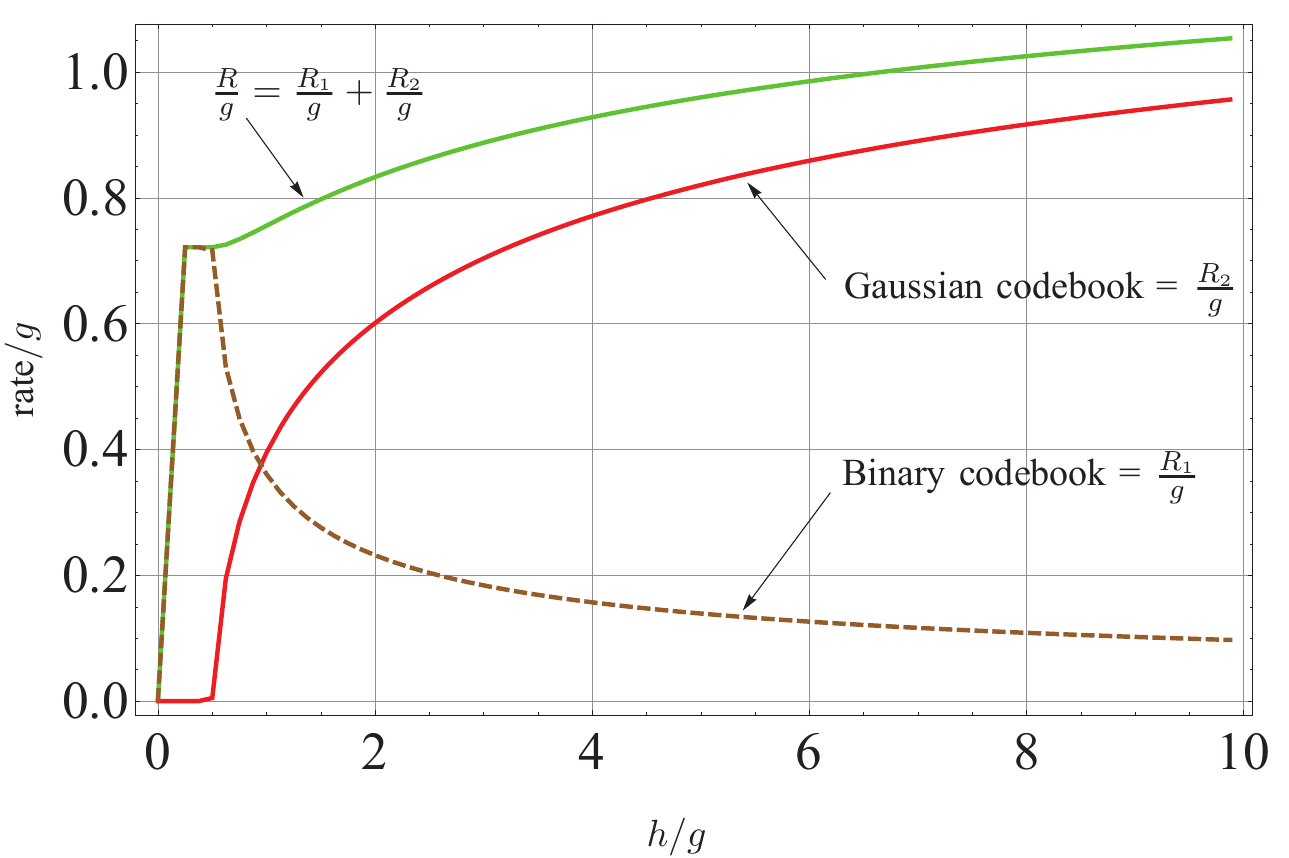}

\caption{Contribution of the binary codebook rate $R_1$ and the Gaussian codebook rate $R_2$ to the total communication rate $R=R_1+R_2$ in the Binary SPDF scheme for a symmetric 2-relay Gaussian diamond network.$\sqrt{g}$ is the channel gain from the source to relays and $\sqrt{h}$ is the channel gain from relays to the destination.}
\label{fig:bin_gaus_rate}

\end{figure}

We show in \Fref{fig:bin_gaus_rate} the contribution of the binary codebook rate $R_1$ and the Gaussian codebook rate $R_2$ to the total communication rate $R=R_1+R_2$ in the BSPDF scheme. We observe that for small values of $h/g$ when DF is optimal, BSPDF performs full decoding at the relays by assigning zero rate to the Gaussian codebook. As $h/g$ increases, it becomes beneficial to start conveying some rate in the Gaussian codebook. For $h/g \gg 1$ most of the rate is conveyed by the Gaussian codebook.

The fact that timesharing BSPDF schemes with different power levels
can result in a scheme with higher achievable rate suggests that a 
more flexible coding scheme that uses more than two power levels
to select the codeword symbols in the $\cC_{2,i_1}$ codebook, and that uses more amplification factors at the relays, should outperform the BSPDF scheme. 

To that effect we consider the Ternary SPDF scheme as described in \Tref{thm:TSPDF} and characterize its performance in the asymptotic low SNR regime in the following theorem.

\begin{thm}
\label{thm:asymptTSPDF}
For a symmetric diamond network with two relays 
assume that the channel gains from the source to relays are equal 
to $\sqrt{g}$ and channel gains from relays to the destination are
equal to $\sqrt{h}$.
The rate $h \cdot \RTernary(g/h) + O(h^2)$ is achievable 
where $\RTernary(g/h)$ is the solution to the following optimization
\begin{align*}
  \RTernary(g/h)  = \sup_{\beta_1, \beta_2, \gamma_1, \gamma_2, \kappa_1, \kappa_2} 
\ &\Big( R_1 + R_2 \Big)
\\
 \text{ subject to: } &
\\
 R_1 & < 
\frac{\sqrt{g/h }}{2 \ln 2} \Big( \beta _1 \left(\gamma_1 g/h - \ln\left[1+ \gamma_1 g/h    \right] \right)+ 
\beta_2 \left( \gamma_2 g/h   -\ln\left[1+ \gamma_2 g/h  \right]\right) \Big) 
\\
 R_1 & < 
\frac{\sqrt{g/h}}{2 \ln 2} \Big( \beta _1 \left(\left(2+ 4 \gamma_1 g/h   -\ln\left[1+\left(2+4 \gamma_1 g/h  \right) \kappa_1\right] \right) \kappa_1\right)+
\\
& \quad\quad\quad\quad\quad
\beta _2 \left(\left(2+4 \gamma _2 g/h  \right) \kappa _2-\ln\left[1+\left(2+4 \gamma_2 g/h  \right) \kappa _2\right]\right) \Big)  
\\
 R_2  & < 
\frac{\sqrt{g/h}}{2 \ln 2} \left(\ln\left[1+\frac{4 (\gamma_1 g/h)   \kappa _1}{1+2 \kappa _1}\right] \beta _1+\ln\left[1+\frac{4 (\gamma_2 g/h)  \kappa _2}{1+2 \kappa _2}\right] \beta _2\right)
\\
 0 & < \beta_1 ,\beta_2  
\\
 0 & \le \gamma_1, \gamma_2, \kappa_1, \kappa_2  
\\
 1&>\sqrt{g/h }  \left(\beta _1 \left(1+ \gamma_1 g/h \right) \kappa _1+\beta _2 \left(1+ \gamma_2 g/h \right) \kappa _2\right)
\\
 1&> \sqrt{g/h} \left(\beta _1 \gamma _1+\beta _2 \gamma _2\right).
\end{align*}

\end{thm}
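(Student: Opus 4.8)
The plan is to specialize the single-letter achievable region of \Tref{thm:TSPDF} to the symmetric two-relay network, rescale the free parameters in a way adapted to the regime $g,h\to 0$ with $g/h$ fixed, and then replace the mutual-information terms by the low-SNR Taylor approximations of \Cref{cor:I}. First I would set $g_1=g_2=g$, $h_1=h_2=h$ and, for achievability purposes, take the amplification constants equal at the two relays (a particular, hence legitimate, choice of the parameters in \Tref{thm:TSPDF}). I would then introduce new parameters via $\delta_j=\beta_j\sqrt{gh}$, $\sigma_j^2=\gamma_j/h$ and $\kappa_{ij}=\kappa_j/h$ for $j=1,2$. A direct substitution turns the two feasibility conditions of \Tref{thm:TSPDF}, namely $\delta_1\sigma_1^2+\delta_2\sigma_2^2<1$ and $\delta_1\kappa_{i1}(g\sigma_1^2+1)+\delta_2\kappa_{i2}(g\sigma_2^2+1)<1$, into exactly $\sqrt{g/h}(\beta_1\gamma_1+\beta_2\gamma_2)<1$ and $\sqrt{g/h}\bigl(\beta_1\kappa_1(1+\gamma_1 g/h)+\beta_2\kappa_2(1+\gamma_2 g/h)\bigr)<1$, which are the last two constraints of the stated optimization, while $\delta_1+\delta_2\le 1$ holds once $h$ is small.

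Next I would evaluate the three mutual informations of \Tref{thm:TSPDF}. Conditioned on $T=j$ the source-to-relay link is a scalar Gaussian channel, so $Y_1$ is a mixture of $\cN(0,1)$ (for $T=0$), $\cN(0,1+\gamma_1 g/h)$ and $\cN(0,1+\gamma_2 g/h)$; applying the $\I(Q;W)$ expansion of \Cref{cor:I} gives $\I(T;Y_1)=h\cdot\tfrac{\sqrt{g/h}}{2\ln 2}\sum_{j}\beta_j\bigl(\gamma_j g/h-\ln(1+\gamma_j g/h)\bigr)+O(h^2)$, i.e.\ $h$ times the first $R_1$-bound. Conditioned on $T=j\in\{1,2\}$ one has $Y_D=2\sqrt{\kappa_j g}\,X_S+\sqrt{\kappa_j}(Z_1+Z_2)+Z_D$ with $X_S\sim\cN(0,\gamma_j/h)$, so $\I(X_S;Y_D\mid T=j)=\tfrac12\log_2\!\bigl(1+\tfrac{4\kappa_j\gamma_j g/h}{1+2\kappa_j}\bigr)$ exactly, and $\I(X_S;Y_D\mid T)=\sum_j\delta_j\,\I(X_S;Y_D\mid T=j)$ is $h$ times the $R_2$-bound with no error term. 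Unconditionally $Y_D$ is a mixture of $\cN(0,1)$, $\cN(0,1+(2+4\gamma_1 g/h)\kappa_1)$ and $\cN(0,1+(2+4\gamma_2 g/h)\kappa_2)$, so a second application of \Cref{cor:I} yields $\I(T;Y_D)=h\cdot\tfrac{\sqrt{g/h}}{2\ln 2}\sum_j\beta_j\bigl((2+4\gamma_j g/h)\kappa_j-\ln(1+(2+4\gamma_j g/h)\kappa_j)\bigr)+O(h^2)$, i.e.\ $h$ times the second $R_1$-bound.

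Finally I would assemble the rate region. By the chain rule $\I(T,X_S;Y_D)=\I(T;Y_D)+\I(X_S;Y_D\mid T)$, so over the polytope $\{R_1<\I(T;Y_1),\,R_2<\I(X_S;Y_D\mid T),\,R_1+R_2<\I(T,X_S;Y_D)\}$ of \Tref{thm:TSPDF} the maximum of $R_1+R_2$ equals $\I(X_S;Y_D\mid T)+\min\{\I(T;Y_1),\I(T;Y_D)\}$; equivalently one may impose the three separate constraints $R_1<\I(T;Y_1)$, $R_1<\I(T;Y_D)$, $R_2<\I(X_S;Y_D\mid T)$. Dividing each by $h$, using the expansions above, and taking the supremum over $(\beta_1,\beta_2,\gamma_1,\gamma_2,\kappa_1,\kappa_2)$ subject to the transformed power constraints gives an achievable $R_1+R_2$ equal to $h\,\RTernary(g/h)+O(h^2)$, which is the claim.

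I expect the one genuinely delicate point to be the uniformity of the $O(h^2)$ remainders produced by \Cref{cor:I} for $\I(T;Y_1)$ and $\I(T;Y_D)$: the $O(\norm{\bar\delta}_2^2)$ term there hides constants that a priori depend on the Gaussian variances in the mixture. The saving grace is that $g/h$ is held fixed, so those variances ($1+\gamma_j g/h$ and $1+(2+4\gamma_j g/h)\kappa_j$) stay bounded while $\norm{\bar\delta}_2=\Theta(\sqrt{gh})=\Theta(h)$, provided the sup over the parameters is restricted to a compact set — harmless because the achievable rate is bounded by the cut-set bound. Granting that, the remaining ingredients (the exact scalar-Gaussian capacity computation, the algebra verifying the transformation of the power constraints, and the polyhedral manipulation via the chain rule) are routine.
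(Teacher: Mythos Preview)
Your proposal is correct and follows essentially the same route as the paper: the same rescaling $\delta_j=\beta_j\sqrt{gh}$, $\sigma_j^2=\gamma_j/h$, $\kappa_{ij}=\kappa_j/h$, the same chain-rule reduction of the sum-rate constraint to the pair $R_1<\I(T;Y_1)$, $R_1<\I(T;Y_D)$, and the same use of \Cref{cor:I} for the two mixture mutual informations together with the exact scalar-Gaussian computation for $\I(X_S;Y_D\mid T)$. Your added paragraph on the uniformity of the $O(h^2)$ remainders is a point the paper does not make explicit; the paper simply substitutes and quotes \Cref{cor:I}.
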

\begin{proof}
The proof is given in Appendix \ref{app:asympTSPDFproof}.
\end{proof}

\begin{figure}[tb]


\centering
\includegraphics[width=\plotsize\textwidth]{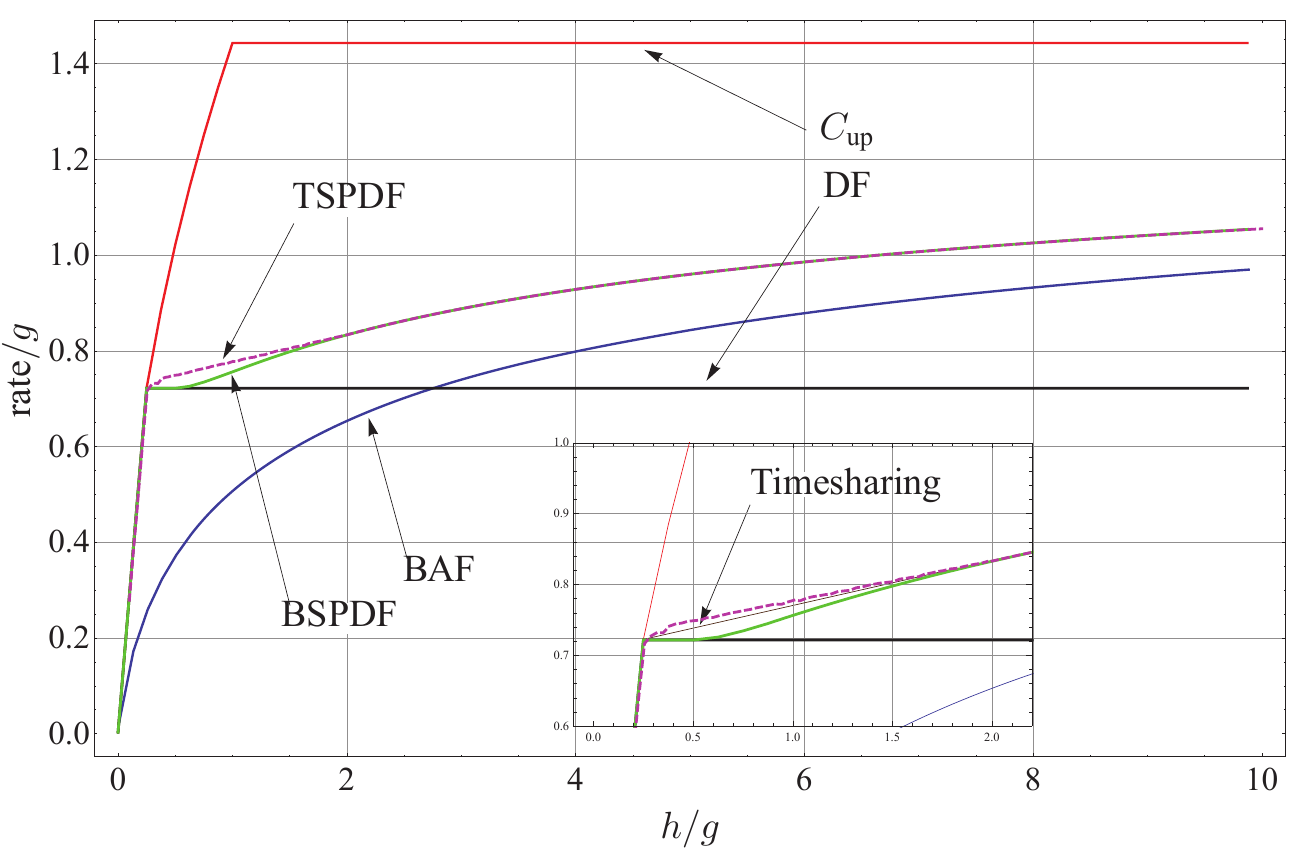}

\caption{The upper bound and achievable rates for decode-and-forward (DF), bursty-amplify-and-forward (BAF), Binary SPDF (BSPDF) and Ternary SPDF (TSPDF) communication schemes in the symmetric diamond network with two relays. $\sqrt{g}$ is the channel gain from the source to relays and $\sqrt{h}$ is the channel gain from relays to the destination. The inset shows an expanded view comparing the achievable rates of TSPDF to timeshared BSPDF.}
\label{fig:TSPDF}

\end{figure}

Solving the optimization problem of \Tref{thm:asymptTSPDF} is difficult.
We have been able to find local optima for various values of $h/g$ which are plotted
in \Fref{fig:TSPDF}. Notice that any local optimum in the optimization
of \Tref{thm:asymptTSPDF} also results in an achievable rate for the diamond relay network.
These numerical results show that TSPDF outperforms timeshared BSPDF and other known relaying 
schemes for symmetric diamond networks with two relays in the asymptotic low SNR regime.


\subsection{Symmetric parallel network with N relays}
\label{sec:N_relays}

So far we have only considered diamond networks with two relays.
In this section we consider a symmetric parallel relay network with $N > 2$
relays. We assume a symmetric network where the channel gains from source to relays are equal to
$\sqrt{g}$ and the channel gains from relays to destination are equal
to $\sqrt{h}$. Furthermore, we assume that $g, h \ll 1$ with a fixed ratio $g/h$.


In the following theorem we characterize the achievable rate of the Binary SPDF scheme in the asymptotic low SNR regime. We use \Tref{thm:BSPDF} and approximate the mutual information expressions for small $g$ and $h$ using \Cref{cor:I}. 

\begin{thm}
\label{thm:N_relay}
The rate $h \cdot \RBinary(N, g/h) + O(h^2)$ is achievable
in the symmetric $N$-relay Gaussian parallel network, where $\sqrt{g}$ is the channel gain from the source to relays and $\sqrt{h}$ is the channel gain from relays to the destination and $\RBinary(N,g/h)$ is the solution of the following optimization
\begin{align*}
\RBinary(N,g/h)  = \sup_{\beta > 0} & \Big( R_1+R_2 \Big) \\
 \text{sub}&\text{ject to:} \\
 R_1 & < 
\frac{1}{2\ln 2} \left( g/h - \beta  \sqrt{g/h }  \ln \left[1 + \frac{\sqrt{g/h} }{\beta}\right] \right) \\
 R_1 & < 
\frac{1}{2\ln 2} \left( \frac{N \left(N \sqrt{g/h }+\beta \right)}{\sqrt{g/h }+\beta } - \beta\sqrt{g/h }  \ln \left[ 1 + 
\frac{N(N \sqrt{g/h} + \beta) }
{\beta(g/h + \beta \sqrt{g/h})}
\right] \right) \\
 R_2 & < 
\frac{ \beta\sqrt{g/h }  }{2 \ln 2} \ln \left[1+\frac{ N^2 \sqrt{g/h }}{\beta  \left(N+g/h + \beta\sqrt{g/h }  \right)}\right].
\end{align*}
\end{thm}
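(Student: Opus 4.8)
The plan is to read the result directly off \Tref{thm:BSPDF}, after specializing its parameters to the symmetric network, reparametrizing the remaining freedom by a single scalar $\beta>0$, and evaluating the three mutual informations in the limit $g,h\to0$ with $g/h$ fixed via \Lref{lem:b_h} and \Cref{cor:I}. In \Tref{thm:BSPDF} put $g_i=g$, $h_i=h$ and take a common relay gain $\kappa_i=\kappa$ (natural by the symmetry of the network); then $Y_i=\sqrt g\,X_S+Z_i$ and $Y_D=1\{B=1\}\sqrt{\kappa h}\sum_{i=1}^N Y_i+Z_D$. Since each mutual information in \Tref{thm:BSPDF} is increasing in $\sigma^2$ and in $\kappa$, it is best to drive both power constraints to equality; writing $\delta=\beta\sqrt{gh}$ this forces $\sigma^2=1/(\beta\sqrt{gh})$ and $\kappa=1/(g+\beta\sqrt{gh})$. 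To respect the strict inequalities $\delta\sigma^2<1$ and $\delta\kappa(g\sigma^2+1)<1$ one keeps a relative slack of order $h$, which perturbs the final rate by only $O(h^2)$. With these choices $\delta=\Theta(h)\to0$, $\sigma^2\to\infty$, and $\beta$ is the single remaining degree of freedom.

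I would then record the conditional laws: $Y_1$ is $\cN(0,1)$ with probability $1-\delta$ and $\cN(0,1+g\sigma^2)$ with probability $\delta$; $Y_D$ is $\cN(0,1)$ with probability $1-\delta$ and $\cN(0,1+\Delta)$ with probability $\delta$, where $\Delta=\kappa h(N^2 g\sigma^2+N)$; and, conditioned on $\{B=1,X_S=x\}$, $Y_D\sim\cN(\sqrt{\kappa h}\,N\sqrt g\,x,\;1+\kappa hN)$. The reparametrization yields $g\sigma^2=\sqrt{g/h}/\beta$, $\;\kappa hN=N/(\sqrt{g/h}(\sqrt{g/h}+\beta))$, $\;\kappa hN^2 g\sigma^2=N^2/(\beta(\sqrt{g/h}+\beta))$, $\;\Delta=N(N\sqrt{g/h}+\beta)/(\beta\sqrt{g/h}(\sqrt{g/h}+\beta))$, and $\delta/h=\beta\sqrt{g/h}$ --- all depending only on $g/h$ (and on $\beta$, $N$), not on $g$ and $h$ separately.

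Next I would evaluate the three quantities of \Tref{thm:BSPDF}. Only $B=1$ contributes to $\I(X_S;Y_D|B)=\tfrac{\delta}{2}\log_2(1+\kappa hN^2 g\sigma^2/(1+\kappa hN))$, which is exact; substituting the identities above and dividing by $h$ gives precisely the $R_2$ bound in the statement. For $\I(B;Y_1)$ and $\I(B;Y_D)$ I would apply \Lref{lem:b_h} (equivalently \Cref{cor:I}) to the two-component Gaussian mixtures above, with small parameter $\delta\to0$ and $\Theta(1)$ component variances, obtaining $\I(B;Y_1)=\tfrac{\delta}{2\ln2}(g\sigma^2-\ln(1+g\sigma^2))+O(\delta^2)$ and $\I(B;Y_D)=\tfrac{\delta}{2\ln2}(\Delta-\ln(1+\Delta))+O(\delta^2)$; after substitution and division by $h$ these become the first and second $R_1$ bounds, up to $O(h)$. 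Finally, since $\I(B,X_S;Y_D)=\I(B;Y_D)+\I(X_S;Y_D|B)$, the supremum of $R_1+R_2$ over the achievable region of \Tref{thm:BSPDF} equals $\min\{\I(B;Y_1),\I(B;Y_D)\}+\I(X_S;Y_D|B)$, which is $h$ times the value of the displayed optimization at $\beta$, up to $O(h^2)$; taking the supremum over $\beta>0$ gives the achievable rate $h\,\RBinary(N,g/h)+O(h^2)$. The $N=2$ case recovers \Tref{thm:asymptBSPDF}.

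The only delicate step is the asymptotic bookkeeping: one must verify that \Lref{lem:b_h} applies to the two mixtures (the weight vector has norm $\delta\to0$, and the component variances are constant once $g/h$ and $\beta$ are fixed) and that its remainder is genuinely $O(\delta^2)=O(h^2)$; and, if the supremum over $\beta$ is to be passed through literally rather than by first fixing a near-maximizing $\beta$, that the implied constants are locally uniform in $\beta$, which is harmless since all that is needed is achievability of every rate strictly below $h\,\RBinary(N,g/h)$. The remaining ingredients --- that a common $\kappa_i$ and tight power constraints are best within this family, and the term-by-term algebraic check that the substituted expansions match the three displayed bounds --- are routine.
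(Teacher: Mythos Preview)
Your proposal is correct and follows essentially the same approach as the paper's proof: specialize \Tref{thm:BSPDF} to the symmetric network with a common relay gain, reparametrize via $\delta=\beta\sqrt{gh}$ with $\sigma^2$ and $\kappa$ saturating the power constraints, replace the sum-rate constraint by the pair $R_1<\I(B;Y_D)$, $R_2<\I(X_S;Y_D|B)$ via the chain rule, and evaluate the three mutual informations using \Cref{cor:I}. Your treatment is, if anything, slightly more explicit than the paper's (the slack to respect the strict inequalities, the locally uniform $O(h^2)$ remainder in $\beta$), but the substance is identical.
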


\begin{proof}
The proof of the theorem is given in Appendix~\ref{apd:N_relay}.
\end{proof}

\begin{figure}[tb]


\centering
\includegraphics[width=\plotsize\textwidth]{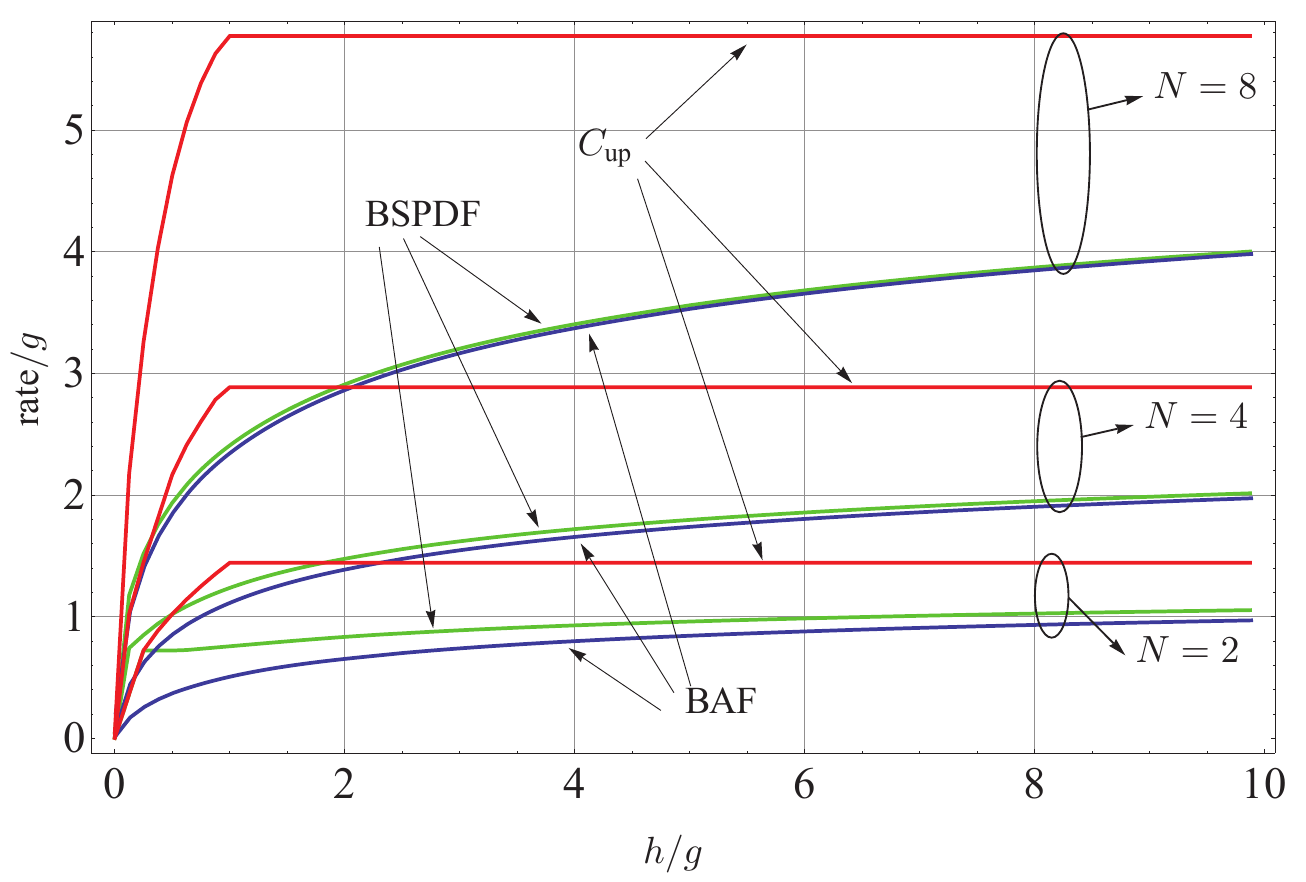}

\caption{Cut-set upper bound and achievable rates of the bursty-amplify-and-forward (BAF) and
the Binary SPDF (BSPDF) communication schemes in the symmetric Gaussian parallel relay network with  $N=2,4$ and $8$ relays.
$\sqrt{g}$ is the channel gain from the source to relays and $\sqrt{h}$ is the channel gain from relays to the destination.}
\label{fig:N_relays}

\end{figure}

We numerically computed the achievable rates of BAF and 
BSPDF, and the results are given in \Fref{fig:N_relays} for
networks with $N=2,4,8$ relays. \Fref{fig:N_relays} also shows the capacity upper bound of \Pref{pro:cut_setN} for these networks.  We see in this figure that as the number of relays increases, the gap between the rates of BSPDF and BAF shrinks. As $N$ increases for a fixed $h/g$ ratio, the links between the relays and the destination benefit as a whole from a larger beamforming gain while the links from the source to each relay remain invariant. Exploiting the benefits of larger $N$ requires increasing the data rate $R=R_1+R_2$, which makes it harder for the relays to decode the lower level codewords $\bu_{1,i_1}$. Successful decoding of the lower level messages requires in turn to reduce $R_1$. In addition, BAF can be thought of as a special case of BSPDF with $R_1=0$. As a result, as $N$ increases the performance of BSPDF approaches that of BAF.


\subsection{Asymmetric diamond network with two relays}
\label{sec:asym}

In the previous subsections we considered symmetric parallel relay networks where $g_1=\ldots=g_N$ and $h_1=\ldots=h_N$. These symmetric networks model scenarios where all relays are at similar distances to the source and destination. In this section we consider a diamond relay network with $N=2$ relays, where one of the relays is close to the source, while the other is close to the destination. To simplify the analysis we assume $g_1=h_2=g$ and $g_2=h_1=h$, and as before, consider the regime of $g,h \to 0$ with a fixed ratio $h/g$ (see \Fref{fig:asym_channel}). WLOG we assume $g \le h$. 


\begin{figure}[tb]

\centering
\includegraphics[height=0.30\textwidth]{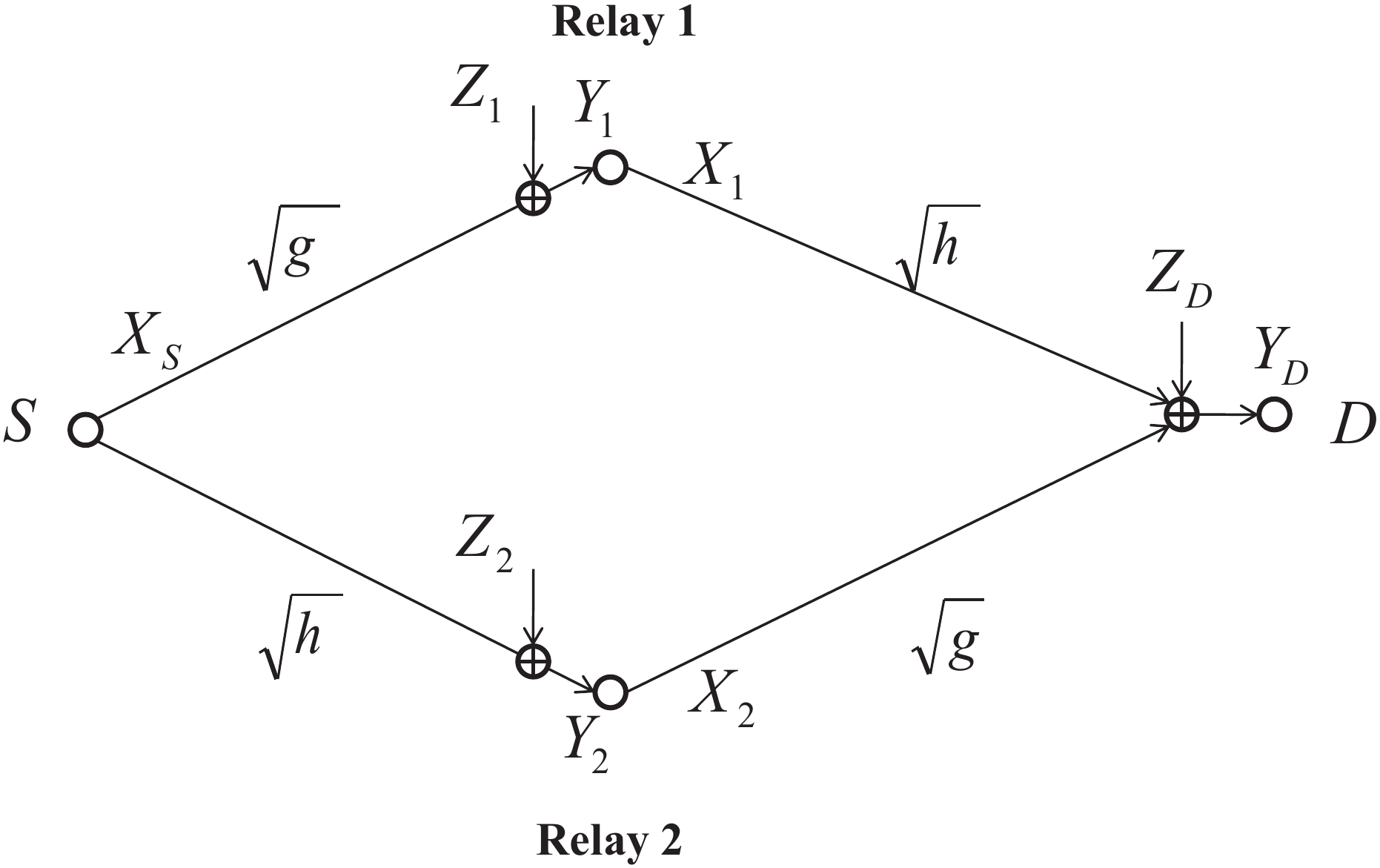}

\caption{Communication model for the asymmetric 2-relay diamond network.}
\label{fig:asym_channel}

\end{figure}

As we did in the previous subsections, we analyze the achievable rates with the Binary SPDF communication scheme in the low SNR limit for this network, and compare them to the rates that can be obtained with DF and BAF. In the asymmetric network setting we consider the extension of the BSPDF scheme given in \Cref{cor:BSPDF2}, where different relays decode different message levels. The function $f(i)$ specifies the maximum message level decoded by relay $i$. We first consider the case where both relays decode only the first message layer ($f(1)=f(2)=1$). Using \Tref{thm:BSPDF} or specializing \Cref{cor:BSPDF2} to this choice of $f$, and approximating the mutual information expressions for small $g$ and $h$ using \Cref{cor:I} we obtain the following result.

\begin{thm}
\label{thm:asym_BSPDF}
WLOG we assume $g \le h$. The BSPDF(f(1)=1, f(2)=1) communication scheme in the asymmetric 2-relay diamond network achieves a
rate $h\cdot R_{\text{BSPDF}(1,1)}(g/h) + O(h^2)$, where 
\begin{align*}
R_{\text{BSPDF}(1,1)}(g/h)  = \sup_{\kappa_1, \kappa_2, \beta} \Big( R_1 & +R_2  \Big) \\
 \text{subject to:} & \\
 R_1 & < 
\frac{1}{2 \ln 2} \left( g/h +\sqrt{g/h } \beta  \ln\left[\frac{\beta }{\sqrt{g/h }+\beta }\right] \right) \\
 R_1 & < 
\frac{\beta \sqrt{g/h } }{2\ln 2}\Bigg(-  \ln \left[1+ \kappa _1 + \kappa_2 g/h  + \frac{\sqrt{g/h } \left(\sqrt{\kappa _1}+\sqrt{\kappa _2}\right){}^2}{\beta }\right]+ \\
& \quad\quad\quad
\kappa _1 + \kappa_2 g/h  + \frac{\sqrt{g/h } \left(\sqrt{\kappa _1}+\sqrt{\kappa _2}\right){}^2}{\beta } \Bigg)
 \\
 R_2 & < 
\frac{\beta \sqrt{g/h } }{2 \ln 2}  \ln\left[1+\frac{\sqrt{g/h } \left(\sqrt{\kappa _1}+\sqrt{\kappa _2}\right){}^2}{\beta  \left(1+\kappa _1+ \kappa_2 g/h \right)}\right] \\
& \kappa_1(g/h + \beta \sqrt{g/h} ) < 1 , \ \ \
 \kappa_2(1 + \beta \sqrt{g/h}) < 1 \\
 & \kappa_1 \ge 0, \kappa_2 \ge 0, \beta > 0.
\end{align*}

\end{thm}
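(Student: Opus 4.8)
The plan is to derive this from \Tref{thm:BSPDF} by specializing it to the asymmetric network and then evaluating its three mutual-information constraints in the regime $g,h\to 0$ with $g/h$ fixed, using the Taylor expansions of \Lref{lem:b_h} and \Cref{cor:I}. First I would invoke \Tref{thm:BSPDF} (equivalently \Cref{cor:BSPDF2} with $f(1)=f(2)=1$) for the channel $g_1=g$, $g_2=h$, $h_1=h$, $h_2=g$; the assumed ordering $g_1\le g_2$ holds because $g\le h$, so Relay~1 is the bottleneck for decoding the binary codeword and the relevant constraint is $R_1<\I(B;Y_1)$. The theorem then gives the achievable region $R_1<\I(B;Y_1)$, $R_2<\I(X_S;Y_D\mid B)$, $R_1+R_2<\I(B,X_S;Y_D)=\I(B;Y_D)+\I(X_S;Y_D\mid B)$, subject to $\delta\sigma^2<1$ and $\delta\kappa_i(g_i\sigma^2+1)<1$ for $i=1,2$. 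Since $R_2$ enters only through $\I(X_S;Y_D\mid B)$, the sum-rate region is equivalently $R_1<\min\{\I(B;Y_1),\I(B;Y_D)\}$, $R_2<\I(X_S;Y_D\mid B)$, $R=R_1+R_2$, which is the form with the two $R_1$-bounds appearing in the statement.

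Next I would compute the conditional laws of $Y_1$ and $Y_D$. Substituting the channel, for $B=1$ one has $Y_D=\sqrt{gh}\,(\sqrt{\kappa_1}+\sqrt{\kappa_2})X_S+\sqrt{\kappa_1 h}\,Z_1+\sqrt{\kappa_2 g}\,Z_2+Z_D$, so conditionally on $B$ the variable $Y_1$ is a two-component Gaussian mixture with component variances $1$ (on $\{B=0\}$) and $1+g\sigma^2$ (on $\{B=1\}$), and $Y_D$ is a two-component mixture with variances $1$ and $1+S$, where $S=gh(\sqrt{\kappa_1}+\sqrt{\kappa_2})^2\sigma^2+\kappa_1 h+\kappa_2 g$. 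Writing $\I(B;Y_1)=\h(Y_1)-\h(Y_1\mid B)$ and similarly for $\I(B;Y_D)$, and applying \Lref{lem:b_h}/\Cref{cor:I} with the single small mixing weight $\delta$, I obtain $\I(B;Y_1)=\frac{\delta}{2\ln2}\bigl(g\sigma^2-\ln(1+g\sigma^2)\bigr)+O(\delta^2)$ and $\I(B;Y_D)=\frac{\delta}{2\ln2}\bigl(S-\ln(1+S)\bigr)+O(\delta^2)$, while $\I(X_S;Y_D\mid B)=\frac{\delta}{2}\log_2\!\bigl(1+gh(\sqrt{\kappa_1}+\sqrt{\kappa_2})^2\sigma^2/(1+\kappa_1 h+\kappa_2 g)\bigr)$ exactly, being a conditional AWGN capacity that is nonzero only on $\{B=1\}$.

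The key step is then to introduce the right asymptotic scaling: set $\delta=\beta\sqrt{gh}$, take the source power constraint tight so $\sigma^2=1/\delta=1/(\beta\sqrt{gh})$, and reparametrize the relay amplifications by $\kappa_i\mapsto\kappa_i/h$. Under this scaling $g\sigma^2=\sqrt{g/h}/\beta$, the quantity $S$ becomes $\kappa_1+\kappa_2(g/h)+\sqrt{g/h}(\sqrt{\kappa_1}+\sqrt{\kappa_2})^2/\beta$, the relay power constraints turn into exactly $\kappa_1(g/h+\beta\sqrt{g/h})<1$ and $\kappa_2(1+\beta\sqrt{g/h})<1$, and $\delta^2=\beta^2(g/h)h^2=O(h^2)$. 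Substituting into the three bounds above and using $\sqrt{gh}=h\sqrt{g/h}$, each bound equals $h$ times the corresponding expression in the statement plus $O(h^2)$ (with $\ln[\beta/(\sqrt{g/h}+\beta)]$ rewritten as $-\ln[1+\sqrt{g/h}/\beta]$). Maximizing $R_1+R_2$ over the feasible $(\beta,\kappa_1,\kappa_2)$ yields the achievable rate $h\cdot R_{\text{BSPDF}(1,1)}(g/h)+O(h^2)$.

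I expect the main obstacle to be twofold: discovering the correct simultaneous scaling $\delta\sim\beta\sqrt{gh}$, $\sigma^2\sim 1/\delta$, $\kappa_i\sim\kappa_i/h$ that converts \Tref{thm:BSPDF} into precisely the stated optimization problem (the rest being bookkeeping once it is found), and justifying that the $O(\delta^2)$ remainders of \Lref{lem:b_h} are uniformly $O(h^2)$. The latter requires checking that the remainder constant in \Lref{lem:b_h} depends only on the ratios of the mixture-component variances, namely $1+g\sigma^2$ and $1+S$ relative to $1$, which stay bounded as $h\to 0$ because $g\sigma^2$ and $S$ are fixed given $g/h$, $\beta$, $\kappa_1$, $\kappa_2$; hence the errors scale as $\delta^2=O(h^2)$. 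The remaining algebra — verifying that each substituted mutual information matches the exact expressions in the statement and confirming that the source and relay power constraints are met with equality or strict inequality as required — is routine.
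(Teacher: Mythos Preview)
Your proposal is correct and follows essentially the same approach as the paper: start from \Tref{thm:BSPDF}, rewrite the three constraints as two $R_1$ bounds via $\I(B,X_S;Y_D)=\I(B;Y_D)+\I(X_S;Y_D\mid B)$, then apply the scaling $\delta=\beta\sqrt{gh}$, $\sigma^2=1/\delta$, $\kappa_i\mapsto\kappa_i/h$ and evaluate each mutual information with \Cref{cor:I}. Your treatment is in fact slightly more careful than the paper's, since you explicitly address why the $O(\delta^2)$ remainders from \Lref{lem:b_h} are uniformly $O(h^2)$, a point the paper leaves implicit.
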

\begin{proof}
The proof of the theorem is given in Appendix~\ref{apd:asym_BSPDF}.
\end{proof}

Due to the asymmetry in the network, it may be beneficial to consider a variation of the BSPDF scheme in which the relay with larger source-relay channel gain attempts to completely decode the source messages. This corresponds to choosing $f(1)=1, f(2)=2$. As a baseline for comparison we consider a communication scheme based on BAF, which we call BAF+DF where the relay with larger source-relay channel gain decodes and forwards the source message (as in DF) while the other relay amplifies the received signal during the burst interval (as in BAF). The following proposition characterizes the achievable rates with BAF+DF.

\begin{prop}
WLOG we assume $g\le h$.
The BAF+DF relaying scheme can achieve a rate $h \cdot R_{\text{BAF+DF}}(g/h) + O(h^2)$,
where
\begin{align*}
R_{\text{BAF+DF}}(g/h)  = \sup_{\beta, \kappa}  \ \ R \ &\\
 \text{subject to:} & \\
 R & < 
\frac{ \beta \sqrt{g/h }  }{2\ln 2} \ln \left[1 + \frac{1}{ \beta \sqrt{g/h }  }\right] \\
 R & < 
\frac{ \beta  \sqrt{g/h } }{2 \ln 2} \ln \left[1 + \frac{g/h \left(1+
\sqrt{\kappa }\right){}^2}{\beta \sqrt{g/h }  \left(1 + 
\kappa \right)}\right] \\
 \beta &> 0, \ \kappa \ge 0, \ \kappa( \beta \sqrt{g/h} + 
g/h) < 1.
\end{align*}

\end{prop}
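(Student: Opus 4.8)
The plan is to turn the informal description of BAF+DF into an explicit block code over a burst window of duty cycle $\delta$, read off the two relevant rate constraints (reliable decoding at the decode-and-forward relay, and reliable decoding at the destination) together with the relay power constraint, and then perform the asymptotic substitution $\delta=\beta\sqrt{gh}$, $\kappa_1=\kappa/h$ and let $g,h\to 0$ with $g/h$ fixed.

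First I would set up the scheme. Fix $\delta\in(0,1]$ and a burst pattern consisting of $\delta n$ designated symbol times, known to all nodes and carrying no information. The source uses an i.i.d. $\cN(0,1/\delta)$ codebook of rate $R$ supported on the burst positions (zero elsewhere), so its average power is $1$. Relay $2$ — which has the larger source-relay gain, since $g_2=h\ge g=g_1$ — observes $Y_2=\sqrt{h}\,X_S+Z_2$ on the burst positions and decodes the message; this is reliable provided $R<\delta\cdot\tfrac12\log_2(1+h/\delta)$, the point-to-point link capacity scaled by the burst fraction. Having decoded, relay $2$ regenerates the source codeword $X_S$ and retransmits it with power $1/\delta$ on the burst positions, again meeting its unit average-power constraint. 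Relay $1$ does not decode: on the burst positions it forwards $X_1=\sqrt{\kappa_1}\,Y_1=\sqrt{\kappa_1}(\sqrt{g}\,X_S+Z_1)$, which respects its power constraint iff $\kappa_1(\delta+g)<1$ (exactly as in BAF).

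Next I would analyze the destination. On the burst positions it receives
\[
Y_D=\sqrt{h_1}\,X_1+\sqrt{h_2}\,X_2+Z_D=\bigl(\sqrt{h\kappa_1 g}+\sqrt{g}\bigr)X_S+\sqrt{h\kappa_1}\,Z_1+Z_D,
\]
so the DF relay's re-encoded signal and the AF relay's forwarded signal add coherently into $X_S$, giving an effective Gaussian channel with signal power $g(\sqrt{h\kappa_1}+1)^2/\delta$ and noise power $h\kappa_1+1$. Hence the destination can decode whenever $R<\delta\cdot\tfrac12\log_2\!\bigl(1+\tfrac{g(\sqrt{h\kappa_1}+1)^2}{\delta(h\kappa_1+1)}\bigr)$. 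A standard joint-typicality argument over the $\delta n$ burst symbols (or, equivalently, invoking the DF achievability already recalled for $\RDF$, with relay $1$ playing the role of an amplify-and-forward helper) makes both bounds simultaneously achievable with $P_e^{(n)}\to 0$; outside the burst every transmitter is silent. This yields an achievable rate for each feasible $(\delta,\kappa_1)$ with $\kappa_1(\delta+g)<1$, $0<\delta\le 1$, $\kappa_1\ge 0$. The asymptotic step then substitutes $\delta=\beta\sqrt{gh}$ and $\kappa_1=\kappa/h$: one checks $\delta/h=\beta\sqrt{g/h}$ and $h\kappa_1=\kappa$, so the relay-$2$ bound becomes $h\cdot\frac{\beta\sqrt{g/h}}{2\ln 2}\ln\!\bigl(1+\tfrac{1}{\beta\sqrt{g/h}}\bigr)$, the destination bound becomes $h\cdot\frac{\beta\sqrt{g/h}}{2\ln 2}\ln\!\bigl(1+\tfrac{(g/h)(1+\sqrt{\kappa})^2}{\beta\sqrt{g/h}(1+\kappa)}\bigr)$, and the power constraint becomes $\kappa(\beta\sqrt{g/h}+g/h)<1$ — precisely the program defining $R_{\text{BAF+DF}}(g/h)$. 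The remaining finite-size constraint $\delta=\beta\sqrt{gh}\le 1$ is non-binding for every fixed $\beta$ once $h$ is small, and the small discrepancy it (and the passage from a supremum to an attained point) introduces is absorbed into the $O(h^2)$ term; note that, unlike the BSPDF/TSPDF theorems, no Gaussian-mixture Taylor approximation of the type in \Cref{cor:I} is needed, since every signal here is a single Gaussian.

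The hard part will not be conceptual: it is (i) checking carefully that the decode-and-forward relay's regenerated codeword and the amplify-and-forward relay's forwarded signal combine coherently at the destination, so that the effective channel really exhibits the claimed beamforming gain and the destination's decoding of one message across the two paths is legitimate; and (ii) the bookkeeping of the power normalizations under the burst boost $1/\delta$ and the reparametrization $\kappa_1=\kappa/h$, so that the finite-SNR constraints map exactly onto the stated optimization up to the $O(h^2)$ correction.
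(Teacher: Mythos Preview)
Your proposal is correct and is precisely the direct derivation the paper has in mind; the paper does not give a separate proof of this proposition, but your computation coincides with the $R_2$-constraints in the proof of \Tref{thm:BSPDF+DF} (Appendix~\ref{apd:BSPDF+DF}) specialized to $R_1=0$, with the same substitutions $\delta=\beta\sqrt{gh}$ and $\kappa_1=\kappa/h$. Your observation that no Gaussian-mixture Taylor expansion is needed here is also right: once $R_1=0$ only the conditional terms $\I(X_S;Y_2\mid B)$ and $\I(X_S;Y_D\mid B)$ survive, and those are exact single-Gaussian computations, so the rate expressions you obtain are in fact exact after the reparametrization (the $O(h^2)$ slack is not used).
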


We now consider the BSPDF($f(1)=1,f(2)=2$) scheme. Using \Cref{cor:BSPDF2} and approximating the mutual information expressions in the low SNR limit with \Cref{cor:I} we we obtain the following result.

\begin{thm}
\label{thm:BSPDF+DF}
WLOG we assume $g \le h$.
The {\it BSPDF}($f(1)=1,f(2)=2$) relaying scheme can achieve a rate $h \cdot R_{\text{BSPDF(1,2)}}(g/h) + O(h^2)$, where 
\begin{align*}
R_{\text{BSPDF(1,2)}}(g/h)  = \sup_{\beta, \kappa} \Big( 
R_1 +& R_2 \Big) \\
 \text{subject to:} & \\
 R_1 & < 
\frac{1}{2\ln 2} \left( g/h + \beta  \sqrt{g/h }  \ln\left[\frac{\beta }{\sqrt{g/h }+\beta }\right] \right) \\
R_1 & < \frac{1}{2 \ln 2}
\Big((g/h)(1+
\sqrt{\kappa})^2 +
\kappa - 
\beta \sqrt{g/h} \ln \Big[
1 + 
\kappa + 
\frac{\sqrt{g/h} (1 + \sqrt{\kappa})^2}{\beta} 
\Big]
\Big)  \\
 R_2 & <  
\frac{\beta \sqrt{g/h }  }{2 \ln 2} \ln\left[1 + \frac{1}{\beta \sqrt{g/h }  }\right] \\
 R_2 & < 
\frac{\beta \sqrt{g/h }  }{2 \ln 2} \ln\left[1 + \frac{(g/h)\left(1+
\sqrt{\kappa }\right){}^2}{\beta  \sqrt{g/h } \left(1 + 
\kappa \right)}\right] \\
 \beta &> 0, \ \kappa \ge 0, \ \kappa(\beta\sqrt{g/h} + g/h ) < 1.
\end{align*}
\end{thm}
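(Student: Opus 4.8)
The plan is to apply Corollary~\ref{cor:BSPDF2} with the choice $f(1)=1$, $f(2)=2$, and then take the asymptotic low SNR limit using Corollary~\ref{cor:I}. Recall WLOG $g_1=h_2=g$, $g_2=h_1=h$, with $g\le h$. Under $f(1)=1,f(2)=2$, relay~1 (the one with the weaker source link, gain $\sqrt{g_1}=\sqrt{g}$) decodes only the binary codeword $B$ and amplifies $Y_1$ during the burst with gain $\sqrt{\kappa_1}$, while relay~2 (stronger source link $\sqrt{g_2}=\sqrt{h}$) fully decodes $X_S$ and retransmits it with the DF beamforming gain on its destination link of gain $\sqrt{h_2}=\sqrt{g}$. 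First I would write out the four rate constraints of Corollary~\ref{cor:BSPDF2} specialized to this $f$: $R_1<\I(B;Y_1)$ (the min over $f(j)=1$ is $j=1$), $R_2<\I(X_S;Y_2|B)$ (the min over $f(j)=2$ is $j=2$), $R_2<\I(X_S;Y_D|B)$, and $R_1+R_2<\I(B,X_S;Y_D)$, together with the power-feasibility conditions $\delta\sigma^2<1$ and $\delta\kappa_1(g_1\sigma^2+1)<1$.

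Next I would identify the asymptotic scalings. Following the same substitutions used in the proofs of Theorems~\ref{thm:asymptBSPDF} and \ref{thm:asym_BSPDF}, set $\delta=\beta\sqrt{g/h}\cdot h = \beta\sqrt{gh}$ (so $\delta\to0$ as $h\to0$ with $g/h$ fixed) and $\sigma^2=1/\delta$ so that $\delta\sigma^2$ is kept near its feasibility boundary after a suitable $\gamma$-type rescaling; also rescale $\kappa_1$ as $\kappa\cdot(\text{something})$ so the relay power constraint becomes the stated $\kappa(\beta\sqrt{g/h}+g/h)<1$. Then, using Corollary~\ref{cor:I} with the appropriate $q=2$ mixture (mass $1-\delta$ at variance $0$-ish and mass $\delta$ at the large variance), I would expand each mutual information to first order in $\delta$, i.e. to order $h$, and divide by $h$. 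The term $\I(B;Y_1)$ gives, via the first formula of Corollary~\ref{cor:I}, the expression $\tfrac{1}{2\ln2}(g/h+\beta\sqrt{g/h}\ln[\beta/(\sqrt{g/h}+\beta)])$ after simplification — matching the first $R_1$ bound. The constraint $R_2<\I(X_S;Y_2|B)$ becomes (via the second formula of Corollary~\ref{cor:I} applied to the point-to-point burst channel from source to relay~2, noting relay~2's source-link gain is $\sqrt{h}$) the bound $R_2<\tfrac{\beta\sqrt{g/h}}{2\ln2}\ln[1+1/(\beta\sqrt{g/h})]$. The destination sees the DF beamforming contribution $\sqrt{h_2}X_S=\sqrt{g}X_S$ from relay~2 plus the amplified $\sqrt{\kappa_1 h_1}Y_1=\sqrt{\kappa_1 h}Y_1$ from relay~1; combining their coherent gains inside the SNR gives the $(1+\sqrt{\kappa})^2$ factors, and expanding $\I(X_S;Y_D|B)$ and $\I(B,X_S;Y_D)$ with Corollary~\ref{cor:I} yields the remaining $R_2$ bound and the second $R_1$ bound respectively (the $R_1+R_2$ joint constraint, after subtracting the $R_2$ part, collapses to a pure $R_1$ bound in the first-order expansion, which is why only $R_1$ and $R_2$ constraints appear rather than a joint one).

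The main obstacle I expect is bookkeeping the coherent-combining SNR at the destination correctly: relay~1 contributes an amplified-signal-plus-amplified-noise term while relay~2 contributes a clean DF replica, so the effective channel conditioned on $B=1$ is $Y_D=(\sqrt{\kappa_1 h g}+\sqrt{g})X_S+\sqrt{\kappa_1 h}Z_1+Z_D$, and one must carefully track which gains multiply the signal (giving the $(\,1+\sqrt{\kappa}\,)^2$ numerator after the $h$-rescaling) versus which inflate the noise (giving the $1+\kappa$ denominator), then verify that the $O(h^2)$ remainder from the Taylor expansion in Corollary~\ref{cor:I} is genuinely negligible after dividing by $h$. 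Once the effective channel is pinned down, the rest is a direct substitution into Corollary~\ref{cor:I} and algebraic simplification of logarithms, entirely parallel to the proof of Theorem~\ref{thm:asym_BSPDF} in Appendix~\ref{apd:asym_BSPDF}; the details are routine and I would relegate them to the appendix.
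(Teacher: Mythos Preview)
Your proposal is correct and follows essentially the same route as the paper: apply \Cref{cor:BSPDF2} with $f(1)=1,f(2)=2$, set $\delta=\beta\sqrt{gh}$, $\sigma^2=1/\delta$, rescale $\kappa_1=\kappa/h$, replace the sum constraint $R_1+R_2<\I(B,X_S;Y_D)$ by the sufficient pair $R_1<\I(B;Y_D)$, $R_2<\I(X_S;Y_D|B)$ via the chain rule, and then expand each mutual information with \Cref{cor:I}. Your identification of the effective destination channel conditioned on $B=1$ is exactly what the paper uses, and the power constraint you arrive at matches; the only vagueness is that you left the $\kappa$ rescaling as ``$\kappa\cdot(\text{something})$'', whereas the paper fixes it as $\kappa=\kappa_1 h$, but you recover the right constraint regardless.
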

\begin{proof}
The proof of the theorem is given in Appendix~\ref{apd:BSPDF+DF}.
\end{proof}

\begin{figure}[tb]


\centering
\includegraphics[width=\plotsize\textwidth]{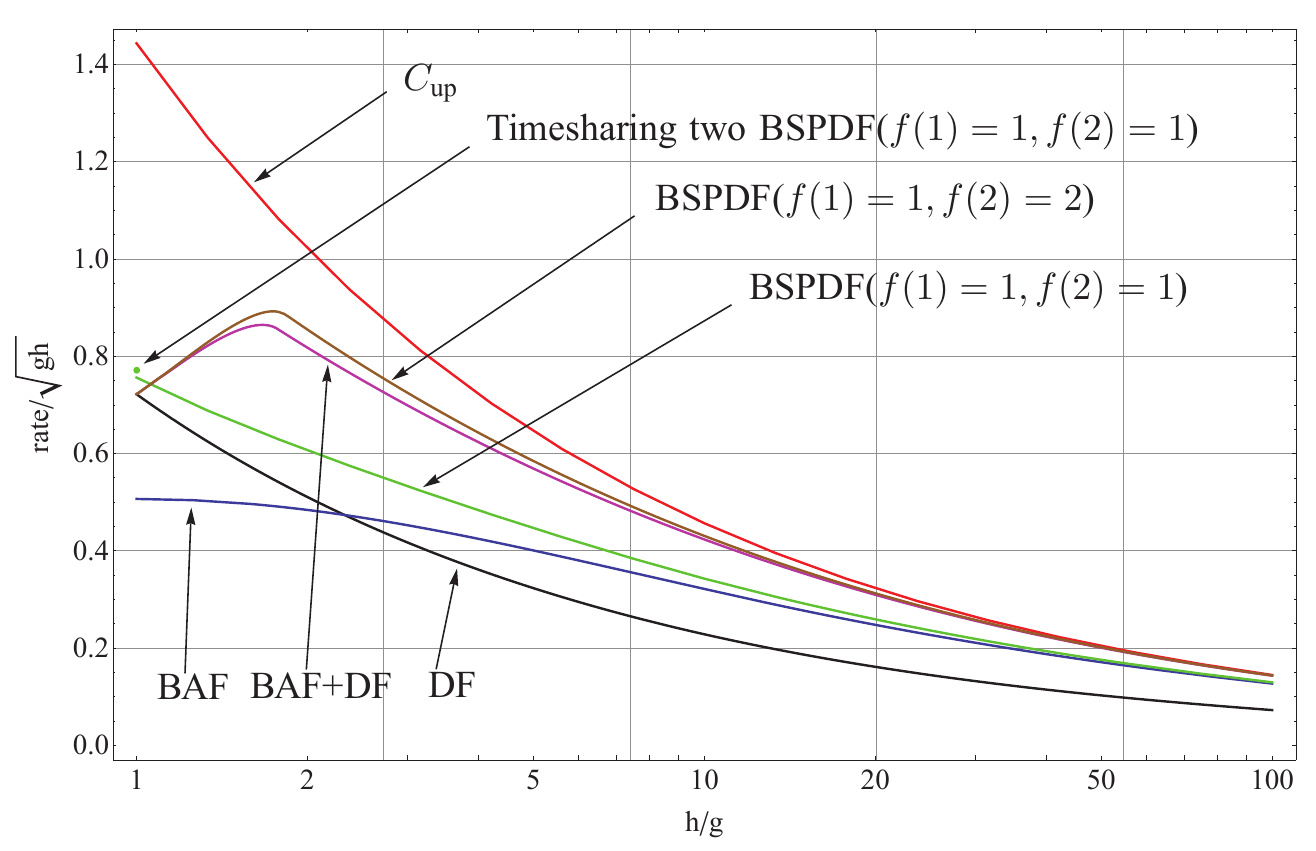}

\caption{Upper bound and achievable rates for various communications schemes in the asymmetric Gaussian diamond network. The schemes considered are decode-and-forward (DF), bursty-amplify-and-forward (BAF), bursty-amplify-and forward at relay 1 and decode-and-forward at relay 2 (BAF+DF), Binary SPDF (BSPDF($f(1)=1,f(2)=1$)), and the extension of BSPDF where relay 2 performs decode and forward by decoding both level codewords (BSPDF($f(1)=1,f(2)=2$)). The communication model is given in \Fref{fig:asym_channel}. Timesharing two BSPDF schemes of different parameters achieves better performance as depicted for the case $h=g$. The timesharing point is estimated from \Fref{fig:BSPDF}.
}
\label{fig:nonsymmetric}

\end{figure}

We numerically computed the achievable rates of DF, BAF, BSPDF($f(1)=1,f(2)=1$), BAF+DF, and BSPDF($f(1)=1, f(2)=2$) as a function of $h/g$ and the results are given in \Fref{fig:nonsymmetric}. Also shown in \Fref{fig:nonsymmetric} is the achievable rate resulting from timesharing BSPDF with itself for $h/g=1$, which can be obtained from the BSPDF timesharing curve of \Fref{fig:BSPDF} for the symmetric diamond relay network with $N=2$. We see that for $h/g=1$ and possibly for other values of $h/g$ (not shown), timesharing BSPDF achieves a larger rate than any of the pure schemes. 

\Fref{fig:nonsymmetric} shows that BSPDF($f(1)=1,f(2)=1$), BSPDF($f(1)=1,f(2)=2$) and timeshared BSPDF($f(1)=1,f(2)=1$) provide the largest achievable rates of all the schemes considered.

One may possibly obtain larger rates by increasing the alphabet size of the first level codebook as was done in \Tref{thm:asymptTSPDF} for the symmetric diamond relay network. The analysis should be similar to the one in \Tref{thm:asymptTSPDF} and we do not consider it in this paper.


\section{Bounds on the Minimum Energy-Per-Bit}
\label{sec:ebno}

In this section, we use the results of Sections~\ref{sec:achievable}-\ref{sec:analysis} to compute upper and lower bounds on the minimum energy-per-bit for the symmetric diamond relay network. An extension of Theorem~1 in \cite{EMZ06} shows that the minimum energy-per-bit of the diamond relay network can be characterized as
\begin{equation}
\label{eq:ebno_general}
\cE_b = \inf_{\gamma_1, \gamma_2 > 0} \inf_{P > 0 } \frac{(1+\gamma_1+\gamma_2)P}{C(P, \gamma_1 P, \gamma_2 P)},
\end{equation}
where $C(P,\gamma_1 P, \gamma_2 P)$ represents the capacity of the diamond relay network with average power constraint $P$, $\gamma_1 P$ and $\gamma_2 P$ on the source, relay~$1$ and relay~$2$ nodes respectively. For a symmetric diamond relay network, we know that by symmetry $C(P, \gamma_1 P, \gamma_2 P) = C(P, \gamma_2 P, \gamma_1 P)$. Thus, by timesharing we have
$$
C(P, (\gamma_1+\gamma_2)P/2, (\gamma_1+\gamma_2)P/2) \ge C(P, \gamma_1 P, \gamma_2 P).
$$
So for any $\gamma_1, \gamma_2 > 0$ if we set 
$\gamma = (\gamma_1 + \gamma_2)/ 2$ we can write
$$
\frac{(1+\gamma_1+\gamma_2)P}{C(P, \gamma_1 P, \gamma_2 P)}
\ge
\frac{(1+\gamma_1+\gamma_2)P}{C(P, (\gamma_1+\gamma_2)P/2, (\gamma_1+\gamma_2)P/2)}
=
\frac{(1+2 \gamma)P}{C(P, \gamma P, \gamma P)},
$$
and the minimum energy-per-bit for the symmetric diamond relay network becomes
\begin{equation}
\label{eq:enbo_1}
\cE_b = \inf_{\gamma > 0} \inf_{P > 0 } \frac{(1+ 2\gamma)P}
{C(P, \gamma P, \gamma P)}.
\end{equation}
The capacity function $C(P, \gamma P, \gamma P)$  satisfies the conditions of Lemma~1 in \cite{EMZ06} and therefore 
$\frac{(1+ 2\gamma)P} {C(P, \gamma P, \gamma P)}$ 
is a non-decreasing function in $P$, for all $P > 0$ and $\gamma > 0$. This implies that we can replace $\inf_{P>0}$ by $\lim_{P\to 0}$ in \eq{eq:enbo_1} obtaining
\begin{equation}
\label{eq:ebno}
\cE_b = \inf_{\gamma > 0} \lim_{P \to 0 } \frac{(1+ 2\gamma)P}
{C(P, \gamma P, \gamma P)}.
\end{equation}

A lower bound on $\cE_b$ can be obtained by upper bounding $C(P, \gamma P, \gamma P)$ in \eq{eq:ebno} using the cut-set upper bound of \Pref{prop:cut_set}. 
Since in Section~\ref{sec:upper_bound} we have assumed the average power constrains on source and relays are one and variances of the additive Gaussian noise at relays and destination are one, we need to scale the channel gains appropriately in \Pref{prop:cut_set}. We set the normalized channel gains to be $\tilde g = P g /N_0$ and $\tilde h = \gamma P h /N_0$, i.e. we assume $\sqrt{\tilde g}$ as the channel gain from source to relays and $\sqrt{\tilde h}$ as the channel gain from relays to destination in \Pref{prop:cut_set}, obtaining
\begin{equation}
\cE_b \ge \inf_{\gamma > 0} \lim_{P\rightarrow 0} \frac{(2\gamma+1)P}{\Cupper} = \left\{ 
\begin{array}{ll}
\frac{(g + 2h) N_0 \ln 2  }{gh}        & ; h/g \le  1/2  \\[1.4ex]
\frac{\sqrt{8} N_0 \ln 2 }{\sqrt{gh}} & ; 1/2 < h/g \le 2 \\[1.4ex]
\frac{(h + 2g) N_0 \ln 2 }{gh }        & ; 2 < h/g  
\end{array}
\right..
\end{equation}
A detailed derivation of this lower bound is given in Appendix~\ref{apd:ebno_upper_bound}.

In addition, we can find an upper bound on the minimum energy-per-bit of the diamond relay network by lower bounding the capacity in \eq{eq:ebno} with an achievable rate of any communication scheme. Since in Section~\ref{sec:achievable} and Section~\ref{sec:analysis} we have also assumed that the average power constraints at the source and relays are one and the variance of the additive Gaussian noise at relays and destination is one, we use the same normalization  for the channel gains to compute the achievable rates, i.e. we  set the normalized channel gains to $\tilde g = P g /N_0$ and $\tilde h = \gamma P h /N_0$.

If we use DF the achievable rate is given by \eq{eq:rdf3}, and \eq{eq:ebno} results in
\begin{equation}
\cE_b \le \inf_{\gamma > 0} \lim_{P\rightarrow 0} \frac{(2\gamma + 1)P }{\RDF} = 
\frac{(g + 2h)  N_0 \ln 2}{g h}.
\end{equation}
If we use the achievable rate of BAF given in \eq{eq:ebno} we get the following upper bound on the minimum energy-per-bit of the symmetric diamond relay network
\begin{equation}
\cE_b \le \inf_{\gamma>0} \lim_{P\rightarrow 0} \frac{(2\gamma+1)P}{\RBAF} = 
\inf_{\beta>0, \gamma>0}
\frac{2 (1+2\gamma) N_0  \ln 2}
{
\beta \sqrt{\gamma g h}
\ln \left[
1 + 
\frac{4 \sqrt{ \gamma g h}}
{\beta ( 2 \gamma h + g + \beta \sqrt{ \gamma g h} )}
\right]
}.
\end{equation}

Using the achievable rate of the BSPDF scheme we can upper bound the minimum energy-per-bit of the symmetric diamond relay network by 
\begin{align}
\cE_b  \le  \inf_{\gamma>0} \frac{(2\gamma+1)N_0}{\gamma h \RBinary(g/ (\gamma h))}. 
\end{align}

\begin{figure}[tb]


\centering
\includegraphics[width=\plotsize\textwidth]{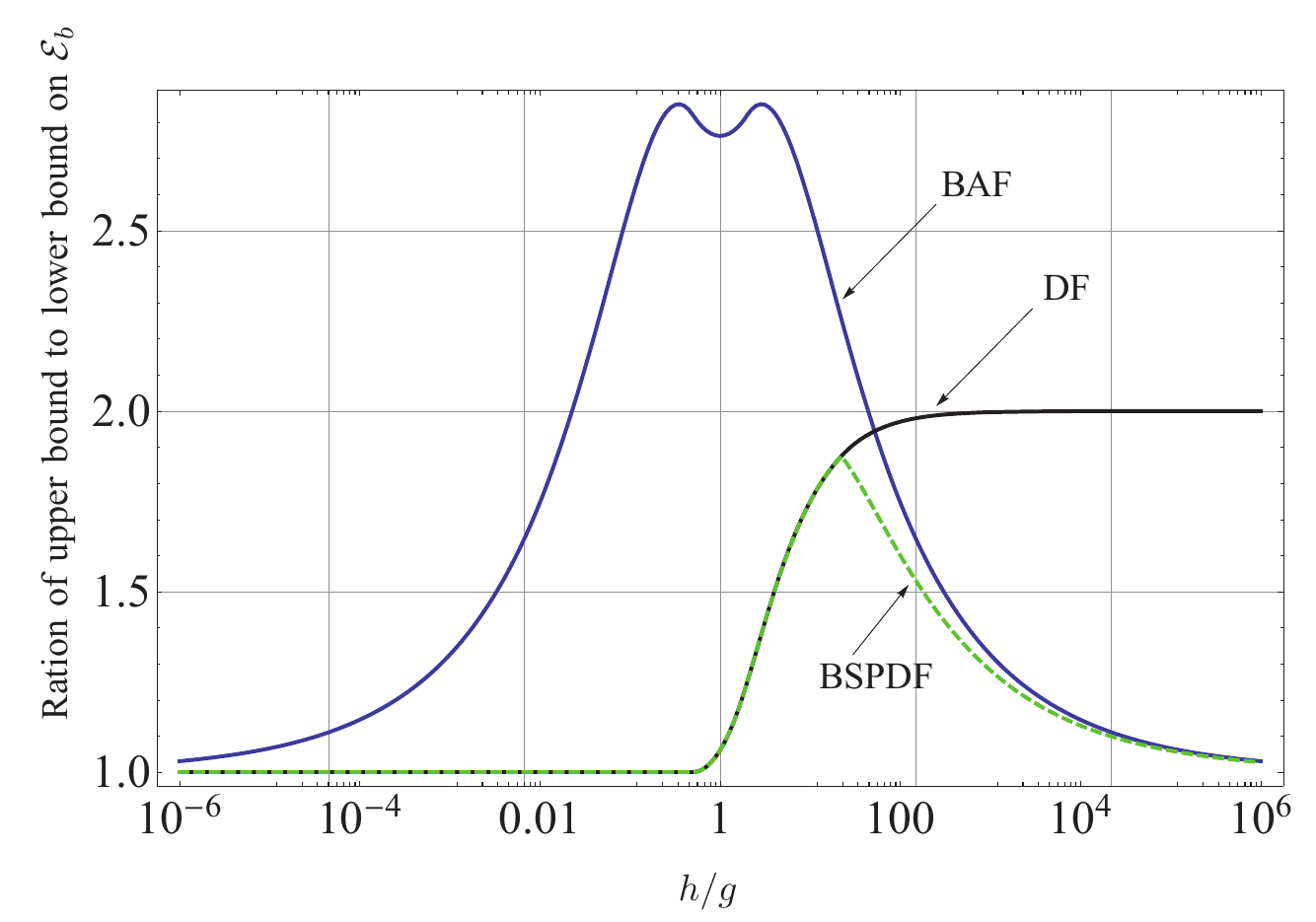}

\caption{Ratio of the upper bound to the lower bound on the minimum energy-per-bit for bursty-amplify-and-forward (BAF), decode-and-forward (DF), and Binary SPDF (BSPDF) communication schemes in the symmetric diamond relay network. The worst ratio of the upper bound to the lower bound on the minimum energy-per-bit for BAF is around 2.85, for DF is 2.0 and for BSPDF is 1.87.}
\label{fig:ebno_ratio}

\end{figure}

We have plotted the ratio of the upper bound to the lower bound on the minimum energy-per-bit of the symmetric diamond relay network for different ratios of $h/g$ in~\Fref{fig:ebno_ratio}. Note that the ratio between the upper and lower bounds on $\cE_b$ depends on $(g,h)$ through $h/g$. 

For BAF, the ratio of the upper bound to the lower bound on $\cE_b$ goes to one if the channel gains from relays to the destination are much larger or much smaller than the channel gains from the source to relays. The worst-case ratio of the upper bound to the lower bound on the minimum energy-per-bit for bursty-amplify-and-forward is approximately 2.85. 

In the DF communication scheme the upper bound and the lower bound for minimum energy-per-bit become identical when $h/g \le 1/2$; however, in the worst case the ratio of the upper bound to the lower bound on the minimum energy-per-bit approaches $2$ when  $h/g \rightarrow \infty$. 

Combining the achievable rates of BAF and DF results in a worst-case ratio of 1.94 between the upper and lower bounds on $\cE_b$. Thus, using previously known communication schemes, the minimum energy-per-bit for the symmetric diamond network can be characterized within a factor of 1.94.

The BSPDF communication scheme outperforms both the BAF and DF schemes with respect to the minimum energy-per-bit metric for any values of $h/g$. The worst-case ratio of the upper bound to the lower bound on the minimum energy-per-bit of the BSPDF scheme is $1.87$ and if $h/g \le 1/2$, as in DF, the ratio becomes equal to one. Thus, BSPDF allows to tighten the minimum energy-per-bit characterization for the symmetric diamond network to within a factor of 1.87.


\section{Conclusion}
\label{sec:conclusion}

We presented a new communication scheme, SPDF, for Gaussian $N$-relay  parallel networks
based on superposition coding and partial decoding at the relays. Superposition coding allows
to construct codebooks of different rates targeted to the decoding capabilities of the various relays. Partially decoding the source message at the relays allows to reduce noise amplification while relaying the message. The proposed scheme is a generalization of DF, AF, and BAF, and achieves rates at least as large as any of these schemes.

We have shown that by using low duty cycle codebooks and two codebook layers it is is possible to obtain strictly better performance than both DF and BAF in the low SNR regime. While most of the rate gains are readily obtained with the BSPDF scheme that uses binary symbols in the codewords of the first level codebook, some additional rate improvements results from either timesharing BSPDF schemes or increasing the alphabet size of the first level codebooks (e.g. TSPDF).

We presented a characterization of the achievable rates of BPSDF, TSPDF, and the general SPDF scheme in terms of single letter expressions. These expressions often require the computation of differential entropies of mixtures of Gaussian random variables, and as a result, do not lead in general to closed form solutions. For the low SNR asymptotic regime we provided closed form  formulas for the achievable rates of BSPDF and TSPDF, which for ease of presentation, we derived for channels with some symmetry properties.

Computation of the maximum achievable rates, even in the low SNR asymptotic regime, requires optimization over some design parameters. We evaluated the performance of BSPDF and TSPDF by numerically solving the corresponding optimization problems, and compared their performance to that of previously known schemes. We note that the requirement of optimizing over some design parameters is not unique to BSPDF or TSPDF, since the maximum achievable rate of BAF also requires solving an optimization problem which lacks a known closed form solution.

We showed that BSPDF and TSPDF achieve strictly larger rates than the previously known schemes in the low SNR regime for various types of parallel relay channels with some symmetry properties. The symmetry properties of these channels allowed us to present results for a wide range of channel gains in two-dimensional plots. We observed that as the number of relays $N$ grows, the performance gains of BSPDF over BAF decrease.

Finally we showed how our achievable rate characterization for BSPDF in the low SNR asymptotic limit can be used to obtain a minimum energy-per-bit upper bound. We used this new bound to obtain a characterization of the minimum energy-per-bit for the symmetric Gaussian diamond relay network within a constant multiplicative factor of 1.87, improving over a factor of 1.94 that results from using previously known communication schemes to derive the minimum energy-per-bit upper bound.

Since our analysis focused on the low SNR regime, our results do not allow us to improve the capacity characterizations within a constant factor or within a constant gap of Gaussian parallel relay networks. However, it may be possible to derive achievable rate lower bounds for BSPDF, TSPDF, or SPDF, for non-asymptotic SNR which may result in tighter than the currently known capacity characterizations within a constant multiplicative factor or additive gap. 


\appendices


\section{Proof of \Tref{thm:BSPDF}}
\label{app:proofBSPDF}

\Tref{thm:SPDF} requires $(U_1,\ldots, U_K)$ to be continuous random variables. We will use a continuous distribution for $(U_1, U_2)$ that emulates the discrete distribution for $B$ and the mixed distribution for $X_S|B$ given in the statement of the theorem.

In \Tref{thm:SPDF}, let $K=2$. We set $U_1 \sim \text{Uniform}[0,1]$, 
$U_2 | \{U_1 \in [0,1-\delta]\} \sim \cN(0,\ep)$ for some small $\ep > 0$ and $U_2 | \{U_1 \in (1-\delta, 1]\} \sim \cN(0,\sigma^2)$. Also let $f(1)=\ldots=f(N)=1$, and define
\begin{equation*}
w_i(U_1, V_i)  = \left\{
\begin{array}{ll}
0 & \text{ if } U_1 \in [0,1-\delta] \\
\sqrt{\kappa_i} V_i & \text{ if } U_1 \in (1-\delta, 1]
\end{array} \right.
\end{equation*}
for $i=1,\ldots,N$.

Assuming  $\delta \sigma^2 + (1-\delta)\ep < 1$ and
$\delta \kappa_i (g_i \sigma^2 +1) <1 $, then by \Tref{thm:SPDF} the rate $\tilde R = \tilde R_1 + \tilde R_2$ is achievable if
\begin{align}
\label{eq:cor_BSPDF_1}
\tilde R_1 & < \I(U_1; V_1), \\
\label{eq:cor_BSPDF_2}
\tilde R_2 & < \I(U_2; V_D| U_1), \\
\label{eq:cor_BSPDF_3}
\tilde R_1 + \tilde R_2 & < \I(U_1, U_2; V_D).
\end{align}

If we can show that
\begin{align}
\label{eq:lim1}
\lim_{\ep \to 0} \I(U_1; V_1) =& \I(B;Y_1), \\
\label{eq:lim2}
\lim_{\ep \to 0} \I(U_2; V_D| U_1) =& \I(X_S; Y_D| B),\\
\label{eq:lim3}
\lim_{\ep \to 0} \I(U_1, U_2; V_D) =& \I(B, X_S; Y_D),
\end{align}
then it would follow that for any rate $R=R_1+R_2$ that satisfies the conditions of \Tref{thm:BSPDF} we can find an $\ep>0$ small enough such that $R_1 \le \tilde R_1$ and $R_2 \le \tilde R_2$,
for some $\tilde R_1 $ and $\tilde R_2$ satisfying \eq{eq:cor_BSPDF_1}, \eq{eq:cor_BSPDF_2}, and \eq{eq:cor_BSPDF_3}. This, together with \Tref{thm:SPDF} would imply that the rate $R = R_1+R_2$ is achievable.

In order to prove \eq{eq:lim1} we will show that $\lim_{\ep\to 0}\h(V_1)=\h(Y_1)$. To that end, we note that 
$$
P_{V_1}(x) = \delta \frac{1}{\sqrt{2 \pi (g_1 \sigma^2+1)}} e^{-x^2/(2(g_1 \sigma^2+1))} + (1-\delta)\frac{1}{\sqrt{2 \pi (g_1\ep+1)}} e^{-x^2/(2(g_1\ep+1))}
$$ 
converges to $P_{Y_1}(x)$ as $\ep \to 0$ for every $x$. In addition, the continuity of the $\log_2(\cdot)$ function implies that $P_{V_1}(x) \log_2 (P_{V_1}(x))$ converges to $P_{Y_1}(x) \log_2 (P_{Y_1}(x))$ as $\ep \to 0$ for every $x$. 

Letting 
\begin{align*}
A=&\max\Big(\Big|\log_2 \frac{2\max(\delta, 1-\delta)}{\sqrt{2\pi}}\Big|,\Big|\log_2\frac{\min(\delta,1-\delta)}{\sqrt{2\pi(g_i \sigma^2+1)}}\Big|\Big), \\
g(x)=&\sqrt{\frac{2}{\pi}}\max(\delta,1-\delta)e^{-\frac{x^2}{2(g_1\sigma^2+1)}}\Big(A+\frac{x^2}{2 \ln 2}\Big),
\end{align*}
we have that $|P_{V_1}(x) \log_2 (P_{V_1}(x))| \le g(x)$ for $\ep < \sigma^2$ for all $x$. Since $g(x)$ is integrable, the dominated convergence theorem implies that $\lim_{\ep\to 0}\h(V_1)=\h(Y_1)$.

We then compute $\h(V_1|U_1)=\frac{1-\delta}{2} \log_2 (2 \pi e (g_1 \ep+1))+\frac{\delta}{2} \log_2 (2 \pi e (g_1 \sigma^2+1))$ implying that $\lim_{\ep \to 0} \h(V_1|U_1)= \frac{1-\delta}{2} \log_2 (2 \pi e )+\frac{\delta}{2} \log_2 (2 \pi e (g_1 \sigma^2+1))=\h(Y_1|B)$. As a result, (\ref{eq:lim1}) follows.


Finally, we note that the distribution of $V_D$ does not depend on $\ep$ so by direct calculation we obtain that $\I(U_2; V_D| U_1) = \I(X_S; Y_D| B)$ and  $\I(U_1, U_2; V_D) = \I(B, X_S; Y_D)$ for any $\ep$, and in particular, in the limits (\ref{eq:lim2}) and (\ref{eq:lim3}).
\hspace*{\fill}\QED


\section{Proof of \Tref{thm:TSPDF}}
\label{app:proofTSPDF}

In \Tref{thm:SPDF}, let $K=2$, $U_1 \sim \text{Uniform}[0,1]$. 
Set  $U_2|\{U_1 \in [0, 1-\delta_1-\delta_2] \} \sim \cN(0,\epsilon)$,  $U_2|\{U_1 \in (1-\delta_1-\delta_2,1-\delta_2] \} \sim \cN(0,\sigma_1^2)$, $U_2|\{ U_1 \in (1-\delta_2, 1] \} \sim \cN(0,\sigma_2^2)$,
 for $\delta_1, \delta_2 \ge 0$, $0< \delta_1+\delta_2 \le 1$, $0 < \epsilon < 1$ and $\delta_1 \sigma_1^2 + \delta_2 \sigma_2^2 + (1-\delta_1-\delta_2) \epsilon < 1$. Also let $f(1)=\ldots=f(N)=1$, and define
\[
w_i(U_1, V_i)=\left\{ 
\begin{array}{ll} 
0 & ; \ \text{if } U_1\in [0 , 1 - \delta_1 - \delta_2] \\ 
\sqrt{\kappa_{i1}} V_i & ; \ \text{if } U_1\in(1-\delta_1-\delta_2, 1-\delta_2] \\
\sqrt{\kappa_{i2}} V_i & ; \ \text{if } U_1\in(1-\delta_2, 1] 
\end{array}
\right.,
\]
with $\delta_1 \kappa_{i1}(g_i \sigma_1^2+1)+ \delta_2 \kappa_{i2}(g_i \sigma_2^2+1)< 1$, $i=1,\ldots,N$.
Then, $\tilde R= \tilde R_1 + \tilde R_2$ is achievable if
\begin{align}
\label{eq:cor_TSPDF_1}
\tilde R_1 & < \I(U_1; V_{1}), \\
\label{eq:cor_TSPDF_2}
\tilde R_2 & < \I(U_2; V_D|U_1), \\
\label{eq:cor_TSPDF_3}
\tilde R_1 + \tilde R_2 & < \I(U_1, U_2; V_{D}).
\end{align}
As was done in the proof of \Tref{thm:BSPDF}, by choosing an appropriate integrable upper bound for $|P_{V_1}(x) \log_2 (P_{V_1}(x))|$ that is independent of $\ep$, we can use the dominated convergence theorem and direct calculations to show that
\begin{align*}
\lim_{\ep \to 0} \I(U_1; V_1) =& \I(T;Y_1), \\
\lim_{\ep \to 0} \I(U_2; V_D| U_1) =& \I(X_S; Y_D| T),\\
\lim_{\ep \to 0} \I(U_1, U_2; V_D) =& \I(T, X_S; Y_D).
\end{align*}

Thus, for any rate $R=R_1+R_2$ that satisfies the conditions of \Tref{thm:TSPDF} we can find an $\ep>0$ small enough such that $R_1 \le \tilde R_1$ and $R_2 \le \tilde R_2$
for some $\tilde R_1 $ and $\tilde R_2$ satisfying \eq{eq:cor_TSPDF_1}, \eq{eq:cor_TSPDF_2}, \eq{eq:cor_TSPDF_3}, and as a result, by \Tref{thm:SPDF} the rate $R = R_1+R_2$ is achievable.
\hspace*{\fill}\QED


\section{Proof of \Tref{thm:SPDF}}
\label{app:SPDF}

The following definition and basic properties of typical sets can be found in \cite{Cover} and will be used in the proof of \Tref{thm:SPDF}.

\begin{defn}
For a finite collection of real random variables 
$(X_1, X_2, \ldots, X_k)$ with joint probability density function
$f_\bX(x_1, x_2, \ldots, x_k)$ let $\bS$ denote an ordered subset of the random variables and consider
$n$ independent identically distributed copies of $\bS$. The $\ep$-typical
set $\cA^{(n)}_\ep$ of the $n$-sequences $(\bx_1^n, \bx_2^n, \ldots, \bx_k^n)$
is defined as
\begin{align*}
 \cA^{(n)}_\ep(X_1, & X_2, \ldots, X_k) = \\
& \left\{
(\bx_1^n, \bx_2^n, \ldots, \bx_k^n) :
\left| -\frac{1}{n} \log f_\bX(\bs^n) - \h(\bS) \right| < \ep,
\forall \bS \subseteq \{X_1, X_2, \ldots, X_k \} \right\}.
\end{align*}
\end{defn}

\begin{prop}
\label{prop:A1}
For any $\ep >0 $, and for sufficiently large $n$, 
\begin{equation}
\Pr \left\{ \cA_{\ep}^{(n)} (\bS) \right\} \ge 1 - \ep, \ 
\forall \bS \subseteq \{X_1, X_2, \ldots, X_k \} .
\end{equation}
\end{prop}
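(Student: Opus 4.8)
The plan is to reduce the statement to the weak law of large numbers, applied separately to each of the finitely many sub-tuples of $\bS$, and then combined by a union bound. Fix an arbitrary ordered subset $\bS\subseteq\{X_1,X_2,\ldots,X_k\}$ and write $\bS[1],\ldots,\bS[n]$ for the $n$ i.i.d.\ copies of $\bS$ underlying the construction. For each sub-tuple $\bT\subseteq\bS$ with joint density $f_\bT$, the variables $-\log f_\bT(\bT[1]),\ldots,-\log f_\bT(\bT[n])$ are i.i.d.\ and real with common mean $\E[-\log f_\bT(\bT)]=\h(\bT)$, which is finite under the standing assumption that the differential entropies appearing in the problem are finite. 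The weak law of large numbers then gives $-\frac{1}{n}\log f_\bT(\bT^n)=-\frac{1}{n}\sum_{t=1}^n\log f_\bT(\bT[t])\to\h(\bT)$ in probability, so for every $\eta>0$ there is an integer $n_\bT(\eta)$ with $\Pr\{\left|-\frac{1}{n}\log f_\bT(\bT^n)-\h(\bT)\right|\ge\ep\}<\eta$ for all $n\ge n_\bT(\eta)$.

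Next I would set $\eta=\ep\,2^{-|\bS|}$ and take $n\ge\max_{\bT\subseteq\bS}n_\bT(\eta)$, a maximum over the finitely many (at most $2^{|\bS|}$) sub-tuples of $\bS$. A union bound over these $\bT$ bounds by $\ep$ the probability that at least one of the defining inequalities $\left|-\frac{1}{n}\log f_\bT(\bT^n)-\h(\bT)\right|<\ep$ fails; since the complement of that event is exactly $\{(\bS^n)\in\cA_\ep^{(n)}(\bS)\}$, this yields $\Pr\{\cA_\ep^{(n)}(\bS)\}\ge1-\ep$. Finally, to obtain a single threshold $n$ that works simultaneously for every $\bS$, I would take the maximum of the above thresholds over all $\bS\subseteq\{X_1,\ldots,X_k\}$; this is legitimate because $k$ is fixed, so there are only finitely many such $\bS$.

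The only delicate point --- the ``hard part'', such as it is --- is that each $\h(\bT)$ must be finite for the weak law to apply; this is already implicit in writing down $\cA_\ep^{(n)}$, and I would simply flag it. Should one wish to avoid any moment hypothesis beyond finiteness of the mean, Khinchin's form of the weak law handles the case in which $-\log f_\bT(\bT)$ has infinite variance. Everything else is the routine pairing of the weak law with a union bound over finitely many events, exactly as in the proof of the asymptotic equipartition property in \cite{Cover}.
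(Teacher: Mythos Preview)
Your argument is correct and is precisely the standard AEP argument: weak law of large numbers applied to the i.i.d.\ sequence $-\log f_\bT(\bT[t])$ for each sub-tuple $\bT\subseteq\bS$, followed by a union bound over the finitely many $\bT$. The paper, however, does not supply its own proof of this proposition; it is introduced with the sentence ``The following definition and basic properties of typical sets can be found in \cite{Cover}'' and is simply quoted as a known fact. So there is nothing to compare against beyond noting that your proof is the one a reader would find by following that citation.
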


\begin{prop}
\label{prop:A2}
Let $\cA_{\ep}^{(n)}(X_1,X_2)$ be the typical set corresponding to the probability
density function $f_{X_1,X_2}(x_1, x_2)$ and let $(\tilde{X}_1,\tilde{X}_2)$ have probability density function
$$
f_{\tilde{X}_1,\tilde{X}_2}(x_1, x_2) = f_{X_1}(x_1) f_{X_2}(x_2).
$$
Then
$$
\Pr\{ (\tilde{\bx}_1^n , \tilde{\bx}_2^n ) \in \cA_{\ep}^{(n)}(X_1,X_2) \} \doteq 2^{- n ( \I(X_1; X_2) \pm 3\ep)}.
$$
\end{prop}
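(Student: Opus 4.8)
The plan is to run the standard joint asymptotic-equipartition argument, with Lebesgue volume in the role that cardinality plays in the finite-alphabet version. By definition, $(\bx_1^n,\bx_2^n)\in\cA_\ep^{(n)}(X_1,X_2)$ exactly when
\[
\Big|-\tfrac{1}{n}\log f_{X_1}(\bx_1^n)-\h(X_1)\Big|<\ep,\qquad
\Big|-\tfrac{1}{n}\log f_{X_2}(\bx_2^n)-\h(X_2)\Big|<\ep,\qquad
\Big|-\tfrac{1}{n}\log f_{X_1,X_2}(\bx_1^n,\bx_2^n)-\h(X_1,X_2)\Big|<\ep
\]
all hold, which is equivalent to the pointwise sandwiches $2^{-n(\h(X_1,X_2)+\ep)}\le f_{X_1,X_2}(\bx_1^n,\bx_2^n)\le 2^{-n(\h(X_1,X_2)-\ep)}$ and $2^{-n(\h(X_j)+\ep)}\le f_{X_j}(\bx_j^n)\le 2^{-n(\h(X_j)-\ep)}$, $j=1,2$, holding on $\cA_\ep^{(n)}(X_1,X_2)$. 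Write $V_n$ for the Lebesgue volume of this set.

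First I would bound $V_n$. Integrating the lower sandwich on $f_{X_1,X_2}$ over $\cA_\ep^{(n)}(X_1,X_2)$ and using $\Pr\{\cA_\ep^{(n)}(X_1,X_2)\}\le 1$ gives $V_n\le 2^{n(\h(X_1,X_2)+\ep)}$; integrating the upper sandwich and invoking Proposition~\ref{prop:A1} (which gives $\Pr\{\cA_\ep^{(n)}(X_1,X_2)\}\ge 1-\ep$ for $n$ large) gives $V_n\ge(1-\ep)\,2^{n(\h(X_1,X_2)-\ep)}$.

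Next I would write the target probability as an integral over the typical set under the product law,
\[
\Pr\{(\tilde{\bx}_1^n,\tilde{\bx}_2^n)\in\cA_\ep^{(n)}(X_1,X_2)\}=\int_{\cA_\ep^{(n)}(X_1,X_2)}f_{X_1}(\bx_1^n)\,f_{X_2}(\bx_2^n)\,d\bx_1^n\,d\bx_2^n,
\]
and bound the integrand by the marginal sandwiches, $2^{-n(\h(X_1)+\h(X_2)+2\ep)}\le f_{X_1}(\bx_1^n)f_{X_2}(\bx_2^n)\le 2^{-n(\h(X_1)+\h(X_2)-2\ep)}$ on $\cA_\ep^{(n)}(X_1,X_2)$. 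Multiplying by the two volume bounds and using $\I(X_1;X_2)=\h(X_1)+\h(X_2)-\h(X_1,X_2)$ yields
\[
(1-\ep)\,2^{-n(\I(X_1;X_2)+3\ep)}\le\Pr\{(\tilde{\bx}_1^n,\tilde{\bx}_2^n)\in\cA_\ep^{(n)}(X_1,X_2)\}\le 2^{-n(\I(X_1;X_2)-3\ep)},
\]
and since $\tfrac{1}{n}\log(1-\ep)\to 0$ the prefactor is absorbed into the $\doteq$ notation once $n$ is large, which is exactly the asserted statement.

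I expect the only delicate point to be the transition from probability mass functions to densities: one must make sure the volume argument is legitimate in the continuous setting — that $V_n$ is finite and that the lower volume bound $(1-\ep)2^{n(\h(X_1,X_2)-\ep)}$ is genuinely justified through Proposition~\ref{prop:A1} applied to the joint typical set — and one must track the $\ep$-bookkeeping carefully so that the exponent collapses to precisely $\I(X_1;X_2)\pm 3\ep$ rather than to a larger multiple of $\ep$.
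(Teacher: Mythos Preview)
The paper does not actually prove this proposition: it is stated in Appendix~\ref{app:SPDF} as one of several ``basic properties of typical sets'' quoted from \cite{Cover} and used without proof. Your argument is correct and is exactly the standard joint-AEP proof given in that reference (volume bounds on the typical set from the sandwich on $f_{X_1,X_2}$, then integration of the product density $f_{X_1}f_{X_2}$ over the set using the marginal sandwiches), so there is nothing to compare beyond noting that your write-up supplies the details the paper omits.
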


\begin{prop}
\label{prop:A3}
For the probability
density function $f_{X_1,X_2,X_3}(x_1, x_2, x_3)$
let $\cA_{\ep}^{(n)}(X_1,X_2,X_3)$ be the typical set and let $(\tilde{X}_1,\tilde{X}_2, \tilde{X}_3)$ have probability density function
$$
f_{\tilde{X}_1,\tilde{X}_2,\tilde{X}_3}(x_1, x_2, x_3) = f_{X_1}(x_1) f_{X_2|X_1}(x_2|x_1) f_{X_3|X_1}(x_3|x_1).
$$
Then
$$
\Pr\{ (\tilde{\bx}_1^n , \tilde{\bx}_2^n,  \tilde{\bx}_3^n) \in \cA_{\ep}^{(n)}(X_1,X_2, X_3) \} \doteq 2^{- n ( \I(X_2; X_3|X_1) \pm 4\ep)}.
$$
\end{prop}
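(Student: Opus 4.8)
The plan is to estimate $\Pr\{(\tilde{\bx}_1^n,\tilde{\bx}_2^n,\tilde{\bx}_3^n)\in\cA_\ep^{(n)}(X_1,X_2,X_3)\}$ by writing it as the integral of the product density over the typical set and then sandwiching both the integrand and the volume of $\cA_\ep^{(n)}$ using the defining inequalities of the typical set, mirroring the proof of \Pref{prop:A2}. First I would use that, since $f_{\tilde{X}_1,\tilde{X}_2,\tilde{X}_3}(x_1,x_2,x_3)=f_{X_1}(x_1)f_{X_2|X_1}(x_2|x_1)f_{X_3|X_1}(x_3|x_1)$ and $f_{X_1}f_{X_2|X_1}=f_{X_1,X_2}$, the product density of $n$ i.i.d.\ copies factors as
\[
f_{\tilde{X}_1,\tilde{X}_2,\tilde{X}_3}(\bx_1^n,\bx_2^n,\bx_3^n)=f_{X_1,X_2}(\bx_1^n,\bx_2^n)\,f_{X_3|X_1}(\bx_3^n|\bx_1^n).
\]
On any triple lying in $\cA_\ep^{(n)}$, the first factor equals $2^{-n(\h(X_1,X_2)\pm\ep)}$ by typicality of the subset $\{X_1,X_2\}$, and, writing $f_{X_3|X_1}=f_{X_1,X_3}/f_{X_1}$, the second factor equals $2^{-n(\h(X_3|X_1)\pm 2\ep)}$ by typicality of the subsets $\{X_1,X_3\}$ and $\{X_1\}$; hence the integrand is $2^{-n(\h(X_1,X_2)+\h(X_3|X_1)\pm 3\ep)}$ uniformly over $\cA_\ep^{(n)}$.

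Second I would bound the volume of the typical set. The upper bound $\mathrm{Vol}(\cA_\ep^{(n)})\le 2^{n(\h(X_1,X_2,X_3)+\ep)}$ holds for all $n$, since $\int_{\cA_\ep^{(n)}}f_{X_1,X_2,X_3}\le 1$ while the integrand there is at least $2^{-n(\h(X_1,X_2,X_3)+\ep)}$; and for $n$ large the matching lower bound $\mathrm{Vol}(\cA_\ep^{(n)})\ge(1-\ep)2^{n(\h(X_1,X_2,X_3)-\ep)}$ follows from $\Pr\{\cA_\ep^{(n)}\}\ge 1-\ep$ (\Pref{prop:A1}) together with the pointwise bound $f_{X_1,X_2,X_3}\le 2^{-n(\h(X_1,X_2,X_3)-\ep)}$ on $\cA_\ep^{(n)}$.

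Third I would combine the two estimates: the target probability is the integral of $f_{\tilde{X}_1,\tilde{X}_2,\tilde{X}_3}$ over $\cA_\ep^{(n)}$, hence it lies between $(1-\ep)2^{n(\h(X_1,X_2,X_3)-\ep)}\,2^{-n(\h(X_1,X_2)+\h(X_3|X_1)+3\ep)}$ and $2^{n(\h(X_1,X_2,X_3)+\ep)}\,2^{-n(\h(X_1,X_2)+\h(X_3|X_1)-3\ep)}$. The chain rule gives $\h(X_1,X_2)+\h(X_3|X_1)-\h(X_1,X_2,X_3)=\h(X_2|X_1)+\h(X_3|X_1)-\h(X_2,X_3|X_1)=\I(X_2;X_3|X_1)$, and since $\frac{1}{n}\log(1-\ep)\to 0$ the prefactor is subexponential, so the exponent is $-(\I(X_2;X_3|X_1)\pm 4\ep)$ for $n$ large, which is precisely the claimed $\doteq$ relation.

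The step I expect to be the main obstacle is the $\ep$-bookkeeping: landing on exactly $4\ep$ rather than a larger constant forces one to use the factorization $f_{X_1}f_{X_2|X_1}=f_{X_1,X_2}$ (costing only $\ep$) instead of bounding $f_{X_1}$ and $f_{X_2|X_1}$ separately, while charging the conditional $f_{X_3|X_1}$ a full $2\ep$ and the volume estimate one more $\ep$; one must also be careful that the volume lower bound and the $(1-\ep)$ prefactor are only available for $n$ sufficiently large, which is consistent with the way $\doteq$ is used elsewhere in the paper. The remaining steps are routine substitutions of the typical-set inequalities.
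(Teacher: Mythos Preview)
Your proof is correct and is the standard argument for this conditional joint-typicality lemma. The paper itself does not supply a proof of \Pref{prop:A3}: it lists Propositions~\ref{prop:A1}--\ref{prop:A3} as ``basic properties of typical sets'' that ``can be found in \cite{Cover}'', so there is nothing to compare against beyond noting that your argument is exactly the Cover--Thomas style derivation the citation points to. Your $\ep$-accounting (one $\ep$ for $f_{X_1,X_2}$, two for $f_{X_3|X_1}=f_{X_1,X_3}/f_{X_1}$, one for the volume) is the right way to land on $4\ep$, and your handling of the $(1-\ep)$ prefactor via the $\doteq$ convention is consistent with the paper's definition of that notation.
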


\begin{proof}(of \Tref{thm:SPDF})

We communicate over $b+1$ blocks of length $n$ each. In the first block, the source sends a message in $\{1,\ldots, \lfloor 2^{nR}\rfloor\}$ which is received by all relays. The relays remain silent during this first block. In block $m$, $1<m\le b$, the source encodes a new message in $\{1,\ldots, \lfloor 2^{nR}\rfloor\}$ which is received by all the relays, while the relays encode signals based on the signals received in block $m-1$. Finally, in block $b+1$ the source remains silent while the relays encode signals based on the signals received in block $b$. Assuming that the destination successfully decodes all the messages, the communication rate is $b/(b+1) R$ for large $n$, which tends to $R$ as $b\to \infty$. We next describe and analyze the coding scheme used in each block of length $n$. To simplify notation, we use index $t$ to represent the time index in block $m$ for the signal transmitted at the source and received at the relays, and the time index in block $m+1$ for the signal transmitted by the relays and received at the destination, for $m=1,\ldots, b$.

{\em Code construction:}
We generate a codebook $\cC_1$ with $2^{nR_1}$ random i.i.d. codewords chosen according to $P_{U_1}$. For each $\bu_{1,i_1}^n \in \cC_1$, we generate a codebook $\cC_{2,i_1}$ with $2^{nR_2}$ codewords with symbols $u_{2,i_1,i_2}[t]$ chosen independently according to $P_{U_2|U_1=u_{1,i_1}[t]}$. Proceeding similarly, for each $\bu_{1,i_1}^n \in \cC_1$, $\bu_{2,i_1,i_2}^n \in \cC_{2,i_1}$,..., $\bu_{k-1,i_1,\ldots,i_{k-1}}^n \in \cC_{k-1, i_1,i_2,\ldots,i_{k-2}}$ we generate a codebook $\cC_{k,i_1,\ldots,i_{k-1}}$ with $2^{nR_k}$ codewords with symbols $u_{k,i_1,\ldots,i_{k}}[t]$ chosen independently according to $P_{U_K|U_1=u_{1,i_1}[t],\ldots, U_{k-1}=u_{k-1,i_1,\ldots,i_{k-1}}[t]}$, for $k=3,\ldots,K$.

Each message $W  \in \{1,2,\ldots, M\}$ is represented by a $K$-tuple $(i_1,\ldots, i_K)$ where $i_k \in \{1,\ldots, \lfloor 2^{n R_k}\rfloor\}$. The source transmits message $(i_1,\ldots, i_K)$ by sending the codeword $\bu_{K,i_1,\ldots,i_K}^n \in \cC_{K,i_1,\ldots,i_{K-1}}$. That is,

$$
\bX_S^n=\text{Enc}_S(W) = \text{Enc}_S(i_1,\ldots,i_K)=(u_{K,i_1,\ldots,i_K}[1],\ldots,u_{K,i_1,\ldots,i_K}[n]).
$$

{\em Relay processing:}
At the end of each block $m\in\{1,\ldots, b\}$, relay $j$ attempts to decode the codewords $\bu_{1,i_1}^n$, $\bu_{2,i_1,i_2}^n$, ..., $\bu_{f(i),i_1,i_2,\ldots,i_{f(j)}}$ where $f(j)$ (as defined above) indicates the highest message level that the relay attempts to decode. Decoding consists of finding the unique set of codewords $\{\bu_{1,\hat{i}_1}^n, \bu_{2,\hat{i}_1,\hat{i}_2}^n, \ldots, \bu_{f(j),\hat{i}_1,\ldots,\hat{i}_{f(j)}}^n\}$ in codebooks $\cC_1, \cC_{2,\hat{i}_1},\ldots,\cC_{f(j),\hat{i}_1,\ldots,\hat{i}_{f(j)-1}}$ that are jointly typical with $\bY_j^n$, declaring an error if it finds zero or more than one such sets. 

Relay $j$ encodes the signal to be sent in transmission block $m+1$ based on the received signal $\bY_j^n$ and the decoded codewords $\{\bu_{k,i_1,\ldots,i_k}^n\}_{k=1}^{f(j)}$ in block $m$ as follows
\[
X_j[t]=\text{Enc}_{j}(\bY_j^n)[t]= w_{j}(u_{1,\hat{i}_1}[t],\ldots,u_{f(j),\hat{i}_1,\ldots,\hat{i}_{f(j)}}[t],Y_j[t]), \text{ for } t=1,\ldots, n.
\] 

{\em Decoding at the destination:}
The destination attempts to decode the codewords $\bu_{1,i_1}^n$, $\bu_{2,i_1,i_2}^n$, ..., $\bu_{K,i_1,i_2,\ldots,i_{K}}$ from the received signal in each block $\bY_D^n$ using joint typicality decoding as explained for the relays. Decoding consists of finding the unique set of codewords $\{\bu_{1,\hat{i}_1}^n, \bu_{2,\hat{i}_1,\hat{i}_2}^n, \ldots, \bu_{K,\hat{i}_1,\ldots,\hat{i}_K}^n\}$ in codebooks $\cC_1, \cC_{2,\hat{i}_1},\ldots,\cC_{K,\hat{i}_1,\ldots,\hat{i}_{K-1}}$ that are jointly typical with $\bY_D^n$, declaring an error if it finds zero or more than one such sets. The decoded message in each block is given by $\text{Dec}(\bY_D^n)=(\hat{i}_1,\ldots,\hat{i}_{K})$.

{\em Analysis of the probability of error:}
We find the average probability of error over all the random code constructions.
Due to the symmetry of the random code construction, the conditional 
probability of error is independent of the transmitted messages,
i.e. $P_e^{(n)} = \Pr\{ \text{Dec}(\bY_D^n) \neq (1,\ldots,1) \ |  \ \text{message }\ (1,\ldots,1) \ \text{was sent} \}$, and so WLOG we assume that $\bu_{K,1,1,\ldots,1}^n$
has been sent.

There is an error whenever some relay or the destination declares an error. 
For any given block, relay $j$ declares an error if the correct codewords $\{\bu_{1,1}^n,\ldots,\bu_{f(j),1,\ldots,1}^n\}$ are not jointly typical with $\bY_j^n$, or if incorrect codewords $\{\bu_{1,i_1}^n,\ldots,\bu_{f(j),i_1,\ldots,i_{f(j)}}^n\}$ are jointly typical with $\bY_j^n$ for some $(i_1,\ldots,i_{f(j)})\ne (1,\ldots, 1)$. Similarly, the destination declares an error if the correct codewords $\{\bu_{1,1}^n,\ldots,\bu_{K,1,\ldots,1}^n\}$ are not jointly typical with $\bY_D^n$, or if incorrect codewords $\{\bu_{1,i_1}^n,\ldots,\bu_{K,i_1,\ldots,i_{K}}^n\}$ are jointly typical with $\bY_D^n$ for some $(i_1,\ldots,i_{K})\ne (1,\ldots, 1)$.

In addition, to guarantee the power constraints at the source and the relays be satisfied, if the average power of the signal to be transmitted at the source is larger than one, or if the average power of the signal to be transmitted at some relay is larger than one, we declare an error and do not transmit the corresponding signals. 
 
Define the following events
\begin{align*}
E^R &= \{ \text{relay } j \text{ successfully decodes all messages up to level } f(j), \text{ for } j=1,\ldots, N \} \\
E^D_{i_1 i_2 \ldots i_K} &=  \left\{ (\bu_{1,i_1}^n, \bu_{2,i_1,i_2}^n, \ldots, \bu_{K,i_1,i_2,\ldots,i_K}^n, \bY_D^n) \in A^{(n)}_\ep(U_1,\ldots,U_K, V_{D}) 
| E^R \right\} \\
E^{R_j}_{i_1\ldots i_{f(j)}} &= \left\{ (\bu_{1,i_1}^n,\ldots, \bu_{f(j),i_1,\ldots,i_{f(j)}}^n, \bY_j^n ) \in A^{(n)}_\ep(U_1,\ldots,U_{f(j)}, V_{j})  \right\}, \text{ for } j=1,\ldots,N \\
E^{S}_p &=  \left\{ \frac{1}{n} \sum_{t=1}^n (u_{K,1,\ldots,1})[t])^2 \ge 1 \right\} \\
E^{R_j}_p & = \left\{ \frac{1}{n} \sum_{t=1}^n ( w_{j}(u_{1,1}[t],\ldots,u_{f(j),1,\ldots,1}[t],Y_j[t])^2 \ge 1 \right\} , \text{ for } j=1,\ldots,N.
\end{align*}
Then the probability of error can be upper bounded as follows using the union bound, 
\begin{align*}
P_e^{(n)}  = &\Pr(\text{Error} | E^R) \Pr(E^R) + \Pr(\text{Error} | (E^R)^c ) \Pr((E^R)^c) \\
\le & \Pr(\text{Error} | E^R) + \Pr((E^R)^c) \\
 = & 
\Pr \left[
(E^D_{1\ldots 1})^c \cup  
\cup_{(i_1 \ldots i_K) \neq(1 \ldots 1)} E^D_{i_1 \ldots i_K} \right] \\
& + \Pr\left[
E^{S}_p \cup \cup_{j=1}^N \left((E^{R_j}_{1 \ldots 1})^c \cup 
\cup_{(i_1 \ldots i_{f(j)})\neq (1 \ldots 1)} E^{R_j}_{i_1 \ldots i_{f(j)}} \cup 
E^{R_j}_p \right)\right] \\
 \le &
\Pr\bigg((E^D_{1 \ldots 1})^c\bigg) + \sum_{(i_1 \ldots i_K)\ne (1 \ldots 1)} \Pr\bigg(E^D_{i_1 \ldots i_K}\bigg) + 
\Pr\bigg(E^{S}_p\bigg)  \\
& + \sum_{j=1}^N \left[\Pr\bigg((E^{R_j}_{1\ldots 1})^c \bigg) + \sum_{(i_1 \ldots i_{f(j)}) \neq (1 \ldots 1)} \Pr\bigg(E^{R_j}_{i_1 \ldots i_{f(j)}}\bigg) + \Pr\bigg(E^{R_j}_p \bigg) \right] .
\end{align*}
By using \Pref{prop:A1} for large $n$ we have 
$\Pr((E^D_{1\ldots 1})^c) < \ep$, and $\Pr((E^{R_j}_{1 \ldots 1})^c ) < \ep $ for $j=1,\ldots,N$. 
By the law of large numbers, for $n$ large enough, we also have 
$\Pr(E^S_{p}) < \ep  $, and $ \Pr(E^{R_j}_{p}) < \ep$ for $j=1,\ldots,N$.

We bound the sum involving $\Pr\big(E^D_{i_1 \ldots i_K}\big)$ as follows,
\begin{align}
\sum_{(i_1 \ldots i_K)\ne (1 \ldots 1)} \Pr\big(E^D_{i_1 \ldots i_K}\big) =& \sum_{i_1\ne 1, i_2,\ldots, i_K} \Pr\big(E^D_{i_1 \ldots i_K}\big) + \sum_{i_2\ne 1, i_3,\ldots, i_K} \Pr\big(E^D_{1 i_2 \ldots i_K}\big)+\ldots+\sum_{i_K\ne 1} \Pr\big(E^D_{1 \ldots 1 i_K}\big) \nonumber\\
\le & 2^{n \sum_{k=1}^K R_k} 2^{-n (\I(U_1,\ldots, U_K; V_{D})-3\epsilon)}+2^{n \sum_{k=2}^K R_k} 2^{-n (\I(U_2,\ldots, U_K; V_{D}|U_1)-4\epsilon)}\nonumber\\
& + \ldots + 2^{n R_K} 2^{-n (\I(U_K; V_{D}|U_1,\ldots,U_{K-1})-4\epsilon)},
\label{eq:pred}
\end{align}
where we used the fact that $(\bu_{1,i_1}^n, \bu_{2,i_1,i_2}^n, \ldots, \bu_{K,i_1,i_2,\ldots,i_K}^n)$ is independent of $\bY_D^n$ when $i_1\ne 1$, the fact that $(\bu_{2,1,i_2}^n, \ldots, \bu_{K,1,i_2,\ldots,i_K}^n)$ is conditionally independent of $\bY_D^n$ conditioned on $\bu_{1,1}^n$ when $i_2\neq 1$, ..., the fact that $\bu_{K,1,\ldots,1,i_K}^n$ is conditionally independent of $\bY_D^n$ conditioned on $(\bu_{1,1}^n, \bu_{2,1,1}^n, \ldots, \bu_{K-1,1,\ldots,1}^n)$ when $i_K \neq 1$, and Propositions~\ref{prop:A2} and \ref{prop:A3}, valid for large $n$.

Similarly, we bound the sum involving $\Pr\big(E^{R_j}_{i_1 \ldots i_{f(j)}}\big)$ for $j=1,\ldots, N$, as follows,
\begin{align}
\sum_{(i_1 \ldots i_{f(j)})\ne (1 \ldots 1)} 
\Pr\big(E^{R_j}_{i_1 \ldots i_{f(j)}}\big) =& 
\sum_{i_1\ne 1, i_2,\ldots, i_{f(j)}} \Pr\big(E^{R_j}_{i_1 \ldots i_{f(j)}}\big) 
+ \sum_{i_2\ne 1, i_3,\ldots, i_{f(j)}} 
\Pr\big(E^{R_j}_{1 i_2 \ldots i_{f(j)}}\big)\nonumber\\
&+\ldots+\sum_{i_{f(j)}\ne 1} 
\Pr\big(E^{R_j}_{1 \ldots 1 i_{f(j)}}\big)\nonumber\\
 \le & 2^{n \sum_{k=1}^{f(j)} R_k} 2^{-n (\I(U_1,\ldots, U_{f(j)}; V_{j})-3\epsilon)}+2^{n \sum_{k=2}^{f(j)} R_k} 2^{-n (\I(U_2,\ldots, U_{f(j)}; V_{j}|U_1)-4\epsilon)}\nonumber\\
 & + \ldots + 2^{n R_{f(j)}} 2^{-n (\I(U_{f(j)}; V_{j}|U_1,\ldots,U_{f(j)-1})-4\epsilon)}.
 \label{eq:sumperror}
\end{align}

The exponent of each of the terms in \eq{eq:sumperror} gives a rate constraint that needs to be satisfied in order to have vanishingly small error probability as $n\to 0$. Due to the degradedness of the broadcast channel from the source to the relays, many of these rate constraints are redundant. The non-redundant rate constraints arising from \eq{eq:sumperror} for $j=1,\ldots, N$ together with the rate constraints arising from \eq{eq:pred} are given in (\ref{eq:thmSPDF}). 

Since $\ep$ is arbitrarily small, when the conditions (\ref{eq:thmSPDF}) are satisfied, there exists a sequence of codebooks with average probability of error going to zero
as $n$ tends to infinity. As a result, the rate $R=\sum_{k=1}^K R_k$ is achievable.
\end{proof}


\section{Proof of \Lref{lem:b_h}}
\label{apd:b_h}
We find the Taylor series of $\h(Y)$ for $\bar \delta$ around zero. 
We will use Leibniz integral rule for improper integrals~\cite{Leibniz}.
\begin{prop}[Leibniz integral rule]
\label{prop:leibniz}
Let $f(x,y)$ and its partial derivative $\frac{\partial}{\partial y} f(x,y)$ 
be continuous everywhere. If there exist functions $g_1(x)$ and $g_2(x)$ such that $|f(x,y)| \le g_1(x)$ ,  $|\frac{\partial}{\partial y} f(x,y)| \le g_2(x)$ for all $x$, $\int_{-\infty}^\infty g_1(x) dx < \infty$ and $\int_{-\infty}^\infty g_2(x) dx < \infty$, then
$$
\frac{d}{dy} \int_{-\infty}^{\infty} f(x,y) dx = 
\int_{-\infty}^{\infty} \frac{\partial f}{\partial y}(x,y) dx.
$$
\end{prop}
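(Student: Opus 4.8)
The plan is to prove the interchange of differentiation and integration by reducing it to the dominated convergence theorem (DCT) applied to the difference quotient. Write $F(y) = \int_{-\infty}^{\infty} f(x,y)\,dx$, which is well-defined and finite for every $y$ since $|f(x,y)| \le g_1(x)$ with $g_1$ integrable. To compute $F'(y)$ at a fixed $y$, I would form, for $h \neq 0$, the difference quotient
$$
\frac{F(y+h)-F(y)}{h} = \int_{-\infty}^{\infty} \frac{f(x,y+h)-f(x,y)}{h}\,dx,
$$
where pulling the subtraction inside the integral is justified by linearity together with the integrability of both $f(\cdot,y+h)$ and $f(\cdot,y)$.

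The key step is to control the integrand uniformly in $h$. Since $f$ is differentiable in its second argument (its partial derivative is assumed continuous everywhere), the mean value theorem gives, for each fixed $x$ and each $h \neq 0$, a point $\xi = \xi(x,h)$ strictly between $y$ and $y+h$ such that
$$
\frac{f(x,y+h)-f(x,y)}{h} = \frac{\partial f}{\partial y}(x,\xi).
$$
Because the domination hypothesis $|\frac{\partial f}{\partial y}(x,\cdot)| \le g_2(x)$ holds at every value of the second argument, it holds in particular at $\xi(x,h)$, so the difference-quotient integrand is bounded in absolute value by the integrable function $g_2(x)$, independently of $h$. This is exactly the uniform domination that DCT requires.

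To finish, I would take an arbitrary sequence $h_n \to 0$ with $h_n \neq 0$. For each $x$ the intermediate points satisfy $\xi(x,h_n) \to y$ as $n \to \infty$, and the continuity of $\partial f/\partial y$ then yields the pointwise limit $\frac{\partial f}{\partial y}(x,\xi(x,h_n)) \to \frac{\partial f}{\partial y}(x,y)$. Applying DCT along the sequence $h_n$ with dominating function $g_2$ gives
$$
\lim_{n\to\infty} \frac{F(y+h_n)-F(y)}{h_n} = \int_{-\infty}^{\infty} \frac{\partial f}{\partial y}(x,y)\,dx.
$$
Since the right-hand side is the same common value for every sequence $h_n \to 0$, the full limit $h \to 0$ exists and equals it, which establishes that $F$ is differentiable at $y$ with the claimed formula.

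The main obstacle to watch is the passage from DCT, which is naturally a statement about sequences, to the continuous limit $h \to 0$; the standard remedy, used above, is to run DCT along an arbitrary sequence $h_n \to 0$ and invoke the sequential characterization of limits. A secondary point requiring care is that the domination bound must be applied at the mean-value points $\xi(x,h)$ rather than only at $y$, which is precisely why the hypothesis demands $|\partial f/\partial y| \le g_2$ for \emph{all} values of the differentiation variable and not merely in a neighborhood of the point of interest.
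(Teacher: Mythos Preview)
Your argument is correct and is in fact the standard textbook proof of differentiation under the integral sign via the mean value theorem together with the dominated convergence theorem. The paper, however, does not supply its own proof of this proposition: it is quoted as a known result with a citation to Protter and Morrey, \emph{Intermediate Calculus}, and then applied to verify that the differential entropy $\h(Y)$ can be differentiated with respect to the mixture weights $\delta_i$. So there is nothing to compare; your proposal simply fills in what the paper takes for granted from the reference.
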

In order to apply \Pref{prop:leibniz} we need to verify that the conditions of the proposition apply to the function $f(y,\bar\delta)= f_Y(y,\bar\delta) \ln f_Y(y,\bar\delta)$, where 
$$
f_{Y}(y,\bar\delta) = \left(1 - \sum_{i=1}^{q-1} \delta_i \right) \, e^{-y^2/(2\sigma_0^2)}/(\sqrt{2\pi}\sigma_0)
+ \sum_{i=1}^{q-1} \delta_i \, e^{-y^2/(2\sigma_i^2)}/(\sqrt{2\pi}\sigma_i).
$$
It is easy to verify that $f(y,\bar\delta)$ and $\frac{\partial}{\partial \delta_i} f(y,\bar\delta)$ are continuous for $i=1,\ldots, q{-}1$. We next find a function $g(y)$ such that $|f(y,\bar\delta)|<g(y)$ and $\int_{-\infty}^\infty g(y) dy < \infty$. To that end we first find lower and upper bounds for $f_Y(y,\bar\delta)$ that are independent of $\bar\delta$. We assume that $\sum_{i=1}^{q-1} \delta_i < \delta_0 < 1$, and define $\sigma_{\min} = \min_{i\in\{0,\ldots,q-1\}} \sigma_i$ and $\sigma_{\max} = \max_{i\in\{0,\ldots,q-1\}} \sigma_i$. 
\[
\frac{1-\delta_0}{\sqrt{2\pi}\sigma_0} e^{-y^2/(2 \sigma_0^2)} \le f_{Y}(y,\bar\delta) \le \frac{q}{\sqrt{2\pi}\sigma_{\min}} e^{-y^2/(2 \sigma_{\max}^2)}
\]
We use these bounds to obtain
\[
g(y)=\frac{q}{\sqrt{2\pi}\sigma_{\min}} e^{-y^2/(2 \sigma_{\max}^2)} \left[\max\left\{\left|\ln\left(\frac{1-\delta_0}{\sqrt{2\pi} \sigma_0}\right)\right|, \left|\ln\left(\frac{q}{\sqrt{2 \pi}\sigma_{\min}}\right) \right| \right\}+\frac{y^2}{2\sigma_{\min}^2} \right],
\]
which is integrable.

For each $i=1,\ldots,q{-}1$, we find a function $g_i(y)$ such that $|\frac{\partial}{\partial \delta_i}f(y,\bar\delta)|<g_i(y)$ and $\int_{-\infty}^\infty g_i(y) dy < \infty$. Noting that
\[
\frac{\partial}{\partial \delta_i}f(y,\bar\delta) = \left(-\frac{1}{\sqrt{2\pi}\sigma_0} e^{-y^2/(2 \sigma_0^2)}+ \frac{1}{\sqrt{2\pi}\sigma_i} e^{-y^2/(2 \sigma_i^2)}\right)\left(\ln f_Y(y,\bar\delta) +1 \right)
\]
we obtain
\[
g_i(y)=\left(\frac{1}{\sqrt{2\pi}\sigma_0} e^{-y^2/(2 \sigma_0^2)}+ \frac{1}{\sqrt{2\pi}\sigma_i} e^{-y^2/(2 \sigma_i^2)}\right) 
\left[\max\left\{\left|\ln\left(\frac{1-\delta_0}{\sqrt{2\pi} \sigma_0}\right)\right|, \left|\ln\left(\frac{q}{\sqrt{2 \pi}\sigma_{\min}}\right) \right| \right\}+\frac{y^2}{2\sigma_{\min}^2} + 1\right],
\]
which is also integrable.

Having verified the conditions of \Pref{prop:leibniz}, we can exchange the order of derivation and integration to compute the Taylor series of $\h(Y)$ as follows,
\begin{align*}
\h(Y) & = \h(Y)|_{\bar \delta = 0} +
\sum_{i=1}^{q-1} \delta_i \, \Big( \frac{\partial \h(Y )}{\partial \delta_i}
\Big)|_{ {\bar{\delta}} = 0} + O(\norm{ {\bar{\delta}} }^2_2) 
\\
 & = 
 \Big( \int_{-\infty}^{\infty} -f_{Y}(y) \log_2[ f_{Y}(y) ] \, dy 
 \Big) \Big|_{ {\bar{\delta}} =0} +
 \sum_{i=1}^{q-1} \delta_i \Big( \frac{\partial}{\partial \delta_i} 
 \int_{-\infty}^\infty -f_{Y}(y) \log_2[f_{Y}(y)] \, dy \Big) \Big|_{ {\bar{\delta}} = 0} +O(\norm{ {\bar \delta} }^2_2)
 \\
& = 
\int_{-\infty}^{\infty}
\Big(  - f_{Y}(y) \log_2 [ f_{Y}(y) ] \Big) \Big|_{\bar \delta = 0} \, dy +
\sum_{i=1}^{q-1} \delta_i
\int_{-\infty}^{\infty}
\Big(
- \frac{\partial}{\partial \delta_i} f_{Y}(y) \log_2 [  f_{Y}(y) ]
\Big) \Big|_{\bar \delta = 0} \, dy + O(\norm{ \bar \delta }^2_2) 
\\
& = 
\frac{1}{2}\log_2[2\pi e \sigma_0^2] + 
\sum_{i=1}^{q-1} \delta_i( \sigma_i^2/\sigma_0^2 - 1 )/( 2\ln 2 )
+ O(\norm{ \bar \delta }^2_2).
\end{align*}
\hspace*{\fill}\QED


\section{Proof of \Tref{thm:asymptTSPDF}}
\label{app:asympTSPDFproof}

In \Tref{thm:TSPDF} we set $\delta_1 = \beta_1 \sqrt{g h}$,
$\delta_2 = \beta_2 \sqrt{g h}$, $\kappa_1\dfn h\kappa_{11}= h\kappa_{21}$,
$\kappa_2 \dfn h\kappa_{12} =h\kappa_{22}$ and we set $\gamma_1 \dfn h \sigma_1^2$, 
$\gamma_2 \dfn h \sigma_2^2$, for positive constants $\beta_1, \beta_2, 
\gamma_2, \gamma_2, \kappa_1, \kappa_2$ in $\R^+$. We know the rate $R_1+R_2$ is achievable if
\begin{align*}
R_1 & < \I(T; Y_1), \\
R_2 & < \I(X_S; Y_D | T), \\
R_1+R_2 & < \I(T, X_S; Y_D) = \I(T; Y_D) + \I(X_S; Y_D|T).
\end{align*}
We also can achieve the maximum sum rate $R_1+R_2$ if
\begin{align*}
R_1 & < \I(T; Y_1), \\
R_1 & < \I(T; Y_D), \\
R_2 & < \I(X_S; Y_D|T).
\end{align*}
We compute the mutual information expressions 
$\I(T; Y_1)$ and $\I(X_S; Y_D | T)$ for small $h$. 

The probability distribution of $Y_1$ conditioned on 
$T =1$ and $T =2$ is equal
to $\cN(0, g \sigma_1^2 + 1)$ and $\cN(0, g \sigma_2^2 + 1)$ respectively.
For the case that $T = 0$ the distribution of $Y_1$ is $\cN(0,1)$.
By applying \Cref{cor:I} we have
\begin{equation}
\I(T;Y_1) = 
\frac{h \sqrt{g/h}}{2\ln2} \left(
\beta_1 \left( \gamma_1 g/h  - \ln [1+ \gamma_1 g/h ] \right) +
\beta_2 \left( \gamma_2 g/h  - \ln [1+ \gamma_2 g/h ] \right)
\right) + O(h^2),
\end{equation}
and 
\begin{align*}
\I(X_S ; Y_D | T ) & = 
\frac{\delta_1}{2} \log_2\left( 
1 +  \frac{4 g \sigma_1^2 \kappa_1 }{2 \kappa_1 + 1} \right) +
\frac{\delta_2}{2} \log_2\left( 
1 +  \frac{4 g \sigma_2^2 \kappa_2 }{2 \kappa_2 + 1} \right) \\
& = 
\frac{h \left(\sqrt{g/h } \ln \left[1+\frac{4 (g/h)  \gamma _1 \kappa _1}{1+2 \kappa _1}\right] \beta _1+\sqrt{g/h} \ln \left[1+\frac{4 (g/h)  \gamma _2 \kappa _2}{1+2 \kappa _2}\right] \beta _2\right)}{2 \ln2} 
+ O(h^2).
\end{align*}
Finally to compute $\I(T;Y_D)$, we note that the 
distribution of $Y_D$ conditioned on $T$ being $0,1,$ and $2$ 
is equal to $\cN(0,1)$, $\cN(0,4 g \sigma^2_1 \kappa_1 + 2 \kappa_1 + 1)$ and $\cN(0, 4 g \sigma^2_2 \kappa_2 + 2 \kappa_2 + 1)$ respectively. 
Then, by \Cref{cor:I} we have
\begin{align}
\nonumber
\I(T; Y_D)  = &
\frac{h \sqrt{g/h}}{2 \ln 2} \Big(
\beta_1 (4 (g/h) \gamma_1 \kappa_1 + 2 \kappa_1  - \ln[1 + 4 (g/h) \gamma_1 \kappa_1 + 2 \kappa_1] ) + \\
& \quad
\beta_2 (4 (g/h) \gamma_2 \kappa_2 + 2 \kappa_2  - \ln[1 + 4 (g/h) \gamma_2 \kappa_2 + 2 \kappa_2] ) \Big) + O(h^2).
\end{align}
To complete the proof we need to impose the power constraints at the source
and relays:
\begin{align*}
\delta_1 \sigma_1^2 + \delta_2 \sigma_2^2 &=  
\sqrt{g/h} (\gamma_1 \beta_1 + \gamma_2 \beta_2) < 1, \nonumber\\
\delta_1 \kappa_{i1}(g \sigma_1^2+1) + \delta_2 \kappa_{i2}(g \sigma_2^2+1) &=
\sqrt{g/h}(\beta_1(1+ \gamma_1 g/h )\kappa_1 + 
\beta_2(1+ \gamma_2 g/h)\kappa_2 ) < 1. \nonumber
\end{align*}
\hspace*{\fill}\QED


\section{Proof of \Tref{thm:N_relay}}
\label{apd:N_relay}


The derivation of the achievable rates in the low SNR regime for BSPDF coding in the symmetric parallel relay network with $N$ relays is similar to the case of the two relay network. 
The difference is in the distribution of the signal $Y_{D}$ 
which is the combination of signals from $N$ relays instead of two relays.
By applying \Tref{thm:BSPDF} we are able to achieve any rate $R_1 + R_2$ such that
\begin{align}
\label{eq:ap_in_1}
R_1 & < \I(B; Y_1), \\
\label{eq:ap_in_2}
R_2 & < \I(X_S; Y_{D} |B) \\
\label{eq:ap_in_3}
R_1 + R_2 & < \I(B, X_S; Y_{D}) = \I(B; Y_D) + \I(X_S; Y_D|B)
\end{align}
Therefore, the maximum sum rate $R_1+R_2$ that satisfies the conditions \eq{eq:ap_in_1}, \eq{eq:ap_in_2}, and \eq{eq:ap_in_3} also satisfies the conditions
\begin{align*}
R_1 & < \I(B; Y_1), \\
R_1 & < \I(B; Y_D), \\
R_2 & < \I(X_S; Y_{D} | B).
\end{align*}

In \Tref{thm:BSPDF}, 
we set $\delta = \beta \sqrt{g h}$ such that
$0 < \delta \le 1$ and $\kappa = \kappa_i < 1/(g + \delta)$
for $i = 1,2,\ldots, N$. 
Let $\sigma^2 = \sqrt{g/h}/\beta + 1$. Conditioned
on $B = 1$, the random variable $Y_1$ has
distribution $\cN(0,\sigma^2)$ and
conditioned on $B = 0$ we have $Y_1 \sim \cN(0,1)$. Thus
by \Cref{cor:I}
$$
\I(B; Y_1) = h \left( g/h + \sqrt{g/h}\beta
\ln \left( \frac{\beta}{\sqrt{g/h} + \beta} \right) \right) / (2\ln 2)
+ O(h^2).
$$

At the destination, optimizing over the value of
$\kappa$ we get $\kappa = 1/(g+\delta)$. If we define
$\sigma_D^2 = N^2 \sqrt{g/h}/(\beta (g/h + \beta \sqrt{g/h}))
+ N / (g/h + \beta \sqrt{g/h}) + 1$, then conditioned on
$B = 1$ we have $Y_{D} \sim \cN(0, \sigma_D^2)$ and
conditioned on $B = 0 $, we have $Y_{D} \sim \cN(0,1)$, thus using \Cref{cor:I}
\begin{align*}
\I(B, Y_{D}) & = 
h \beta \sqrt{g/h} \Bigg(
N^2 \sqrt{g/h}/(\beta (g/h + \beta \sqrt{g/h}))
+ N / (g/h + \beta \sqrt{g/h}) 
\\
& - \ln \Big[
1 + N^2 \sqrt{g/h}/(\beta (g/h + \beta \sqrt{g/h}))
+ N / (g/h + \beta \sqrt{g/h}) 
\Big]
\Bigg)/(2\ln 2) + O(h^2)
\end{align*}
and 
\begin{align*}
\I(X_S; Y_{D} | B) 
& = h \beta \sqrt{g/h} \log_2 \left[
1 + \frac{N^2 \sqrt{g/h} /(\beta (g/h + \beta \sqrt{g/h})}
{1 + N/(g/h + \beta\sqrt{g/h})} 
\right].
\end{align*}
After some simplification the theorem follows.
\hspace*{\fill}\QED


\section{Proof of \Tref{thm:asym_BSPDF}}
\label{apd:asym_BSPDF}


Using \Tref{thm:BSPDF}, we can achieve the rate $R_1+R_2$ if
\begin{align*}
R_1 & < \I(B; Y_1) \\
R_2 & < \I(X_S; Y_D | B) \\
R_1 + R_2 & < \I(B, X_S; Y_D) = \I(B; Y_D) + \I(X_S; Y_D | B).
\end{align*}
Equivalently, we can achieve the maximum sum rate $R_1+R_2$ if
\begin{align*}
R_1 & < \I(B; Y_1) \\
R_1 & < \I(B; Y_D) \\
R_2 & < \I(X_S; Y_D | B).
\end{align*}
We set $\delta = \beta\sqrt{g h}$ for
a positive constant $\beta$ such that $0 < \beta \sqrt{gh} \le 1$, 
$k_1 \dfn h\kappa_1 < h/( g+\delta )$ and 
$k_2 \dfn h\kappa_2 < h/( h + \delta )$. 
Let $\sigma^2_1 = \sqrt{g/h}/\beta + 1$ and
$\sigma^2_2 = 1/(\beta \sqrt{g/h}) + 1$.
Conditioned on $B=1$ the random variable 
$Y_1$ has distribution $\cN(0,\sigma_1^2)$. Conditioned on
$B = 0$ we have $Y_1 \sim \cN(0,1 )$. Then by \Cref{cor:I} we
have
\begin{align*}
\I(B; Y_1) & = 
h \beta \sqrt{g/h}\big(\sqrt{g/h}/\beta - \ln(\sqrt{g/h}/\beta + 1) \big) / (2\ln 2)
+ O(h^2).
\end{align*}

For the destination defining
$\sigma_3^2 = 1\sqrt{g/h} (\sqrt{k_1} + \sqrt{k_2})^2/\beta + k_1 + (g/h) k_2 + 1$, conditioned on $B = 1$
we have $Y_D \sim \cN(0, \sigma_3^2)$ and conditioned on
$B = 0$ we have $Y_D \sim \cN(0, 1)$. Therefore
\begin{align*}
\I(B & ;Y_D)  = \\
& \frac{h \sqrt{g / h} \beta}{2\ln 2} \Bigg( 
\sqrt{\frac{g}{h}} (\sqrt{k_1} + \sqrt{k_2})^2/\beta + k_1 +\frac{g}{h} k_2 - \ln \Big[1 +
\sqrt{\frac{g}{h}} (\sqrt{k_1} + \sqrt{k_2})^2/\beta + k_1 + \frac{g}{h} k_2 
\Big]
\Bigg)
+ O(h^2).
\end{align*}
Finally by \Cref{cor:I}, 
\begin{align*}
\I(X_S; Y_D | B)  
= & \frac{h \beta \sqrt{g/h}}{2} \log_2
\left(
1 + \frac{\sqrt{g/h} (\sqrt{k_1} + \sqrt{k_2})^2/\beta}
{1 + k_1 + (g/h) k_2} 
\right),
\end{align*}
which completes the proof. 
\hspace*{\fill}\QED


\section{Proof of \Tref{thm:BSPDF+DF}}
\label{apd:BSPDF+DF}
 

We use two layers of coding requiring relay~$2$ and the destination to decode both layers of the code and while requiring relay~$1$ to decode only the first
layer. By \Cref{cor:BSPDF2} we can show that the rate
$R_1 + R_2$ is achievable if 
\begin{align*}
R_1 & < \I(B; Y_1), \\
R_2 & < \I(X_S; Y_2 | B), \\
R_2 & < \I(X_S; Y_D | B), \\
R_1 + R_2 & < \I(B, X_S; Y_D) = \I(B; Y_D) + \I(X_S; Y_D | B).
\end{align*}
Equivalently, the following conditions are sufficient to achieve the maximum sum rate $R_1 + R_2$,
\begin{align*}
R_1 & < \I(B; Y_1), \\
R_1 & < \I(B; Y_D), \\
R_2 & < \I(X_S; Y_2 | B), \\ 
R_2 & < \I(X_S; Y_D | B).
\end{align*}
In the following we compute, $\I(B; Y_1)$, 
$\I(B, Y_D)$,
$\I(X_S;Y_2|B)$, and $\I(X_S; Y_D | B)$. 
Set $\delta = \beta \sqrt{g h}$, and define $\kappa = \kappa_1 h$, assuming that $ \delta (\kappa / h) (1 + g /\delta) < 1$ in order to satisfy the power constraint at relay 1. 

Conditioned on $B = 1$ we have
$Y_1 \sim \cN(0, \sigma_1^2)$, $Y_2 \sim \cN(0, \sigma_2^2)$, and
$Y_D \sim \cN(0, \sigma_D^2)$, where $\sigma_1^2 = \sqrt{g/h}/\beta + 1$,
$\sigma_2^2 = 1/(\beta \sqrt{g/h}) + 1$, and
$\sigma_D^2 = \sqrt{g/h}(1+\sqrt{\kappa})^2/\beta + \kappa + 1$. Conditioned on $B = 0$
we have $Y_1 \sim \cN(0, 1)$, $Y_2 \sim \cN(0, 1)$, and $Y_D \sim \cN(0, 1)$. 

By applying \Cref{cor:I} we get
\begin{align*}
\I(B; Y_1) & =
h \beta \sqrt{g/h} \Big(
\sqrt{g/h}/\beta  - \ln \Big[
1 + \sqrt{g/h}/\beta  \Big] \Big) / (2\ln 2) + O(h^2), \\
\I(B; Y_D) & = 
h \beta \sqrt{g/h} \Big(
\sqrt{g/h}(1+\sqrt{\kappa})^2/\beta + \kappa - \ln \Big[
1 + \sqrt{g/h}(1+\sqrt{\kappa})^2/\beta + \kappa \Big] \Big) / (2\ln 2)
+ O(h^2).
\end{align*}
To complete the proof we need to compute $\I(X_S; Y_2 | B)$ and
$\I(X_S; Y_D | B)$. We have by \Cref{cor:I}
\begin{align*}
\I(X_S; Y_2 | B) 
& = \frac{h \beta \sqrt{g/h}}{2} \log_2 \left(
1 + \frac{1}{\beta \sqrt{g/h}}  \right), \\
\I(X_S; Y_D | B) 
& = \frac{h \beta \sqrt{g/h}}{2} \log_2 \left(
1 + \frac{\sqrt{g/h} (1 + \sqrt{\kappa})^2 / \beta}{\kappa + 1} 
\right).
\end{align*}
\hspace*{\fill}\QED


\section{Derivation of the lower bound on the minimum energy-per-bit}
\label{apd:ebno_upper_bound}

In order to find a lower bound on the minimum energy-per-bit we normalize
the channel gains as $\tilde g = P g /N_0$ and $\tilde h = \gamma P h / N_0$ 
\begin{align}
\cE_b & \ge \inf_{\gamma \ge 0} \lim_{P \rightarrow 0}
\frac{(2\gamma+1)P}{\Cupper} \\
& \ge \inf_{\gamma > 0}
\frac{(1+ 2\gamma) N_0 \ln 2 }
{\max_{0 \le \rho \le 1} \min
\{ g, (g+ \gamma h(1-\rho^2))/2, \gamma h(1+\rho) \} } \\
& \ge
\frac{N_0 \ln 2 }
{\sup_{\gamma>0} \max_{0 \le \rho \le 1} \min
\left\{ 
\frac{g}{2\gamma+1}, 
\frac{g  + \gamma h(1-\rho^2)}{2(2\gamma+1)}, 
\frac{\gamma h(1+\rho)}{2\gamma+1} \right\} }.
\label{eq:ebno_upper_1}
\end{align}
For a fixed $\gamma$ we can solve the maximization over $\rho$
\begin{align*}
\max_{0 \le \rho \le 1} & \min\{ g, (g+ \gamma h(1-\rho^2))/2, \gamma h(1+\rho) \} =
\begin{cases}
2 \gamma h & ; \  0 < \gamma h /g < 1/4 \\
\sqrt{\gamma g h} & ; \  1/4 \le \gamma h /g \le 1 \\
g & ;  \ 1 < \gamma h /g 
\end{cases},
\end{align*}
and therefore, 
\begin{align*}
\sup_{\gamma>0} \max_{0 \le \rho \le 1} \min
\left\{ 
\frac{g}{2\gamma+1}, 
\frac{g  + \gamma h(1-\rho^2)}{2(2\gamma+1)}, 
\frac{\gamma h(1+\rho)}{2\gamma+1} \right\} =& \max \left\{
\max_{0 < \gamma < \frac{g}{4h}} \frac{2\gamma h}{2\gamma+1},
\max_{\frac{g}{4h} \le \gamma \le \frac{g}{h} } \frac{\sqrt{\gamma g h}}{2\gamma+1},
\max_{\frac{g}{h} < \gamma} \frac{g}{2\gamma+1} \right\} \\
 = & \max \left\{
\frac{gh}{g +2h},\max_{\frac{g}{4h} \le \gamma \le \frac{g}{h}} \frac{\sqrt{\gamma g h}}{2\gamma+1},\frac{gh}{2g + h} \right\} \\
= &
\begin{cases}
\max \left\{
\frac{gh}{g +2h}, \sqrt{\frac{gh}{8}},\frac{gh}{2g + h} \right\}  & ; \ 1/2 \le g/h  \le 2 \\
\max \left\{
\frac{gh}{g +2h}, \frac{gh}{2g + h}\right\} & ; \ \text{otherwise}
\end{cases} \\
= &
\begin{cases}
\frac{gh}{2 g +h} & ; \ 0 < g/h < 1/2 \\
\sqrt{\frac{gh}{8}} & ; \ 1/2 \le g/h \le 2 \\
\frac{gh}{g + 2h} & ; \ 2 < g/h
\end{cases}.
\end{align*}


\end{document}